\theoremstyle{plain}
\newtheorem{thm}{Theorem}[section]
\newtheorem{prop}[thm]{Proposition}
\theoremstyle{definition}
\newtheorem{definition}[thm]{Definition}
\theoremstyle{remark}
\newtheorem{remark}[thm]{Remark}
\newcommand{\be}{\begin{equation*}}
\newcommand{\ee}{\end{equation*}}
\newcommand{\ben}{\begin{equation}}
\newcommand{\een}{\end{equation}}
\newcommand{\beqa}{\begin{eqnarray*}}
\newcommand{\eeqa}{\end{eqnarray*}}
\newcommand{\beqan}{\begin{eqnarray}}
\newcommand{\eeqan}{\end{eqnarray}}
\newcommand{\nn}{\nonumber}
\def\i{\mathbf{i}}
\def \const{\mathrm{const}}
\def\Z{\mathbb{Z}}
\def\C{\mathbb{C}}
\def\R{\mathbb{R}}
\def\cK{\mathcal{K}}
\def\Aut{\mathrm{Aut}}
\def\Stab{\mathrm{Stab}}
\def\Mat{\mathrm{Mat}}
\def\Ad{\mathrm{Ad}}
\def\bAd{\overline{\Ad}}
\def\Iso{\mathrm{Iso}}
\newcommand{\pd}{\partial}
\def\dd{\mathrm{d}}
\newcommand{\id}{\mathrm{id}}
\newcommand{\Tr}{\mathrm{Tr}}
\newcommand{\tr}{\mathrm{tr}}
\newcommand{\diag}{\mathrm{diag}}
\newcommand{\sign}{\mathrm{sign}}
\def\cC{\mathcal{C}}
\def\cG{\mathcal{G}}
\def\cH{\mathcal{H}}
\def\cL{\mathcal{L}}
\def\cN{\mathcal{N}}
\def\cP{\mathcal{P}}
\def\cQ{\mathcal{Q}}
\def\cS{\mathcal{S}}
\def\cV{\mathcal{V}}
\def\cX{\mathcal{X}}
\def\cR{\mathcal{R}}
\def\cT{\mathcal{T}}
\def\bLambda{\boldsymbol{\Lambda}}
\def\mD{\mathbb{D}}
\def\mA{\mathbb{A}}
\def\rD{\mathrm{D}}
\def\rS{\mathrm{S}}
\def\rf{\mathrm{f}}
\def\rJ{\mathrm{J}}
\def\SO{\mathrm{SO}}
\def\U{\mathrm{U}}
\def\PSU{\mathrm{PSU}}
\def\SU{\mathrm{SU}}
\def\su{\mathrm{su}}
\def\SL{\mathrm{SL}}
\def\br{\mathbf{r}}
\def\Diff{\mathrm{Diff}}
\def\bpsi{\bar{\psi}}
\def\su{\mathrm{su}}
\def\sl{\mathrm{sl}}
\def\can{\mathrm{can}}
\newcommand{\eqdef}{\stackrel{{\rm def.}}{=}}
\DeclareMathOperator{\arccosh}{arccosh}
\DeclareMathOperator{\arcsinh}{arcsinh}
\DeclareMathOperator{\arctanh}{arctanh}
\def\grad{\mathrm{grad}}
\def\Hess{\mathrm{Hess}}
\def\End{\mathrm{End}}
\def\Re{\mathrm{Re}}
\def\Im{\mathrm{Im}}
\def\ttau{{\tilde \tau}}
\def\tsigma{{\tilde \sigma}}
\def\trho{{\tilde \rho}}
\def\ttheta{{\tilde \theta}}
\newcommand{\twopartdef}[4]
{
	\left\{
	\begin{array}{ll}
		#1 & \mbox{if } #2 \\
		#3 & \mbox{if } #4
	\end{array}
	\right.
}
\title{Hidden symmetries of two-field cosmological models}
\author{Lilia Anguelova${}^1$, Elena Mirela Babalic${}^2$ and Calin Iuliu Lazaroiu${}^{3,2}$}
\affiliation{${}^1$ Institute for Nuclear Research and Nuclear
  Energy, BAS, Sofia, Bulgaria\\ ${}^2$ Horia Hulubei National
  Institute for Physics and Nuclear Engineering,\\
  Bucharest-Magurele 077125, Romania\\ ${}^3$ Center for Geometry and
  Physics, Institute for Basic Science, Pohang 37673, \\Republic of
  Korea\\}
\emailAdd{${}^1$anguelova@inrne.bas.bg, ${}^2$mbabalic@theory.nipne.ro, ${}^3$calin@ibs.re.kr}
\abstract{We determine the most general time-independent Noether
  symmetries of two-field cosmological models with
  rotationally-invariant scalar manifold metrics. In particular, we
  show that such models can have hidden symmetries, which arise if and
  only if the scalar manifold metric has Gaussian curvature $-3/8$,
  i.e. when the model is of elementary $\alpha$-attractor type with a
  fixed value of the parameter $\alpha$. In this case, we find
  explicitly all scalar potentials compatible with hidden Noether
  symmetries, thus classifying all models of this type.  We also
  discuss some implications of the corresponding conserved quantity.}
\begin{document}

\maketitle

\pagebreak

\section{Introduction}

Cosmological models with two real scalar fields produce natural
generalizations of single scalar field cosmology, which may be favored
in fundamental theories of gravity \cite{OOSV,GK,OPSV,AP}. Pairs of real
scalar fields arise naturally in supergravity and string
theory. Indeed, the generic supersymmetric compactification of such
theories produces complex moduli, each of which can be viewed as a
pair of real scalars.

Scalar multifield models were considered in cosmology from various
points of view (see, for example, \cite{PT1,PT2,Welling,m1,m2,m3,m4}),
including numerically \cite{Dias1,Dias2,Mulryne} and to some extent
phenomenologically \cite{c1}. However, current insight into their
dynamics (which, as outlined in \cite{flows,Roest1,Roest2}, is quite
interesting from the perspective of the geometric theory of dynamical
systems \cite{Palis}) is far more limited than for one-field
models. In particular, numerous questions regarding the behavior of
such models have not been studied systematically.

In this paper, we address one fundamental problem regarding such
models, namely the question of symmetries and conserved quantities.  A
general two-field cosmological model based on a simply connected and
spatially-flat FLRW space-time with scale factor $a(t)$ is
parameterized by a {\em scalar manifold} $(\Sigma,\cG)$ (a complete
and connected two-dimensional Riemannian manifold which plays the role
of target space for the non-linear sigma model describing the two
scalar fields $\varphi^1,\varphi^2$) and a {\em scalar potential} $V$
(a smooth real-valued function defined on $\Sigma$, which describes
the self-interacting potential of the scalars). Here $\Sigma$ is the
internal manifold of the two scalar fields, which we allow to have
non-trivial topology (such as non-contractible cycles) while $\cG$ is
the scalar manifold metric, whose components locally determine the
kinetic terms of the two scalar fields. The scalar kinetic terms are
specified {\em globally} by the pair $(\Sigma,\cG)$.  Such models
admit a description (known traditionally as the ``minisuperspace
formulation'') as a Lagrangian system for the three degrees of freedom
$a$, $\varphi^1$, $\varphi^2$ subject to a constraint
(namely the zero energy shell condition), which implements the
Friedmann equation. The minisuperspace formulation allows one to study
symmetries of such models using the Noether approach. We followed this
method in reference \cite{ABL} --- where we restricted to scalar
manifolds which are rotationally invariant and we considered only a
special class of time-independent symmetries satisfying a certain
separation of variables Ansatz. In the present paper, we still assume
that the scalar manifold metric $\cG$ is rotationally
invariant\footnote{It is in fact possible to remove this assumption as
  well, as we will show in a separate paper by developing a more
  general theory.}, but study the problem without making any other
assumptions. This allows us to give a complete description of all
time-independent Noether symmetries for models with
rotationally-invariant scalar manifold metric and to classify
Noether-symmetric models of this type. In particular, we find many new
Noether symmetries that were never considered before, the vast
majority of which do not satisfy the separation of variables Ansatz
used in reference \cite{ABL}. Most of these symmetries, which we
describe explicitly, are not invariant under the $\U(1)$ group of
rotations which preserves the scalar manifold metric.

Using the minisuperspace description, we first show that any
time-independent Noether symmetry of a general two-field cosmological
model is a direct sum of a {\em visible} symmetry with a {\em Hessian}
symmetry. The first of these are obvious symmetries, corresponding to
those continuous isometries of the scalar manifold which preserve the
scalar potential $V$; such symmetries do not involve the cosmological
scale factor $a$. On the other hand, Hessian symmetries are {\em
  hidden}, in the sense that they are not immediately obvious; they
involve $a$ as well as an auxiliary real-valued smooth function
$\Lambda$ (called the {\em generating function} of the corresponding
Hessian symmetry), which is defined on the scalar manifold $\Sigma$.
For Hessian symmetries, the Noether condition dictates that $\Lambda$
must satisfy a system of two linear partial differential equations,
namely the so-called {\em Hesse equation} (a second order PDE for
$\Lambda$ which depends only on the scalar manifold metric) and the
{\em $\Lambda$-$V$ equation} (a relation which involves the scalar
manifold metric and the first order partial derivatives of $\Lambda$
and $V$). The Hesse equation relates the Hessian tensor of $\Lambda$
to the symmetric tensor $\Lambda \cG$ and a solution of this equation
will be called a {\em Hesse function}. The two-field cosmological
model is called {\em weakly Hessian} if its scalar manifold admits
non-trivial Hesse functions.  The model is called {\em Hessian} if it
admits a Hessian symmetry, i.e. if it is weakly Hessian and its scalar
potential $V$ satisfies the $\Lambda$-$V$ equation for some
non-trivial Hesse function $\Lambda$ of the scalar manifold. A Hessian
model can also admit visible symmetries, provided that its scalar
potential is sufficiently special. We show that the conservation law
associated to a Hessian symmetry allows one to compute the e-fold
number along cosmological trajectories without performing an integral,
thereby providing a useful tool for extracting phenomenologically
relevant information in Hessian two-field models.

When $\cG$ is rotationally invariant\footnote{Notice that the Hesse
  symmetry generating function $\Lambda$ is {\em not} assumed to be
  rotationally symmetric, even though we assume that the scalar
  manifold metric
  is.}, we show that the model is weakly Hessian iff $\Sigma$ is 
diffeomorphic to a disk, a punctured disk or an annulus and $\cG$ is a
metric of constant Gaussian curvature $K=-\frac{3}{8}$. In particular,
weakly-Hessian two-field cosmological models coincide with those
elementary two-field $\alpha$-attractor models (in the sense of
reference \cite{elem}) for which $\alpha=\frac{8}{9}$ (here and in 
the following we use the convention $K = - \frac{1}{3 \alpha}$), being special
examples of the much wider class of two-field generalized
$\alpha$-attractors introduced and studied in
\cite{genalpha,elem,modular, cosmunif} \footnote{Generalized two-field
  $\alpha$-attractors extend ordinary two-field $\alpha$-attractor
  models \cite{KL1,KL2,KL3,KL4,KL5,KL6,KL7,AKLV} (whose target space is the
  hyperbolic disk) to models whose target space is allowed to be an
  arbitrary complete hyperbolic surface. As explained in
  \cite{genalpha, elem, modular, cosmunif}, such models can be
  approached through uniformization theory, which relates them to the
  framework of modular inflation developed and studied in
  \cite{S1,S2,S3,S4,S5}.}. We show that such weakly-Hessian models are
Hessian iff their scalar potential has a specific form which we
determine explicitly in all cases, thereby classifying all Hessian
two-field models with rotationally invariant scalar manifold
metric. When the scalar manifold is a disk endowed with its complete
metric of Gaussian curvature $-3/8$, we find that Hessian models fall
naturally into three classes (which we call ``timelike'',
``spacelike'' and ``lightlike''), distinguished by the orbit type of
their Hesse generating function under the natural action of the group
of orientation-preserving isometries of the hyperbolic disk. In this
case, the space of Hesse functions is three-dimensional and can be
identified with the Minkowski space $\R^{1,2}$. Thus Hesse functions
are naturally parameterized by a Minkowski 3-vector $B$, whose
timelike, spacelike or lightlike character determines the type of the
model. For each of the three types, we give the explicit general form
of the scalar potential which is compatible with a Hessian symmetry,
as well as the explicit form of the corresponding Hesse
generator. When the scalar manifold is a punctured disk or an annulus
endowed with a metric of Gaussian curvature $-3/8$, we find that the
space of Hesse functions is one-dimensional and determine it, also
giving the explicit form of the scalar potential for which the
two-field model is Hessian. In each case mentioned above, we also
describe the special situation when the Hessian model admits visible
symmetries. These results allow for a complete description of
time-independent Noether symmetries in two-field cosmological models
with rotationally invariant scalar manifold metric. As we will show in
a separate paper, a deeper mathematical approach allows one to prove that
these are in fact the {\em only} Hessian models, hence the results
derived herein provide a complete classification of Hessian
cosmological two-field models.

The paper is organized as follows. Section \ref{sec:minisup} briefly
recalls the minisuperspace description of two-field cosmological
models. Section \ref{sec:Noether} gives a geometric characterization
of time-independent Noether symmetries in such models, showing that
any such symmetry can be written as a direct sum between a visible and
a Hessian symmetry. In particular, we prove that the generating
function $\Lambda$ of a Hessian symmetry satisfies the Hesse and
$\Lambda$-$V$ equations, showing how the latter can be used to
determine $V$ in terms of $\Lambda$ through the method of
characteristics. We also discuss the integral of motion of a Hessian
symmetry, showing that it can be used to extract an algebraic (as
opposed to integral) formula in terms of the scalars $\varphi^i$ for
the number of e-folds along cosmological trajectories. Section
\ref{sec:rot} gives the classification of weakly-Hessian models with
rotationally-invariant scalar manifold metric, summarizing the results
proved in Appendix \ref{app:SolHesse}. This shows that the only scalar
manifolds which are allowed in such models are the disk, punctured
disk and annuli of constant Gaussian curvature $K=-\frac{3}{8}$. It
follows that weakly-Hessian models with rotationally-invariant scalar
manifold metric are elementary two-field $\alpha$-attractors in the
sense of reference \cite{elem}, for the specific value
$\alpha=\frac{8}{9}$. We next proceed to determine the general form of
the scalar potential of the corresponding Hessian models in each of
the three cases. The hyperbolic disk is discussed in Section
\ref{sec:Disk}. In this case, the results of Appendix
\ref{app:SolHesse} imply that the space of Hesse functions is three
dimensional and admits a basis formed by the classical Weierstrass
coordinates. This allows us to identify the space of Hesse functions
with the Minkowski 3-space $\R^{1,2}$, on which the group of
orientation-preserving isometries of the hyperbolic disk acts through
proper and orthochronous Lorentz transformations (as recalled in
Appendix \ref{app:iso}). This provides a natural classification of
Hesse functions into functions of timelike, spacelike and lightlike
type. For each of these three cases, we use the method of
characteristics for the $\Lambda$-$V$ equation and
representation-theoretic arguments to extract the explicit general
expression for those scalar potentials $V$ for which the model admits
the Hessian symmetry generated by a given Hesse function $\Lambda$. We
also describe the special cases when a Hessian model admits
visible symmetries. In Section \ref{sec:pDisk}, we determine
explicitly the scalar potential of Hessian two-field cosmological
models whose scalar manifold is a punctured disk of Gaussian curvature
$-\frac{3}{8}$ and explain when such models also admit visible
symmetries. In Section \ref{sec:Annuli}, we perform the same analysis
for models whose target is a hyperbolic annulus of Gaussian curvature
$-\frac{3}{8}$. Finally, Section \ref{sec:conclusions} gives our
conclusions and some directions for further research. In the appendices,
we review or derive various technical results used in the main text.

\paragraph{Notations and conventions.}
Throughout this paper, all manifols considered are assumed to be
connected, paracompact, Hausdorff and smooth. The scalar manifold of a
two-field cosmological model is assumed to be complete, as required by
conservation of energy; for simplicity, we also assume it to be
oriented. By a hyperbolic metric we always mean a metric which is
complete and of unit negative Gaussian curvature.  The notation $\rD$
indicates the Euclidean disk of unit radius, while $\dot{\rD}\eqdef
\rD\setminus\{0\}$ indicates the punctured Euclidean disk.  Finally,
$A(R)$ (where $R>1$) indicates the Euclidean annulus of inner radius
$1/R$ and outer radius $R$.  The notations $\mD$ and $\mD^\ast$
indicate the surfaces $\rD$ and $\dot{\rD}$ when endowed with their
unique hyperbolic metric, while $\mA(R)$ indicates the annulus $A(R)$,
when the latter is endowed with its unique hyperbolic metric of
modulus $\mu=2\log R$. The isometry group of an oriented Riemannian
2-manifold $(\Sigma,\cG)$ is denoted by $\Iso(\Sigma,\cG)$, while its
subgroup of orientation-preserving isometries is denoted by
$\Iso_+(\Sigma,\cG)$.  We refer the reader to the Appendices and to
references \cite{genalpha,elem} for a summary of some relevant
information about elementary hyperbolic surfaces.

\section{The minisuperspace Lagrangian of two-field cosmological models}
\label{sec:minisup}

\noindent In this section, we recall briefly the definition of
two-field cosmological models and their constrained Lagrangian
description in the minisuperspace approach.

\subsection{Two-field cosmological models}

\noindent We consider cosmological models with two real scalar fields
whose underlying space-time is a simply-connected and spatially flat
FLRW universe. Our models are parameterized by a connected, oriented
and complete Riemannian $2$-manifold $(\Sigma,\cG)$ (called the {\em
  scalar manifold}), together with a scalar potential given by a
smooth real-valued function $V$ defined on $\Sigma$. This data
combines into a {\em scalar triple} $(\Sigma,\cG,V)$.  The condition
that $\Sigma$ be connected and oriented is purely technical and it
could be relaxed. The condition that the metric $\cG$ is complete
insures conservation of energy in such models.

The models are obtained from the following action\footnote{We work in
  units where the reduced Planck mass $M$ equals one. The rescaling
  ${\tilde \cG}=M^{2}\cG$ and ${\tilde V}=M^{2} V$ gives
  $\tilde\cS=M^{2}\cS$, where $\tilde\cS[g,\varphi] = \int_{\R^4}
  \dd^4 x \,\sqrt{|\det g|} \left[ \frac{M^2}{2} R(g) -
    \frac{1}{2}\Tr_g\varphi^\ast({\tilde \cG}) - {\tilde V} (\varphi)
    \right]$ is the action more commonly found in the literature. },
which describes the coupling of Einstein gravity defined on $\R^4$
with the non-linear sigma model of target space $(\Sigma, \cG)$ and
scalar potential $V$:
\ben
\label{Action}
\cS[g,\varphi]:=\cS_{\Sigma,\cG,V}[g,\varphi]\! =\!\! \int_{\R^4}\!\! \dd^4 x 
\,\sqrt{|\det g|} \left[ \frac{R(g)}{2} - \frac{1}{2} g^{\mu \nu}\cG_{ij}
\pd_\mu\varphi^i\pd_\nu\varphi^j - V (\varphi) \right] .
\een
Here $R(g)$ is the scalar curvature of the space-time metric $g$ (which has
`mostly plus' signature), while $\varphi$ is a smooth map
from $\R^4$ to $\Sigma$, whose components in local coordinates on $\Sigma$
(denoted $\varphi^i$ with $i=1,2$) are the two real scalar fields of the
model. The cosmological model defined by the scalar triple $(\Sigma,\cG,V)$ is obtained from
\eqref{Action} by fixing $g$ to be the background metric of a
simply-connected FLRW universe with flat spatial section:
\ben
\label{FLRW}
\dd s_g^2 := - \dd t^2 + a^2(t) \dd \vec{x}^2 ~~(\mathrm{where}~~x^0=t~~,
~~\vec{x}=(x^1,x^2,x^3)~~\mathrm{and}~~a(t)>0~~\forall t)
\een
and taking $\varphi$ to depend only on the cosmological time $t$:
\ben
\label{PhiHom}
\varphi=\varphi(t)~~.
\een
Let $H\eqdef \frac{\dot{a}}{a}$ denote the Hubble parameter, where
$\dot{a}\eqdef \frac{\dd a}{\dd t}$. 

\subsection{The minisuperspace Lagrangian}

\noindent Substituting \eqref{FLRW} and \eqref{PhiHom} in
\eqref{Action} and ignoring the integration over the spatial variables
$\vec{x}$ gives the {\em minisuperspace action}:
\ben 
\label{S}
S[a,\varphi]= \int_{-\infty}^{+\infty} \!\dd t \, a^3
\!\left[ \frac{3 (\dot{a}^2 + a \ddot{a})}{a^2} + \frac{1}{2}
\cG_{ij}(\varphi)\dot{\varphi}^i\dot{\varphi}^j - V (\varphi) \right]~,
\een
where $\dot{\varphi}\eqdef \frac{\dd \varphi}{\dd t}$. Notice that $S$
depends explicitly on the scale factor $a$ of the FLRW metric. The
functional \eqref{S} can be viewed as the classical action of a
mechanical system with three degrees of freedom. Indeed,
integration by parts in the $\ddot{a}$ term of \eqref{S} allows us to
write:
\ben
\label{cS}
S[a,\varphi]=\int_{-\infty}^{\infty} \! \dd t\, L(a(t),\dot a(t),\varphi(t)
,\dot{\varphi}(t))~~,
\een
where $L:=L_{(\Sigma,\cG,V)}$ is the {\em minisuperspace Lagrangian}:
\ben 
\label{L}
L(a,\dot a, \varphi,\dot{\varphi})\eqdef - 3 a \dot{a}^2 + a^3 \left[\frac{1}{2}
\cG_{ij}(\varphi)\dot{\varphi}^i\dot{\varphi}^j - V (\varphi)\right]~.
\een
In this formulation, the triplet $(a,\varphi^i)$ provides local
coordinates on the configuration space:
\ben
\label{cM}
\cN\eqdef \R_{>0}\times \Sigma~~
\een
of this mechanical system. The Euler-Lagrange equations of \eqref{L} take the form:
\beqan
\label{EL}
3H^2+2\dot{H}+\frac{1}{2}\cG_{ij}(\varphi) \dot{\varphi}^i\dot{\varphi}^j  
- V(\varphi) &=& 0\\
(\nabla_t +3H)\dot{\varphi}^j + \cG^{ij}(\varphi) (\partial_i V)(\varphi) &=& 0~~.\nn
\eeqan
Here $\nabla_t \dot{\varphi}^i \eqdef \ddot{\varphi}^i + \Gamma^i_{jk}
\dot{\varphi}^j \dot{\varphi}^k$, where $ \Gamma^i_{jk}$ are the
Christoffel symbols of the scalar manifold metric $\cG$, $\pd_i
V\eqdef \frac{\pd V}{\pd \varphi^i}$ and we used the relation:
\be
\frac{2a\ddot{a}+\dot{a}^2}{a^2}=3H^2+2\dot{H}~~.
\ee
To recover the cosmological equations of motion, one must subject the
Lagrangian \eqref{L} to the zero energy constraint $E_L = \frac{\pd
  L}{\pd \dot{a}} \dot{a}+ \frac{\pd L}{\pd \dot{\varphi}^i}
\dot{\varphi}^i - L =0$, which takes the following explicit form:
\ben
\label{Friedmann}
\frac{1}{2}\cG_{ij}(\varphi)\dot{\varphi}^i\dot{\varphi}^j + V(\varphi)=3 H^2~~,
\een
thus coinciding with the time-time component of the equations of
motion derived from \eqref{Action}. This is often called the Friedmann
constraint. On solutions of the Euler-Lagrange equations 
\eqref{EL}, this constraint gives a relation between the integration
constants, thus reducing the number of independent constants by one
(see, for instance, reference \cite{CR}). However, one can also use
the Friedmann constraint from the outset to solve algebraically for
the Hubble parameter $H$:
\ben
\label{Heq}
H=\frac{1}{\sqrt{6}} \left[\cG_{ij}(\varphi)\dot{\varphi}^i\dot{\varphi}^j 
+ 2V(\varphi)\right]^{1/2}~.
\een
To recapitulate, the system formed by the Euler-Lagrange
equations \eqref{EL} together with the Friedmann constraint \eqref{Friedmann} is
equivalent with the cosmological equations of the two-field model obtained 
from \eqref{Action}.

\section{Noether symmetries for general two-field models}
\label{sec:Noether}

\noindent In this section, we consider time-independent infinitesimal
Noether symmetries of the minisuperspace Lagrangian \eqref{L}. By
analyzing the Noether symmetry condition, we show that the
corresponding Noether generator decomposes as a direct sum between a
visible and a hidden symmetry, the latter type of symmetry being
determined by a {\em Hesse function} $\Lambda$.  For Hessian
symmetries, we explain how the $\Lambda$-$V$ equation allows one to
extract the general form of the scalar potential using the method of
characteristics. We also discuss the natural action of the isometry
group of the scalar manifold on the linear space of all Hesse
functions and on the linear space of all scalar potentials which
satisfy the $\Lambda$-$V$ equation. Finally, we consider the
conservation law associated to a Hessian symmetry, showing that it
allows one to determine the number of e-folds along cosmological
trajectories algebraically in $\varphi^i$, instead of through an integral.

\subsection{Noether generators and integrals of motion}

\noindent Recall that the configuration space $\cN=\R_{>0}\times
\Sigma$ of the minisuperspace model is a product of the target space
$\Sigma$ (which is locally parameterized by $\varphi^1$ and
$\varphi^2$) with the range $\R_{>0}$ of the scale factor
$a$. Geometrically, the Lagrangian \eqref{L} is a smooth real-valued
function defined on the tangent bundle $T\cN$, which identifies
naturally with the first jet bundle of curves of $\cN$ (see
\cite{Olver}). This tangent bundle decomposes as:
\be
T\cN=T_{(a)}\cN \oplus T_{(\varphi)} \cN~~,
\ee
where $T_{(a)}\cN$ is the pullback of the tangent bundle of the
half-line $\R_{>0}$ through the first projection and $T_{(\varphi)}
\cN$ is the pullback of the tangent bundle of $\Sigma$ through the
second projection. Hence any vector field $X$ defined on $\cN$
decomposes uniquely as:
\be
X=X_{(a)}+X_{(\varphi)}~~,
\ee
where $X_{(a)}$ is a vector field defined on the half-line and
$X_{(\varphi)}$ is a vector field defined on $\Sigma$. In local
coordinates on the configuration space $\cN$, we have:
\ben 
\label{X_perp_X_par}
X_{(a)}=X^a(a,\varphi) \frac{\pd}{\partial a}~~,
~~X_{(\varphi)}=X^i(a,\varphi) \frac{\pd}{\pd \varphi^i}~~.
\een
The first jet prolongation $\cX$ of $X$ is a vector field defined on
$T\cN$ which is given in local coordinates by the formula (see
\cite{Olver}):
\be
\cX=X+\dot{X}^a(a,\varphi,\dot{a},\dot{\varphi})\frac{\pd}{\pd \dot{a}}
+\dot{X}^{i}(a,\varphi,\dot{a},\dot{\varphi})\frac{\pd}{\pd \dot{\varphi}^i}~~,
\ee
where:
\be
\dot{\lambda}(a,\varphi,\dot{a},\dot{\varphi})\eqdef
 (\pd_t \lambda)(a,\varphi)+(\pd_a \lambda)(a,\varphi) \dot{a}+
(\pd_i\lambda)(a,\varphi) \dot{\varphi}^i
\ee
for any smooth function $\lambda$, where we use the notations
$\pd_t\eqdef\frac{\pd}{\pd t}$, $\pd_a\eqdef\frac{\pd}{\pd a}$,
$\pd_i\eqdef\frac{\pd}{\pd \varphi^i}$\,. The vector field $X$ is a
variational symmetry of the Lagrangian \eqref{L} provided that it
satisfies the {\em Noether condition}:
\ben
\label{StrongVarSymCond}
\cL_{\cX} (L)=0~~,
\een
where $\cL_{\cX}$ denotes the Lie derivative with respect to the
prolongation $\cX$.  In local coordinates on $\Sigma$, this condition
takes the form:
\ben
\label{VarSymCondLocal}
P(a,\varphi,\dot{a},\dot{\varphi})=0~~,
\een
where the polynomial $P$ is given by:
\ben 
\label{Pol}
P(a, \varphi, \dot{a},\dot{\varphi})\eqdef X^a \frac{\pd L}{\pd
  a} + X^{i} \frac{\pd L}{\pd \varphi^i} + \dot{X}^a \frac{\pd L}{\pd
  \dot{a}} + \dot{X}^{i} \frac{\pd L}{\pd \dot{\varphi}^i}~~.
\een
Given a variational symmetry $X$ of $L$, the associated
integral of motion has the following local expression (see \cite{Olver}):
\ben
\label{rJloc}
\rJ_X=X^a \pd_{\dot{a}}L+X^i \pd_{\dot{\varphi}^i}L =
-6 a \dot{a} X^a + \cG_{ij}(\varphi) a^3 X^{i} \dot{\varphi}^j ~~.
\een

\subsection{The characteristic system}

\noindent In this subsection, we show that the Noether symmetry
condition \eqref{StrongVarSymCond} for the minisuperspace Lagrangian \eqref{L}
amounts to the requirement that $X$ has the form
$X(a,\varphi)=X_{(a)}(a,\varphi) +X_{(\varphi)}(a,\varphi) = X^a \pd_a + X^i \pd_i$, with:
\ben
\label{Xsol}
X_{(a)}(a,\varphi)=\frac{\Lambda(\varphi)}{\sqrt{a}}\pd_a~~,
~~X_{(\varphi)}(a,\varphi)= \left[ Y^i(\varphi) 
- \frac{4}{a^{3/2}} \cG^{ij}(\varphi) \pd_j \Lambda(\varphi) \right] \!\pd_i~~,
\een
where $\Lambda$ is a smooth real-valued function defined on $\Sigma$ and $Y
= Y^i \pd_i$ is a smooth vector field defined on $\Sigma$, which satisfy the {\em
  characteristic system}:
\beqan
\label{Sindex}
&& \left(\partial_i \partial_j -\Gamma_{ij}^k \partial_k\right) \Lambda=
\frac{3}{8}\cG_{ij}\Lambda \nn\\
&& \cG^{ij} \partial_i V \partial_j\Lambda =\frac{3}{4}V \Lambda \nn\\
&& \nabla_i Y_j+\nabla_j Y_i=0~~\\
&& Y^i \partial_i V=0~~~.\nn
\eeqan
In index-free notation, the general Noether generator reads:
\ben
\label{Xgen}
X=X_\Lambda+Y~~,
\een
where:
\ben
\label{XLambda}
X_\Lambda=\frac{\Lambda}{\sqrt{a}}\pd_a-\frac{4}{a^{3/2}}\grad_\cG \Lambda~~.
\een
Let us begin by computing the polynomial \eqref{Pol}:
\beqa
P&=&3 a^2 X^a\left[-H^2+\frac{1}{2}\cG_{ij}\dot{\varphi}^i\dot{\varphi}^j
-V\right]+a^3 X^{i} \left[\frac{1}{2} \partial_i \cG_{jk} \dot{\varphi}^j\dot{\varphi}^k -
  \partial_i V \right]\nn\\
&& -6 a^2 H \dot{X}^a+ a^3 \cG_{ij} \dot{\varphi}^j  \dot{X}^{i}~~.
\eeqa
Using $H=\frac{\dot a}{a}$ and expanding in powers of $\dot{a}$ and $\dot{\varphi}$ gives:
\be
P=P_o(a,\varphi)+ P_{00}(a,\varphi)\dot{a}^2 + P_{0i}(a,\varphi) 
\dot{a}\dot{\varphi}^i+P_{ij}(a,\varphi) \dot{\varphi}^i\dot{\varphi}^j~~,
\ee
where:
\beqan
\label{P_terms}
P_{00} &=& -3X^a -6 a \partial_aX^a\nn\\
P_{0i} &=& -6 a \partial_iX^a +a^3 \cG_{ij} \partial_{a} X^{j}\nn\\
P_{ij}=P_{ji}&=& \frac{3}{2} a^2 \cG_{ij} X^a + a^3 \frac{1}{2}
\left(\nabla_i X_j+\nabla_jX_i\right)~~\\
P_o &=& -3 a^2 V X^a -a^3  X^i \partial_i V ~~.\nn
\eeqan
Explicitly, the $a^3$-term of $P_{ij}$ reads:
\beqa
&&\frac{1}{2}\left(\nabla_i X_j+\nabla_jX_i\right)=\frac{1}{2}\left(\partial_i X_j
+\partial_jX_i-2\Gamma_{ij}^k X_k \right)\nn\\
&&=\frac{1}{2}\left[(\partial_k \cG_{ij}) X^k+\cG_{ki} \partial_{j}X^k+\cG_{kj} 
\partial_i X^k\right]~~.  
\eeqa
Using \eqref{P_terms}, the Noether symmetry condition \eqref{VarSymCondLocal}
 amounts to the system:
\beqan
\label{CharSys}
&& (\mathrm{coeff.~of}~\dot{a}^2~)~~~:~~X^a +2 a \partial_aX^a=0\nn\\
&& (\mathrm{coeff.~of}~\dot{a}\dot{\varphi}^i)~~:~ -6 \partial_iX^a +a^2 
\cG_{ij} \partial_{a} X^{j}=0\nn\\
&& (\mathrm{coeff.~of}~\dot{\varphi}^i\dot{\varphi}^j)~:~ 3 \cG_{ij} X^a 
+ a \left(\nabla_i X_j+\nabla_jX_i\right)=0~~\\
&& (\mathrm{potential~term})~:~ 3 V X^a +a  X^i \partial_i V =0~~.\nn
\eeqan
The first equation in \eqref{CharSys} implies that the first
relation in \eqref{Xsol} holds for some smooth function $\Lambda
(\varphi)$. Using this into the second equation of \eqref{CharSys} gives:
\beqan
\label{CharSysEq2}
\cG_{ij} \pd_a X^j = 6 \frac{\pd_i \Lambda}{a^{5/2}}~~.
\eeqan
Integrating \eqref{CharSysEq2} with respect to $a$ gives the second relation in \eqref{Xsol}:
\beqan
X^i (a,\varphi) = - \frac{4}{a^{3/2}} \cG^{ij} \pd_j \Lambda (\varphi) + Y^i (\varphi)~~,
\eeqan
where $Y$ is a vector field defined on $\Sigma$. Substituting \eqref{Xsol} 
into the third equation of
\eqref{CharSys} gives: 
\beqan
\label{CharSysEq3}
\frac{1}{a^{1/2}} \left[ 3 \cG_{ij} \Lambda - 4 (\nabla_i \pd_j \Lambda 
+ \nabla_j \pd_i \Lambda) \right] + a ( \nabla_i Y_j + \nabla_j Y_i ) = 0~~.
\eeqan
Since the terms multiplying powers of $a$ in this relation are themselves independent of $a$, 
taking the limits $a\rightarrow 0$ and $a\rightarrow \infty$ shows that
\eqref{CharSysEq3} is equivalent with the first and third equations of
the characteristic system \eqref{Sindex}.  Finally, substituting
\eqref{Xsol} into the fourth equation of \eqref{CharSys} gives:
\beqan
\label{CharSysEq4}
\frac{1}{a^{1/2}} \left[3V\Lambda - 4 \cG^{ij} \pd_j \Lambda \pd_i V \right] 
+ a Y^i \pd_i V = 0~~.
\eeqan
Again taking the limits $a\rightarrow 0$ and $a\rightarrow \infty$ shows that
\eqref{CharSysEq4} is equivalent with the second and fourth equations
of \eqref{Sindex}. This completes the proof that the 
characteristic system \eqref{Sindex} is equivalent with the Noether
symmetry condition \eqref{StrongVarSymCond}.

The integral of motion \eqref{rJloc} of the Noether symmetry \eqref{Xsol}
described by a solution $(\Lambda,Y)$ of the characteristic system
takes the form:
\be
\rJ_X=-6 a \dot{a} X^a + \cG_{ij}(\varphi) a^3 X^{i} \dot{\varphi}^j=
-6 \dot{a}\sqrt{a} \Lambda(\varphi)+ a^3  \cG_{ij}(\varphi)Y^i(\varphi)\dot{\varphi}^j-
4 a^{3/2} \dot{\Lambda}~~,
\ee
i.e.: 
\ben
\label{rJX}
\rJ_X=-4\frac{\dd}{\dd t} \left[a^{3/2}\Lambda(\varphi)\right]+ 
a^3 \cG_{ij}(\varphi)Y^i(\varphi)\dot{\varphi}^j ~~.
\een

\subsection{Natural subsystems of the characteristic system. Hessian and visible symmetries}

\noindent Notice that those equations of the system \eqref{Sindex}
which contain $\Lambda$ decouple from those equations which contain
$Y$. Hence the characteristic system naturally splits into two
subsystems of partial differential equations, namely:\\ \\ the {\em
  $\Lambda$-system}:
\beqan
\label{LambdaSys}
&& \nabla_i \pd_j \Lambda = \frac{3}{8}\cG_{ij} \Lambda\nn\\
&& \cG^{ij} \pd_i V \pd_j \Lambda = \frac{3}{4} V \Lambda~~
\eeqan
and the {\em $Y$-system}:
\beqan
\label{YSys}
&& \nabla_i Y_j + \nabla_j Y_i = 0\nn\\
&& Y^i \pd_i V=0~~.
\eeqan
A vector field of the form \eqref{Xsol} for which $Y=0$ and $\Lambda$
is a smooth solution of the $\Lambda$-system will be called an
infinitesimal {\em Hessian symmetry} of $(\Sigma,\cG,V)$. A scalar
triple which admits Hessian symmetries will be called a {\em Hessian
  triple}; in this case, the corresponding two-field cosmological
model will be called a {\em Hessian model}. On the other hand, a
vector field of the form \eqref{Xsol} for which $\Lambda=0$ and the
vector field $Y$ is a non-trivial smooth solution of the $Y$-system
will be called an infinitesimal {\em visible symmetry} of
$(\Sigma,\cG,V)$. A scalar triple which admits visible symmetries is
called a {\em visibly symmetric triple} and the corresponding
two-field cosmological model will be called a {\em visibly-symmetric
  model}.

The result proved in the previous subsection implies that any
time-independent infinitesimal Noether symmetry of the minisuperspace
system decomposes as a direct sum of a visible symmetry with a Hessian
symmetry. Notice that infinitesimal visible symmetries coincide with
those Killing vector fields of $(\Sigma,\cG)$ which generate
isometries preserving the scalar potential $V$; they are the `obvious'
symmetries of the two-field cosmological model defined by the scalar
triple $(\Sigma,\cG,V)$. Unlike visible symmetries, Hessian symmetries
are not geometrically obvious and can be viewed as `hidden symmetries'
of the model. Also notice that the first and third equations in
\eqref{Sindex} do not depend on $V$. For a given scalar
manifold $(\Sigma,\cG)$, these equations can be solved for $\Lambda$
and $Y$. Fixing solutions $(\Lambda,Y)$ of these two equations, the
second and fourth equations of the characteristic system can then be
used to determine the scalar potential in terms of $\Lambda$ and $Y$.

\subsection{Rescaling the scalar manifold metric. The Hesse and $\Lambda$-$V$ equations}
\label{subsec:LambdaV}

\noindent It is convenient to consider the rescaled scalar manifold metric:
\ben
\label{G}
G\eqdef \beta^2 \cG=\frac{3}{8} \cG\Longleftrightarrow \cG
=\frac{1}{\beta^2}G=\frac{8}{3}G~~,
\een
where:
\ben
\label{betadef}
\beta\eqdef \sqrt{\frac{3}{8}}~~.
\een
Since the Levi-Civita connection of $\cG$ is invariant under such a
rescaling, the $\Lambda$-system becomes:
\beqan
\label{UnitLambdaSys}
&& \nabla \dd \Lambda = G \Lambda\nn\\
&& \langle \dd V, \dd \Lambda \rangle_G = 2 V \Lambda~~,
\eeqan
while the $Y$-system preserves its form when expressed with respect to
the rescaled metric $G$.

The first equation in \eqref{UnitLambdaSys}:
\ben
\label{HessCond}
\nabla \dd \Lambda = G \Lambda~~,
\een
(whose left hand side equals the Hessian tensor of $\Lambda$ computed
with respect to the scalar manifold metric $\cG$) will be called the
{\em Hesse equation} of the rescaled scalar manifold $(\Sigma,G)$ and
its smooth solutions $\Lambda$ will be called {\em Hesse functions} of
$(\Sigma,G)$. Let $\cS(\Sigma,G)$ denote the linear space of such
functions. A Riemannian manifold $(\Sigma,G)$ is called {\em
  Hesse}\footnote{This should not be confused with the notion of {\em
    Hessian} manifold, which is a different concept!} if it admits
non-trivial Hesse functions, i.e. if $\cS(\Sigma,G)\neq 0$.  The
second equation in \eqref{UnitLambdaSys}:
\ben
\label{LambdaVEq}
\langle \dd V, \dd \Lambda \rangle_G = 2 V \Lambda
\een
will be called the {\em $\Lambda$-$V$ equation} of the rescaled scalar
triple $(\Sigma,G,V)$. Let $\cV(G,\Lambda)$ denote the linear space of
smooth functions $V$ satisfying this equation.

The Hesse equation \eqref{HessCond} is invariant under the natural action:
\be
\Lambda\rightarrow \Lambda\circ \psi^{-1}~~,~~\forall \psi\in \Iso(\Sigma,G)
\ee
of the isometry group $\Iso(\Sigma,G)=\Iso(\Sigma,\cG)$ of the scalar manifold. In
particular, such transformations preserve the space $\cS(\Sigma,G)$ of
Hesse functions of $(\Sigma,G)$.  Similarly, equation
\eqref{LambdaVEq} is invariant under the natural action of the
isometry group of the scalar manifold on the pair $(V,\Lambda)$:
\be
(\Lambda,V)\rightarrow (\Lambda\circ \psi^{-1},V\circ \psi^{-1})~~,
~\forall \psi\in \Iso(\Sigma,G)~~.
\ee
Hence an isometry of $(\Sigma,G)$ takes the general solution of
\eqref{LambdaVEq} into the general solution of the same equation, but
with $\Lambda$ replaced by $\Lambda\circ \psi^{-1}$:
\ben
\label{cVAction}
\cV(G,\Lambda\circ \psi^{-1})=\cV(G,\Lambda)\circ \psi^{-1}~~,
~~\forall \psi\in \Iso(\Sigma,G)~~,~~\forall \Lambda\in \cS(\Sigma,G)~~.
\een

\begin{remark}
\label{rem:Vscale}
Equation \eqref{LambdaVEq} is invariant under rescalings
$\Lambda\rightarrow C\Lambda$, where $C$ is any non-zero constant.
This implies that the general solution of this equation is
unchanged if one rescales $\Lambda$ by $C$:
\be
\cV(G,C\Lambda)=\cV(G,\Lambda)~~,~~\forall C\in \R\setminus\{0\}~~.
\ee
In particular, the general solution of the $\Lambda$-$V$ equation
\eqref{LambdaVEq} depends only on the ray of $\Lambda$ in the real
projective space $\mathbb{P} \cS(\Sigma,G)$.
\end{remark}

\subsection{The scalar potential of a Hessian symmetry}
\label{subsec:V}

\noindent Given a Hesse function $\Lambda\in \cS(\Sigma,G)$, consider
the $\Lambda$-$V$-equation \eqref{LambdaVEq} with $G= \beta^2 \cG$ 
as defined in \eqref{G}: 
\ben
\label{LambdaV}
\langle \dd V,\dd \Lambda\rangle_\cG= 2 \beta^2 V \Lambda~~,
\een
where $\langle \dd V,\dd \Lambda\rangle_\cG=\langle \grad_\cG
V,\grad_\cG\Lambda\rangle_\cG=\cG^{ij} \pd_i V \pd_j \Lambda$.

One can solve \eqref{LambdaV} through the method of characteristics
(see Appendix \ref{app:char}). For this, let $\gamma$ be a
$\cG$-gradient flow curve of $\Lambda$ with gradient flow parameter $q$:
\ben
\label{gradflow}
\frac{\dd \gamma(q)}{\dd q}=-(\grad_\cG \Lambda)(\gamma(q))~~.
\een
Then \eqref{gradflow} and \eqref{LambdaV} imply: 
\be
\frac{\dd}{\dd q} V(\gamma(q))=-\langle \grad_\cG V,\grad_\cG\Lambda
\rangle_\cG\Big{|}_{\gamma(q)}=-\langle \dd V,\dd \Lambda\rangle_\cG\Big{|}_{\gamma(q)}
=-2 \beta^2 \Lambda (\gamma(q)) V(\gamma(q))~~,
\ee
which gives:
\ben
\label{Vgammaq}
V(\gamma(q))=V(\gamma(q_0))\,e^{\stackrel{-2\beta^2 \!\!
\bigintssss_{q_0}^q{\Lambda \dd q}}{\!\!\!\!\!\!\!\gamma}}~~.
\een
It is convenient to use the restriction $\lambda=\Lambda\circ \gamma$ (i.e.
$\lambda(q)=\Lambda(\gamma(q))$) of $\Lambda$ to $\gamma$ as a
parameter on the gradient flow curve (notice that $\lambda$ decreases
with $q$). We have:
\be
\frac{\dd \lambda}{\dd q}=(\dd_{\gamma(q)}\Lambda)\left(\frac{\dd\gamma(q)}{\dd q}\right)=
-||(\grad_\cG \Lambda)(\gamma(q))||^2_\cG=-||(\dd\Lambda)(\gamma(q))||^2_\cG~~,
\ee
which gives:
\ben
\label{dqdlambda}
\dd q=-\frac{\dd \lambda}{||\dd \Lambda||^2_\cG}~~.
\een
Hence \eqref{Vgammaq} becomes:
\ben
\label{Vgamma}
V(\gamma(q))=V(\gamma(q_0))\,e^{\stackrel{2\beta^2 
\bigintssss_{\Lambda(\gamma(q_0))}^{\Lambda(\gamma(q))}{\frac{\lambda \dd \lambda}
{||\dd \Lambda||_\cG^2}}}{\!\!\!\!\!\!\!\!\!\!\!\!\!\!\!\!\!\!\!\!\!\!\!\!\!\!\!\!\!\!\gamma}}~~.
\een
This relation allows us to find the general solution of
\eqref{LambdaV}, provided that we can determine the gradient flow of
$\Lambda$. To deal with the initial conditions, one can choose a
section of the gradient flow, i.e. a (possibly disconnected)
submanifold $\cQ:=\cQ_\Lambda$ of $\Sigma$ with the property that each
gradient flow curve $\gamma$ of $\Lambda$ meets $\cQ$ in exactly one
point. For any $p\in \Sigma$, let $\gamma$ be the gradient flow curve
which passes through $p$ and meets $\cQ$ at the point $p_0$ (namely,
we have $\gamma(q)=p$ and $\gamma(q_0)=p_0$. Then relation
\eqref{Vgamma} gives:
\be
V(p)=V(p_0)\,e^{\stackrel{2\beta^2 
\bigintssss_{\Lambda(p_0)}^{\Lambda(p)}{\frac{\lambda \dd \lambda}
{||\dd \Lambda||_\cG^2}}}{\!\!\!\!\!\!\!\!\!\!\!\!\!\!\!\!\!\!\!\!\!\!\gamma}}~~.
\ee
The correspondence $p\rightarrow p_0$ defines a smooth function
$F=F_\cQ:\cQ\rightarrow \Sigma$ which allows us to write the previous
equation as:
\ben
\label{Vsol}
V(p)=\omega(p)\,e^{\stackrel{2\beta^2 \bigintssss_{\Lambda(F(p))}^{\Lambda(p)}
{\frac{\lambda \dd \lambda}{||\dd \Lambda||_\cG^2}}}{\!\!\!\!\!\!\!\!\!\!\!\!\!\!\!\!\!\!\!\!\!\!\!\!\!\!\!\!\gamma}}~~,
\een
where $\omega\eqdef V\circ F$ is a real-valued smooth function defined on
$\cQ$. 

\begin{remark} Given any non-zero constant $C$, the gradient flow of
$\Lambda$ coincides with that of $C\Lambda$, up to a constant
rescaling $q\rightarrow q/C$ of the gradient flow parameter.
\end{remark}

\subsection{The integral of motion of a Hessian symmetry}
\label{sec:iom}

\noindent 
For a Hessian symmetry ($Y=0$) with generator $\Lambda$, the integral
of motion \eqref{rJX} gives:
\ben
\label{cons}
a(t)^{3/2}\Lambda(\varphi(t))=C-\frac{\rJ_\Lambda}{4} (t-t_0)~~,
~~\mathrm{where}~~C=a_0^{3/2}\Lambda_0~~
\een
and we defined:
\be
a_0\eqdef a(t_0)~~,~~\Lambda_0\eqdef \Lambda(\varphi(t_0))~~.
\ee
The conserved quantity $\rJ_\Lambda$ is independent of $t$ along
every solution of the Euler-Lagrange equations (but depends on the
initial conditions of the solution). Differentiating \eqref{cons} with
respect to time at $t=t_0$ gives:
\ben
\label{rJ}
\!\!\!-\frac{\rJ_\Lambda}{4}\!=\!\frac{\dd}{\dd t}\left[a(t)^{3/2}\Lambda(\varphi(t))\right]
\!\Big|_{t=t_0}\!\!\!\!=
a_0^{3/2} \!\left[\frac{3}{2}H_0\Lambda_0\!+
\!(\dd_{\varphi(t_0)}\Lambda)(\dot{\varphi}_0) \right] ,
\een
where $H_0\eqdef H(t_0)$ is determined by the Friedmann constraint
(assuming that $H(t)>0$):
\be
H_0=\frac{1}{\sqrt{6}}\left[||\dot{\varphi}_0||_\cG^2+2V(\varphi_0)\right]^{1/2}~~.
\ee
Here:
\be
\varphi_0\eqdef \varphi(t_0)~~\mathrm{and}~~\dot{\varphi}_0\eqdef \dot{\varphi}(t_0)~~.
\ee
Relation \eqref{rJ} allows one to determine $\rJ_\Lambda$ from the
initial conditions of the cosmological trajectory, while \eqref{cons} determines the
$t$-dependence of the cosmological scale factor along the trajectory:
\ben
\label{aLambda}
a(t)=\left[\frac{C-\frac{\rJ_\Lambda}{4} (t-t_0)}{\Lambda(\varphi(t))}\right]^{2/3}
=a_0\left[\frac{\Lambda_0+\left(\frac{3}{2}H_0
\Lambda_0+(\dd_{\varphi_0}\Lambda)(\dot{\varphi}_0) \right)
 (t-t_0)}{\Lambda(\varphi(t))}\right]^{2/3} .
\een
In turn, this determines the e-fold function $N_{t_0}(t)=
\log\left[\frac{a(t)}{a(t_0)}\right]$ along any given scalar field
trajectory $\varphi(t)$, where $t_0$ is a reference cosmological time:
\ben
\label{cNLambda}
N_{t_0}(t)=\frac{2}{3} \log\left[\frac{\Lambda_0+
\left(\frac{3}{2}H_0\Lambda_0+(\dd_{\varphi_0}\Lambda)
(\dot{\varphi}_0) \right) (t-t_0)}{\Lambda(\varphi(t))}\right]~~.
\een
\noindent In particular, a scalar field trajectory which is
inflationary for the cosmological time interval $[t_0,t]$ will produce
a desired number $N$ of e-folds during that time interval provided
that its initial and final points $\varphi(t_0),\varphi(t)\in \Sigma$
satisfy the condition:
\ben
\label{Ncond}
e^{\frac{3N}{2}}\Lambda(\varphi(t))-\Lambda_0=\left(\frac{3}{2}H_0
\Lambda_0+(\dd_{\varphi_0}\Lambda)(\dot{\varphi}_0) \right) (t-t_0)~~.
\een

\begin{remark}
The e-fold function is also determined as follows by the Friedmann constraint:
\ben
\label{cNF}
N_{t_0}(t)=\int_{t_0}^t  H(\tau)\dd \tau =\frac{1}{\sqrt{6}}\int_{t_0}^t 
\Big[||\dot{\varphi}(\tau)||_\cG^2+2V(\varphi(\tau))\Big]^{1/2}\dd \tau~~.
\een
This non-local relation involves integration of a complicated quantity
depending on both $\varphi(\tau)$ and $\dot{\varphi}(\tau)$ for
$\tau\in [0,t]$, unlike the much simpler formula \eqref{cNLambda}
(which involves no integrations). 
\end{remark}

\noindent Differentiating \eqref{cNLambda} with respect to $t$ gives:
\ben
\label{HLambda}
H(t)=\frac{2}{3}
\frac{\frac{3}{2}H_0\Lambda_0\!+\!(\dd_{\varphi_0}\Lambda)
(\dot{\varphi}_0)}{\Lambda_0\!+\!\Big(\!\frac{3}{2}H_0
\Lambda_0\!+\!(\dd_{\varphi_0}\Lambda)(\dot{\varphi}_0)
  \!\Big) (t\!-\!t_0)} \!-\!\frac{(\dd_{\varphi(t)}
  \Lambda)(\dot{\varphi}(t))}{\Lambda(\varphi(t))}~~.
\een

\paragraph{The case $\dot{\varphi}_0=0$.}

When $\dot{\varphi}_0=0$, relation \eqref{Ncond} reduces to: 
\ben
\label{Ncond0}
\left[1+\frac{3 H_0}{2}(t-t_0)\right]\frac{\Lambda_0}{\Lambda_\varphi(t)} =e^{\frac{3}{2}N_{t_0}(t)}~~,
\een
where $H_0=\frac{1}{\sqrt{3}}V_0^{1/2}>0$ with $V_0\eqdef
V(\varphi_0)$ and we defined $\Lambda_\varphi(t)\eqdef
\Lambda(\varphi(t))$. On the other hand, relation \eqref{HLambda}
reduces to the following condition when $\dot{\varphi}_0=0$:
\ben 
\label{HubP}
H(t)=\frac{2H_0}{2+3H_0 (t-t_0)}
-\frac{\dot{\Lambda}_\varphi(t)}{\Lambda_\varphi(t)}~~.
\een
Hence positivity of $H$ requires: 
\ben
\label{Ineq0}
\frac{\dot{\Lambda}_\varphi}{\Lambda_\varphi}<\frac{2H_0}{2+3 H_0(t-t_0)}~~\mathrm{i.e.}~~\frac{\Lambda_\varphi(t)}{\Lambda_0}<\left[2+3 H_0(t-t_0)\right]^{2/3}~~.
\een
Let:
\ben
s(t)\eqdef 3H_0-[2+3 (t-t_0) H_0] \frac{\dot{\Lambda}_\varphi(t)}{\Lambda_\varphi}~~.
\een
Then condition \eqref{Ineq0} amounts to:
\ben
\label{Ineq1}
s(t)>H_0~~.
\een
Recall that the inflationary time periods of the cosmological
trajectory are defined by the condition that $a$ is a convex and
strictly increasing function of $t$, i.e. $\dot{a} (t)>0$
and $\ddot{a}(t)>0$. Using \eqref{Ncond0}, this
amounts to requiring that the function:
\ben
\label{ft}
a(t)=a(t_0) \left(\left[1+\frac{3}{2}H_0(t-t_0)\right]\frac{\Lambda_0}{\Lambda_\varphi(t)}\right)^{2/3}
\een
be convex and strictly increasing. Let us assume that $\Lambda(t)>0$ along the
trajectory (hence $\Lambda_0>0$). Since \eqref{Ineq1} implies
$\dot{a}(t)>0$, the requirement for inflation amounts to $\ddot{a}(t)>0$, i.e.:
\ben
\label{Ineq2}
\frac{\ddot{\Lambda}_\varphi(t)}{\Lambda_\varphi(t)}<\frac{s(t) [5 s(t)-18 H_0]}{3 [2+3 H_0 (t-t_0)]^2}~~.
\een
Conditions \eqref{Ineq1} and \eqref{Ineq2} can be used to determine
the upper limit $t_f$ of an inflationary time interval $[t_0,t_f]$ for
which $\dot{\varphi}_0=0$.

\section{Weakly-Hessian models with rotationally-invariant scalar manifolds}
\label{sec:rot}

\noindent In this section, we give the characteristic system for
models with rotationally-invariant scalar manifold metrics and the
classification of weakly-Hessian two-field models. The proof of this
classification is given in Appendix \ref{app:SolHesse}. As shown in
that appendix, the scalar manifold of any weakly-Hessian model is a
disk, a punctured disk or an annulus, endowed with its complete metric
of Gaussian curvature $K=-3/8$. We also list the general solutions of
the Hesse equation in each of the three cases, solutions which are
derived in the same appendix. In the next sections, we will consider
each case in turn, extracting the explicit form of the scalar
potential for which the corresponding weakly-Hessian models
admit a Hessian symmetry.

\subsection{The characteristic system}

\noindent Consider the case when $\Sigma$ is diffeomorphic with the
unit disk $\rD$ or with the punctured unit disk $\dot{\rD}$, endowed
with a metric $\cG$ which is rotationally-invariant:
\ben
\label{ssg}
\dd s_{\cG}^2=\dd r^2+f(r)\dd\theta^2~~.
\een
Here $r$ and $\theta$ are normal polar coordinates for $\cG$ and $f$
is a smooth and everywhere-positive real-valued function (which
extends to the origin in the case $\Sigma\simeq \rD$). For application
to cosmological models, we must require that the scalar metric $\cG$
is complete (otherwise the dynamics of the cosmological model
  would violate conservation of energy). The non-vanishing
Christoffel symbols are:
\be
\Gamma_{\theta\theta}^r=-\frac{f'}{2}~~,~~\Gamma^\theta_{r\theta}=
\Gamma^\theta_{\theta r}=\frac{f'}{2f}~~,
\ee
while the Gaussian curvature of $\cG$ takes the form: 
\ben
\label{K_G}
K_\cG=-\frac{(\sqrt{f})^{''}}{\sqrt{f}}=\frac{(f')^2- 2f f''}{4f^2}=
-\frac{1}{2}\left(\frac{f'}{f}\right)'-\frac{1}{4}\left(\frac{f'}{f}\right)^2~~,
\een
where we use the notation $'=\frac{\dd}{\dd r}$. For such models, the
Noether generator \eqref{Xsol} takes the form:
\ben
\label{Xsolrot}
X=\frac{\Lambda}{\sqrt{a}}\pd_a+\left[Y^r - \frac{4}{a^{3/2}} \pd_r 
\Lambda\right]\pd_r+\left[ Y^\theta - \frac{4}{a^{3/2}f(r)}\pd_\theta 
\Lambda\right]\pd_\theta
\een
Note that one can have both purely visible symmetries, i.e. 
with $\Lambda = 0$ and $Y\neq0$, and purely hidden symmetries, 
i.e. with $\Lambda \neq 0$ and $Y = 0$. 
Let us write down the $Y$- and $\Lambda$-systems for the case 
of a rotationally-invariant metric $\cG$.

\paragraph{The $Y$-system.}

\noindent The $Y$-system \eqref{YSys} takes the form:
\beqan
\label{YSysElem}
&& \pd_r Y^r=0\nn\\
&& \pd_\theta Y^r+f\pd_r Y^\theta =0\nn\\
&& \pd_\theta Y^\theta+\frac{f'}{2f} Y^r=0\\
&& Y^r\pd_r V+Y^\theta\pd_\theta V=0~~,\nn
\eeqan
where the first three equations form the condition that $Y$ be a
Killing vector.

Let $\cK(\Sigma,\cG)$ denote the $\R$-vector space consisting of all
Killing vector fields of $(\Sigma,\cG)$. Since the metric $\cG$ is
rotationally-invariant, the Killing equation has the obvious solution:
\be
Y=\pd_\theta
\ee
and hence $\cK(\Sigma,\cG)$ contains the one-dimensional subspace
$\R\pd_\theta$. For a generic rotationally invariant metric, we have
$\cK(\Sigma,\cG)=\R\pd_\theta$, though in some cases\footnote{For
  example, we have $\dim \cK(\Sigma,\cG)=3$ when $(\Sigma,\cG)$ is
  the Poincar\'e disk. In that case, the first three equations of \eqref{YSysElem} 
  give: $Y^r = \tilde{c}_1 \sin \theta + \tilde{c}_2
\cos \theta$, $Y^{\theta} = \beta \coth (\beta r) \,(\tilde{c}_1
\cos \theta - \tilde{c}_2 \sin \theta) + \tilde{c}_3$, 
where $\tilde{c}_{1,2,3}=\const$.}  the space of Killing vectors may be
higher-dimensional.  In the generic case, the last equation in
\eqref{YSysElem} amounts to the condition that $V$ is
$\SO(2)$-invariant, i.e.:
\be
V=V(r)~~(\mathrm{indep.~of~}\theta)~~.
\ee

\paragraph{The $\Lambda$-system.}

\noindent The $\Lambda$-system \eqref{LambdaSys} takes the form:
\beqan
\label{LambdaSysElem}
&& \pd_r^2 \Lambda=\frac{3}{8}\Lambda\nn\\
&& \pd_r\pd_\theta\Lambda-\frac{f'}{2f}\pd_\theta \Lambda=0\nn\\
&& \pd_\theta^2 \Lambda+\frac{f'}{2}\pd_r\Lambda=\frac{3}{8}f \Lambda~~\\
&&\pd_r V\pd_r\Lambda+\frac{1}{f}\pd_\theta V\pd_\theta\Lambda=\frac{3}{4} V\Lambda~~,\nn
\eeqan
where the first three equations are equivalent with the Hesse condition
\eqref{HessCond}.

\begin{remark}
\label{FootnoteSepVar}
In reference \cite{ABL}, we studied two-field rotationally invariant
models with the separation of variables Ansatze: $X^a \!= \!A_1 (a) R_1
(r) \Theta_1 (\theta)$ and $X^i \!= \!A^i (a) R^i (r) \Theta^i (\theta)$,
assuming that each of the functions $A_1$, $R_1$, $\Theta_1$, $A^i$,
$R^i$ and $\Theta^i$ is non-constant. Comparing with \eqref{Xsol}, one
finds that these assumptions imply $Y = 0$ as well as:
\ben
\label{SepFunctions}
A_1=\frac{1}{\sqrt{a}}~~,~~ A^i=-\frac{1}{a^{3/2}}~~,~~R_1 (r) \Theta_1 (\theta) 
= \Lambda~~,~~R^i (r)\Theta^i (\theta)=4 \cG^{ij}\pd_j\Lambda~~,
\een
in agreement\footnote{Note that the overall constant in $A^i R^i
  \Theta^i$ is distributed differently between the factors $A^i$ and
  $R^i \Theta^i$ in \eqref{SepFunctions} when compared to reference \cite{ABL}.}
with \cite{ABL}. Substituting $\Lambda = R_1 (r) \Theta_1 (\theta)$ 
in the first equation of \eqref{LambdaSysElem} gives:
\ben
R_1 (r) = C_1 \cosh(\beta r) + C_2 \sinh (\beta r) \quad , \quad \beta^2 = \frac{3}{8}~~,
\een
which agrees with \cite[eq. (3.19)]{ABL}. Similarly, the second
equation of \eqref{LambdaSysElem} gives $\pd_r R_1 = \frac{f'}{2 f}
R_1$, where we used the assumption that $\Theta_1$ is not
constant. Substituting $\pd_r R_1 = \frac{f'}{2 f} R_1$ into the third
equation of \eqref{LambdaSysElem} gives:
\ben
8 f \,\Theta_1'' (\theta) + (2 f'^2 - 3 f^2) \,\Theta_1 (\theta) = 0 \,\,\, ,
\een
which coincides with \cite[eq. (3.37)]{ABL}.
\end{remark}

\subsection{Classification of weakly-Hessian models with 
rotationally-invariant scalar manifold metric}

\noindent The system formed by the first three equations in
\eqref{LambdaSysElem} is studied in detail in Appendix
\ref{app:SolHesse}; here we summarize the results of that analysis. As
before, let $\beta\eqdef \sqrt{\frac{3}{8}}$. For a
rotationally-invariant Riemannian 2-manifold $(\Sigma,\cG)$, it is
shown in Appendix \ref{app:SolHesse} that the first three equations of
the system \eqref{LambdaSysElem} admit solutions iff the metric $\cG$
has Gaussian curvature equal to $-\beta^2$, i.e. iff the rescaled
metric $G=\beta^2\cG$ has Gaussian curvature $-1$.  In particular, the
rescaled scalar manifold $(\Sigma,G)$ is a Hesse manifold in the sense
of Subsection \ref{subsec:LambdaV} iff it is a hyperbolic
surface. Since $(\Sigma,G)$ is rotationally-invariant, a well-known
result (which is summarized in Appendix \ref{app:elem}) implies that
$(\Sigma,G)$ must be {\em elementary hyperbolic}, i.e. that it is
isometric with the Poincar\'e disk $\mD$, with the hyperbolic
punctured disk $\mD^\ast$ or with a hyperbolic annulus $\mA(R)$ of
modulus $\mu=2\log R$ (where $R>1$). We refer the reader to Appendix
\ref{app:elem} and to reference \cite{elem} for the description of
elementary hyperbolic surfaces. We will use the notations $\mD_\beta$,
$\mD^\ast_\beta$ and $\mA_\beta(R)$ for the disk, punctured disk and
annulus endowed with the metric $\cG=\frac{1}{\beta^2}G$ of Gaussian
curvature equal to $-\beta^2$. Then the following statements hold (see
Appendix \ref{app:SolHesse}):
\begin{enumerate}
\itemsep 0.0em
\item If $(\Sigma,\cG)=\mD_\beta$, then we have:
\ben
\label{fmD}
f(r)=\frac{1}{\beta^2} \sinh^2(\beta r)~~
\een
and:
\ben
\label{metricmD}
\dd s^2_\cG=\dd r^2+\frac{1}{\beta^2}\sinh^2(\beta r) \,\dd \theta^2~,
\een
with $r\geq 0$. In this case, the general solution of the Hesse equation of
$(\Sigma,G)$ is given by:
\ben
\label{LambdamD}
\Lambda(r,\theta)=B_0\cosh(\beta r)+\sigma \sinh(\beta r)\cos (\theta-\theta_0)~~,
\een
where $B_0$ (denoted ${\hat B}_1$ in Appendix
\ref{app:SolHesse}) and $\theta_0$ are arbitrary real constants, while
$\sigma\geq 0$ is a non-negative constant (denoted $\frac{\zeta}{\beta}$ in Appendix 
\ref{app:SolHesse}).\footnote{The result of \cite{ABL}, namely 
$\Lambda (r,\theta)  =  (C_1 \sin \theta + C_2 \cos \theta)\sinh(\beta r)$, is obtained from \eqref{LambdamD} 
for $B_0 = 0$ and $C_1 = \sigma \sin \theta_0$\,, $C_2 = \sigma \cos \theta_0$.}

In particular, the space of Hesse functions on the hyperbolic disk is
three-dimensional. Let $(\rho,\theta)$ be Euclidean polar coordinates
on the disk, related to the normal polar coordinates
$(r,\theta)$ of the metric $\cG$ through (cf. \eqref{rrhoD}):
\ben
\label{rrhoDisk}
\rho\!=\!\tanh(\beta r/2)\in [0,1)\Longleftrightarrow r\!=\!\frac{2}{\beta}\arctanh(\rho)\!
=\!\frac{1}{\beta}\log \frac{1\!+\!\rho}{1\!-\!\rho}\in [0,+\infty)~.
\een
Then \eqref{metricmD} becomes:
\ben
\label{metricmDCart}
\dd s^2_\cG=\frac{4}{\beta^2(1-\rho^2)^2}(\dd \rho^2+\rho^2\dd\theta^2)
=\frac{4}{\beta^2(1-\rho^2)^2}(\dd x^2+\dd y^2)~~,
\een
where $x=\rho \cos \theta$ and $y=\rho \sin \theta$, while \eqref{LambdamD} takes the form:
\ben
\label{LambdamDEuc}
\Lambda(\rho,\theta)=\frac{B_0(1+\rho^2)+2\sigma\rho\cos(\theta-\theta_0)}{1-\rho^2}~~.
\een
Notice the relations:
\ben
\label{nrels}
\beta r=2\arctanh(\rho)=\log\frac{1+\rho}{1-\rho}~~\mathrm{i.e.}~~
\rho=\tanh\left(\frac{\beta r}{2}\right)=\frac{e^{\beta r}-1}{e^{\beta r}+1}~~,
\een
the second of which implies:
\ben
\label{nrels2}
\cosh(\beta r)=\frac{1+\rho^2}{1-\rho^2}~~,~~\sinh(\beta r)=\frac{2\rho}{1-\rho^2}~~.
\een

\item If $(\Sigma,\cG)=\mD^\ast_\beta$, then we have:
\ben
\label{fmDast}
f(r)=\frac{1}{(2\pi\beta)^2}e^{-2\beta r}~~
\een
and:
\ben
\label{metricmDast}
\dd s^2_\cG=\dd r^2+\frac{1}{(2\pi\beta)^2} e^{-2\beta r} \dd \theta^2~~,
\een
with $r\in \R$. In this case, the general solution of the Hesse
equation \eqref{HessCond} is given by:
\ben
\label{LambdamDast}
\Lambda(r)=C e^{-\beta r}~,
\een
where $C$ (denoted ${\hat B}$ in Appendix \ref{app:SolHesse}) is an
arbitrary constant.\footnote{The Noether condition is solved locally by
  $\Lambda(r,\theta) = (\tilde{C}_1 \theta + \tilde{C}_2) e^{- \beta r}$
  \cite{ABL}. Requiring $\Lambda$ to be globally defined on the scalar
  manifold implies that one must set $\tilde{C}_1 = 0$, in agreement
  with \eqref{LambdamDast}.} In particular, the space of Hesse
functions on the hyperbolic punctured disk is one-dimensional. Let
$(\rho,\theta)$ be Euclidean polar coordinates on the punctured disk,
related to the normal polar coordinates $(r,\theta)$ of the metric
$\cG$ through (cf. eqs. \eqref{rrhoDast}):
\ben
\label{rrhopDisk}
\rho=e^{-2\pi e^{\beta r}}\in (0,1)\Longleftrightarrow r=\frac{1}{\beta} 
\log\left(\frac{|\log \rho|}{2\pi}\right)\in (-\infty,\infty)~~.
\een
Then \eqref{metricmDast} becomes:
\ben
\label{metricmDastCart}
\dd s^2_\cG=\frac{1}{\beta^2(\rho\log\rho)^2}(\dd \rho^2+\rho^2\dd\theta^2)
=\frac{1}{\beta^2(\rho\log\rho)^2}(\dd x^2+\dd y^2)~~,
\een
while \eqref{LambdamDast} takes the form:
\ben
\label{LambdamDastCart}
\Lambda(\rho)=\frac{2\pi C}{|\log \rho|}~~.
\een

\item If $(\Sigma,\cG)=\mA_\beta(R)$, then we have:
\ben
\label{fmmA}
f(r)=\frac{\ell^2}{(2\pi\beta)^2}\cosh^2(\beta r)~~
\een
and:
\ben
\label{metricmA}  
\dd s^2_\cG=\dd r^2+\frac{\ell^2}{(2\pi\beta)^2} \cosh^2(\beta r) \,\dd \theta^2~~,
\een
where $r\in \R$ and $\ell>0$ is given in \eqref{ell}. In this case,
the general solution of the Hesse equation is:
\ben
\label{LambdamA}
\Lambda(r)=C\sinh(\beta r)~~,
\een
where $C$ (denoted ${\hat B}_2$ in Appendix \ref{app:SolHesse}) is an
arbitrary constant.\footnote{For hyperbolic annuli, the Hesse equation
  is solved locally by $\Lambda(r,\theta) \!=\! \left[ \hat{C}_1 \cosh (C_R \theta)
    + \hat{C}_2 \sinh (C_R \theta) \right] \cosh (\beta r) + \hat{C}_3
  \sinh (\beta r)$ with $C_R = \frac{\pi}{2 \log R}$.  When
  $\pd_{\theta} \Lambda \neq 0$, one is left with the term $\cosh
  (\beta r)$ (see \cite{ABL}). Requiring that the solution is globally
  defined on the scalar manifold forces the choice
  $\hat{C}_1=\hat{C}_2 = 0$ in the local solutions of loc. cit.}  In
particular, the space of Hesse functions on any hyperbolic annulus is
one-dimensional. Let $(\rho,\theta)$ be Euclidean polar coordinates on
the annulus, related to the normal polar coordinates $(r,\theta)$ of
the metric $\cG$ through (cf. eqs. \eqref{rrhoA}):
\ben
\label{rrhopAnnulus}
\rho= e^{-\frac{\mu}{\pi}\arccos\left[\frac{1}{\cosh(\beta r)}\right]}
\Longleftrightarrow |r|=\frac{1}{\beta}\arccosh\left[\frac{1}{\cos\left(\frac{\pi}{\mu}|\log\rho|\right)}\right]~~.
\een
Then \eqref{metricmA} becomes:
\ben
\label{metricmACart}
\dd s^2_\cG\!=\!\left(\frac{\pi}{2\beta\log R}\right)^2 \!\frac{\dd \rho^2\!+
\!\rho^2\dd\theta^2}{\left[\rho\cos\left(\frac{\pi\log\rho}{2\log R}\right)\right]^2}\!=\!
\left(\frac{\pi}{2\beta\log R}\right)^2\! \frac{\dd x^2+\dd y^2}
{\left[\rho\cos\left(\frac{\pi\log\rho}{2\log R}\right)\right]^2}~,
\een
while \eqref{LambdamA} takes the following form:
\ben
\label{LambdamACart}
\Lambda(\rho)=C \tan\left(\frac{\pi}{\mu}\log\rho\right)~~.
\een
\end{enumerate}

\section{Hessian models for the hyperbolic disk}
\label{sec:Disk}

\noindent In this section, we show that the space $\cS(\mD)$ of Hesse
functions on the hyperbolic disk identifies naturally with the
3-dimensional Minkowski space $\R^{1,2}$ such that the natural action
of the orientation-preserving isometry group of $\mD$ on such
functions identifies with the fundamental action of the group of
proper and orthochronous Lorentz transformations in 3 dimensions. The
identification follows from the fact that the general Hesse function
on the hyperbolic disk is a linear combination of the components of
the classical Weierstrass map and hence the classical Weierstrass
coordinates of $\mD$ form a basis for the space of Hesse
functions. This leads to a description of Hesse functions on the
hyperbolic disk in terms of three-dimensional Minkowski geometry and
allows for a natural classification of such functions into functions
of timelike, spacelike and lightlike type.  We also discuss the level
sets and critical points of such functions, showing that they behave
quite differently in each of the three cases. For each type, we show
that the gradient flow of a Hesse function can be described explicitly
in certain classical coordinate systems on the hyperbolic
disk. Finally, we combine these results and the method of
characteristics to extract the explicit form of the most general
scalar potential which solves the $\Lambda$-$V$ equation, thus
classifying all Hessian two-field cosmological models whose rescaled
scalar manifold is a hyperbolic disk. We find that such scalar
potentials admit a natural description in terms of three-dimensional
Minkowski geometry. The results of this section are summarized in
Subsection \ref{subsec:DiskSummary}, which the reader may consult
first. Throughout this section, $G$ denotes the Poincar\'e metric
(which has Gaussian curvature equal to $-1$), while $\cG$ denotes the
physically-relevant metric (which has Gaussian curvature equal to
$-\beta^2=-3/8$).

\subsection{The space of Hesse functions}

\noindent We start by studying the space of Hesse functions on the hyperbolic disk
$\mD=(\rD,G)$. 

\paragraph{The Weierstrass basis.}
The general Hesse function \eqref{LambdamD} of $\mD$ can be written as:
\ben
\label{LambdamDCart}
\Lambda\!=\!B_0\cosh(\beta r)\!+\!B_1\sinh(\beta r)\cos\theta\!+
\!B_2\sinh(\beta r)\sin \theta
=\!\frac{B_0(1\!+\!\rho^2)\!+\!2B_1 x \!+\!2 B_2 y}{1-\rho^2}~,
\een
where (see equation \eqref{B12def}):
\ben
\label{Bdef}
B_1\eqdef \sigma\cos\theta_0~~,~~ B_2\eqdef \sigma\sin\theta_0~~.
\een
Here $x=\rho\cos\theta$ and $y=\rho\sin\theta$ are Euclidean Cartesian
coordinates on the disk (with $\rho=\sqrt{x^2+y^2}$) while
$(r,\theta)$ are normal polar coordinates for the physically-relevant
metric $\cG=\frac{1}{\beta^2}G$; for the relation between $\rho$ and $r$, 
see \eqref{rrhoD}. Relation \eqref{LambdamDCart} shows
that the functions:
\ben
\label{Ximu}
\Xi^0\!\eqdef \!\frac{1+\rho^2}{1-\rho^2}\!=\!\cosh\beta r~,
~~\Xi^1\!\eqdef\! \frac{2x}{1-\rho^2}\!=\!\sinh(\beta r)\cos\theta~,
~~\Xi^2\!\eqdef\! \frac{2 y}{1-\rho^2}\!=\! \sinh(\beta r)\sin\theta
\een
form a basis of the linear space $\cS(\mD)$ of smooth solutions to the
Hesse equation. The fundamental solutions \eqref{Ximu} coincide with
the classical ``Weierstrass coordinates'' of $\mD$, i.e. with the
components of the {\em Weierstrass map} $\Xi:\rD\rightarrow \R^3$ (see
Appendix \ref{app:iso}) which realizes the hyperbolic disk as the
future sheet:
\be
S^+\eqdef \big\{X=(X^0,X^1,X^2)\,|\,X^0=\sqrt{1+(X^1)^2+(X^2)^2}\,\big\}
\ee
of the unit hyperboloid in the 3-dimensional Minkowski space
$\R^{1,2}=(\R,(~,~))$ (see Figure \ref{fig:Weierstrass}).
Here:
\ben
\label{Mink}
(X,X')\eqdef X^0{X'}^0-X^1 {X'}^1-X^2{X'}^2~~~~\forall X,X'\in \R^3
\een
is the Minkowski pairing of signature $(1,2)$, whose coefficients we
denote by $\eta_{\mu\nu}$:
\be
(\eta_{\mu\nu})_{\mu,\nu=0,\ldots 2}=\diag(1,-1,-1)~~
\ee
and which we use to raise and lower indices.

\begin{figure}[H]
\centering \centering
\vskip 1em
\includegraphics[width=0.4\linewidth]{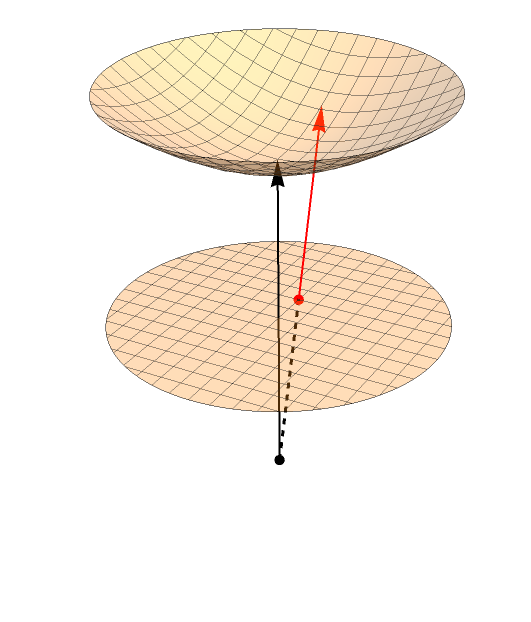}
\vskip -2.5em
\caption{The Weierstrass map $\Xi:\rD\rightarrow S^+$ coincides with
  the projection of $\rD$ from the point $(-1,0,0)$ of
  three-dimensional Minkowski space, when the Poincar\'e disk is
  placed in the plane $X^0=0$. Notice that the conformal boundary of
  $\mD$ is mapped to the circle at infinity of $S^+$. When $u\in \mD$
  approaches the conformal boundary, the 3-vector $\Xi(u)$ (shown in
  red) becomes lightlike.}
\label{fig:Weierstrass}
\end{figure}

\noindent The Weierstrass coordinates of any point $u\in D$ satisfy:
\be
\left( \Xi^0(u) \right)^{\!2}-\left( \Xi^1(u) \right)^{\!2}-\left( \Xi^2(u) \right)^{\!2}=1~~\mathrm{and}~~\Xi^0(u)>0
\ee
and we have:
\be
\Xi(u)=\Xi^\mu(u) E_\mu~~,~~B=B^\mu E_\mu~~,
\ee
where $E_0\eqdef (1,0,0)$, $E_1\eqdef (0,1,0)$ and $E_2\eqdef
(0,0,1)$. 

\paragraph{The 3-vector parameterization.}

The general Hesse function \eqref{LambdamDCart} reads:
\ben
\label{LambdaXi}
\Lambda_B(u)=B_\mu \Xi^\mu(u)\!=\!\eta_{\mu\nu}B^\mu \Xi^\nu(u)
\!=\!(B, \Xi(u))\!=\!B^0\Xi^0(u)-B^1 \Xi^1(u)-B^2\Xi^2(u)~~,
\een
where we defined $B^\mu\eqdef \eta^{\mu\nu}B_\nu$ and we combined the
constants $B^0=B_0$, $B^1=-B_1$ and $B^2=-B_2$ into the 3-vector:
\be
B\eqdef (B^0,B^1,B^2)=B^\mu E_\mu \in \R^3~~.
\ee
Notice the relation $\Xi_\mu(u)=(\Xi(u),E_\mu)$. Since the Weierstrass
coordinates form a basis of the space of Hesse functions, the linear
map $\bLambda:\R^3\stackrel{\sim}{\rightarrow} \cS(\mD)$ defined
through:
\ben
\label{bLambda}
\bLambda(B)\eqdef \Lambda_B~~,~~\forall B\in \R^3~~
\een
is an isomorphism of vector spaces from $\R^3$ to the space $\cS(\mD)$.

\paragraph{Action of orientation-preserving isometries.}

Since the Hesse equation is invariant under isometries of the scalar
manifold, the group $\PSU(1,1)\simeq \Iso_+(\mD)$ of
orientation-preserving isometries of $\mD$ acts linearly on the space
$\cS(\mD)$ of Hesse functions through the representation $\cH$ defined
through:
\be
\cH(U)(\Lambda)\eqdef \Lambda\circ \psi_U^{-1}~,~~\forall U\in \PSU(1,1)
\ee
i.e.:
\be
\cH(U)(\Lambda)(u)=\Lambda (\psi_U^{-1}(u))~~,~~\forall U\in \PSU(1,1)~,~\forall u\in \rD~~.
\ee
Here $\psi_U\in \Iso_+(\mD)$ is the orientation-preserving isometry of
$\mD$ corresponding to an element $U$ of $\PSU(1,1)$ (see Appendix
\ref{app:iso}). The equivariance property \eqref{HypAction} of
the Weierstrass map gives:
\be
\Xi_\mu(\psi_U(u))=(\Ad_0(U)(\Xi(u)),E_\mu)=(\Xi(u),\Ad_0(U^{-1})(E_\mu))~~.
\ee
while equation \eqref{LambdaXi} implies:
\be
\Lambda_B(\psi_{U^{-1}}(u))=(B, \bAd_0(U^{-1})(\Xi(u)))=(\bAd_0(U)(B),\Xi(u))
=\Lambda_{\bAd_0(U)(B)}(u)~~.
\ee
This gives: 
\ben
\label{LambdaIsoTf}
\cH(U)(\Lambda_B)=\Lambda_{\bAd_0(U)(B)}~~,~~\forall U\in \PSU(1,1)~~,
~~\forall B\in \R^3~~,
\een
i.e.:
\be
\cH(U)\circ \bLambda=\bLambda\circ \bAd_0(U)~~,~~\forall U\in \PSU(1,1)~~,
\ee
showing that the linear isomorphism \eqref{bLambda} is an equivalence
of representations between $\cH$ and ${\overline \Ad}_0$. As recalled
in Appendix \ref{app:iso}, the representation $\bAd_0$ (which is
equivalent with the adjoint representation of $\PSU(1,1)$) preserves
the Minkowski pairing \eqref{Mink}. In fact, this representation
defines an isomorphism of groups
$\bAd_0:\PSU(1,1)\stackrel{\sim}{\longrightarrow} \SO_o(1,2)$, where
$\SO_o(1,2)$ denotes the connected component of the identity of the
Lorentz group, i.e. the group of proper and orthochronous Lorentz
transformations in three dimensions.

\begin{definition}
The Hesse function $\Lambda_B$ on the hyperbolic disk is called {\em
  spacelike}, {\em timelike} or {\em lightlike} if its parameter
3-vector $B\in\R^3$ is spacelike, timelike or lightlike,
respectively. Similarly, $\Lambda_B$ is called future (resp. past)
timelike or lightlike if it is timelike or lightlike and $B^0>0$
(respectively $B^0<0$).
\end{definition}

\subsection{Degenerate and non-degenerate Hesse functions}

\begin{definition}
A non-trivial Hesse function $\Lambda_B$ is called {\em
  non-degenerate} if $B^0\neq 0$ and {\em degenerate} if $B^0=0$. 
\end{definition}

\noindent Notice that a degenerate Hesse function is necessarily spacelike.

\paragraph{Rescaling non-degenerate Hesse functions.}
Recall from \eqref{Bdef} that $\sigma\!=\!\sqrt{B_1^2\!+\!B_2^2}$. 
When $B^0\neq 0$, we define:
\ben
\label{DeltaDef}
\Delta=\frac{\sigma}{B^0}=\frac{\sqrt{B_1^2+B_2^2}}{B^0}~~,
~~b^1\eqdef\frac{B^1}{B^0}=-\Delta\cos\theta_0~,
~~b^2\eqdef\frac{B^2}{B^0}=-\Delta\sin\theta_0~~
\een
and $\vec{b}\eqdef (b^1,b^2)$, so that
$B=B^0(1,b^1,b^2)=B^0(1,\vec{b})$ and $b_1^2+b_2^2=\Delta^2$. This
allows us to write non-degenerate Hesse functions as:
\be
\Lambda_B=B^0 \lambda_{\vec{b}}~~(\mathrm{when}~~B^0\neq 0)~,~
\ee
with
\be
\lambda_{\vec{b}}\eqdef\frac{1+\rho^2-2b^1 x-2b^2 y}{1-\rho^2}=
\frac{1+\rho^2+2\Delta \rho \cos(\theta-\theta_0)}{1-\rho^2}\eqdef
\lambda_{\Delta,\theta_0}~~.
\ee
In normal polar coordinates $(r,\theta)$ for the metric $\cG$ we have:
\ben
\label{lambda}
\lambda_{\Delta,\theta_0}= \cosh(\beta r)+\Delta \sinh(\beta r) \cos (\theta-\theta_0)~~.
\een
Notice that a non-degenerate Hesse function is:
\begin{itemize}
\item timelike, iff $|\Delta|<1$.
\item spacelike, iff $|\Delta|>1$.
\item lightlike, iff $|\Delta|=1$, i.e. if $\Delta=+1$ (future
  lightlike) or $\Delta=-1$ (past lightlike).
\end{itemize}
When $\sigma\neq 0$, we have $\sign(B_0)=\sign(\Delta)$. The shape of
non-degenerate Hesse functions on $\mD$ is illustrated in
Figures \ref{fig:LambdaTimelike}, \ref{fig:LambdaSpacelike} and
\ref{fig:LambdaLightlike} for the case $B^0=1$
(i.e. $\Delta=\sigma\geq 0$) with $\theta_0=-\pi/2$, which gives
$B=(1,0,\Delta)$ and:
\be
\Lambda_B=\Lambda_{1, 0, \Delta}=\frac{1+\rho^2-2\Delta y}{1-\rho^2}~~.
\ee

\begin{figure}[H]
\centering\centering\centering
\begin{minipage}{.31\textwidth}
\centering \includegraphics[width=1\linewidth]{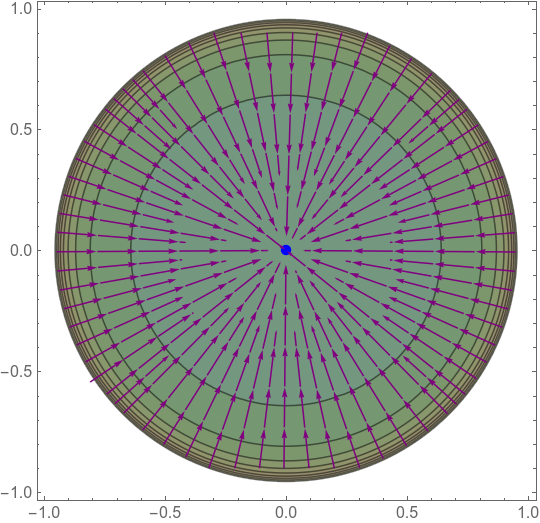}
\subcaption{$B=(1,0,0)$}
\label{fig:LambdaTimelike0}
\end{minipage}\hfill
\begin{minipage}{.31\textwidth}
\centering \includegraphics[width=1\linewidth]{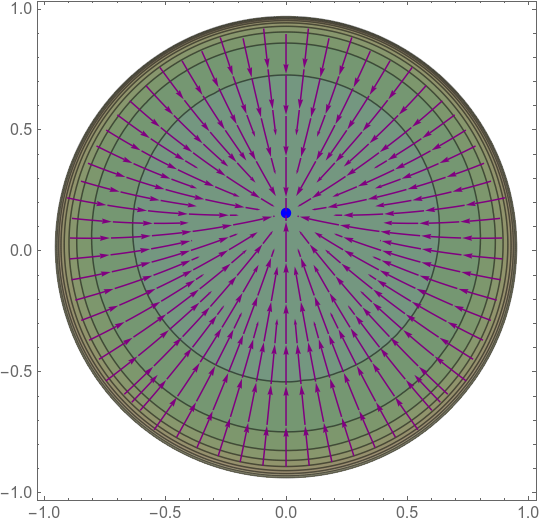}
\vskip 0.45em
\subcaption{$B=(1,0,0.3)$}
\label{fig:LambdaTimelike0.3}
\end{minipage}\hfill
\begin{minipage}{.31\textwidth}
\centering \includegraphics[width=1\linewidth]{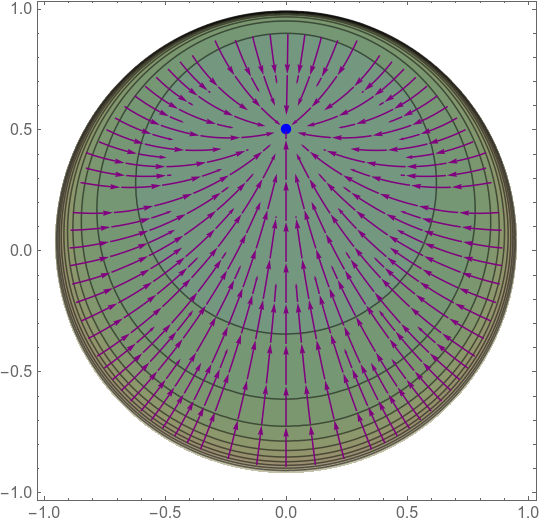}
\vskip 0.45em
\subcaption{$B=(1,0,0.8)$}
\label{fig:LambdaTimelike0.8}
\end{minipage}\hfill
\caption{Contour plot of $\Lambda:=\Lambda_{1, 0, \Delta}$ for the
  non-degenerate timelike case ($\Delta<1$), where the gradient flow
  of $\Lambda$ is indicated by purple arrows. The values of $\Lambda$
  decrease from lightest brown to darkest green. The critical point of
  $\Lambda$ is shown in blue. In this case, the level sets are
  hyperbolic circles centered at the critical point.  From left to
  right, the figure shows the cases $\Delta=0,0.3,0.8$.}
\label{fig:LambdaTimelike}
\end{figure}

\begin{figure}[H]
\centering
\begin{minipage}{.31\textwidth}\hfill
\centering \includegraphics[width=1\linewidth]{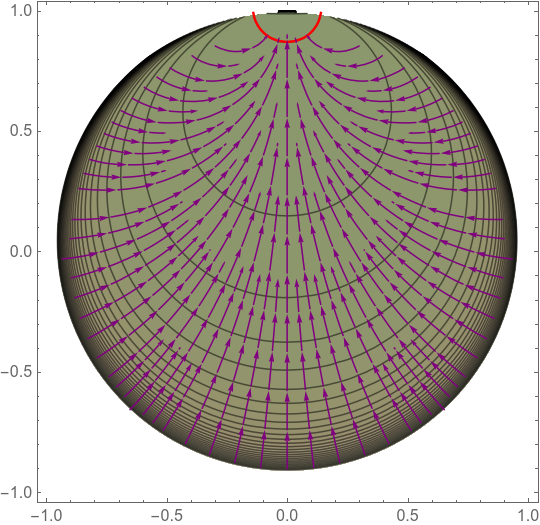}
\subcaption{$B=(1,0,1.01)$}
\label{fig:LambdaSpacelike1.01}
\end{minipage}\hfill
\begin{minipage}{.31\textwidth}
\centering \includegraphics[width=1\linewidth]{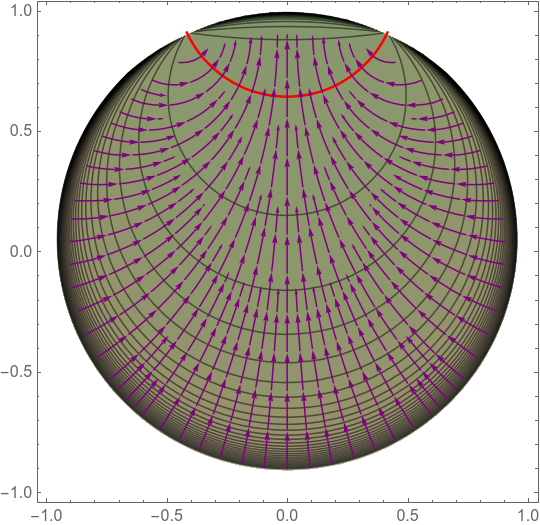}
\subcaption{$B=(1,0,1.1)$}
\label{fig:LambdaSpacelike1.1}
\end{minipage}\hfill
\begin{minipage}{.31\textwidth}
\centering \includegraphics[width=1\linewidth]{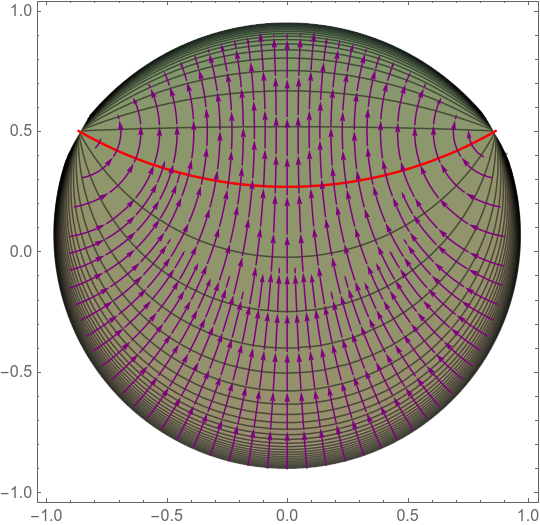}
\subcaption{$B=(1,0,2)$}
\label{fig:LambdaSpacelike2}
\end{minipage}\hfill
\caption{Contour plot of $\Lambda:=\Lambda_{1, 0, \Delta}$ for the
  non-degenerate spacelike case, where the gradient flow of $\Lambda$
  is indicated by purple arrows. The values of $\Lambda$ decrease from
  lightest brown to darkest green. In this case, $\Lambda$ has no
  critical point but vanishes along the curve shown in red. In this
  case, the level sets are hypercycles with axis given by the
  vanishing locus of $\Lambda$. From left to right, the figure shows
  the cases $\Delta=1.01, 1.1,2$.}
\label{fig:LambdaSpacelike}
\end{figure}

\begin{figure}[H]
\centering \centering
\includegraphics[width=0.45\linewidth]{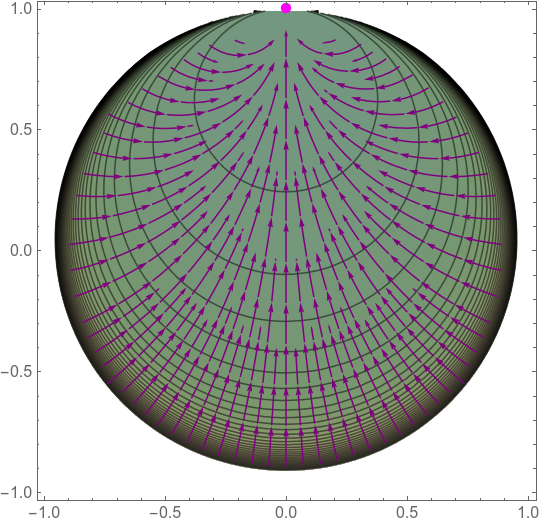}
\caption{Contour plot of $\Lambda_{1, 0, 1}$ (non-degenerate lightlike
  case with $\Delta=1$), where the gradient flow of $\Lambda$ is
  indicated by purple arrows. The values of $\Lambda$ decrease from
  lightest brown to darkest green. The point where $\Lambda$ tends to
  zero on the conformal boundary of the hyperbolic disk is shown in
  magenta. In this case, the level sets are horocycles centered at
  this ideal point.}
\label{fig:LambdaLightlike}
\end{figure}

\paragraph{Rescaling degenerate Hesse functions.}
Non-trivial but degenerate (i.e. with $B^0 = 0$) Hesse functions have the form:
\be
\Lambda_{0,B^1,B^2}=\sigma \mu_{\theta_0}~~,
\ee
where $B^1=-\sigma\cos\theta_0$, $B^2=-\sigma \sin\theta_0$ with $\sigma
=\sqrt{B_1^2+B_2^2}>0$ and:
\ben
\label{mu}
\mu_{\theta_0}\eqdef\frac{2 x\cos \theta_0+2y \sin \theta_0}{1-\rho^2}
=\frac{2\rho }{1-\rho^2}\cos(\theta-\theta_0)=\sinh(\beta r)\cos (\theta-\theta_0)~~.
\een
See Figure \ref{fig:LambdaDeg} for a contour plot of the function:
\ben
\label{mucan}
\Lambda_{0,0,1}=\mu_{-\pi/2}=-\frac{2y}{1-\rho^2}~~.
\een

\begin{remark}
When $\sigma\neq 0$, we have $\Delta\neq 0$ and
$B^0=\frac{\sigma}{\Delta}$. In this case, we can write
$\Lambda_B=\frac{\sigma}{\Delta}\lambda_{\Delta,\theta_0}$ and we
have:
\be
\lim_{\Delta\rightarrow \pm \infty}\frac{\lambda_{\Delta,\theta_0}(u)}{\Delta}=
\mu_{\theta_0}(u)\Longrightarrow \lim_{\Delta\rightarrow \pm \infty}\Lambda_B(u)
=\sigma \mu_{\theta_0}(u)~~.
\ee
Hence a non-degenerate Hesse function with $\sigma\neq 0$ point-wisely
approximates the degenerate Hesse function with the same $\theta_0$ in
the limits $\Delta\rightarrow \pm \infty$.
\end{remark}

\vskip 0.5em

\begin{remark}
It is easy to see that a Hesse function is separated in the
coordinates $(r,\theta)$ or $(\rho,\theta)$ iff it is either
degenerate or non-degenerate with $\Delta=0$. Hence $\Lambda_B$
separates in these coordinates only for $\Delta=0$ or in the limits
$\Delta\rightarrow \pm \infty$.
\end{remark}

\begin{figure}[H]
\centering
\begin{minipage}{.5\textwidth}
\includegraphics[width=\linewidth]{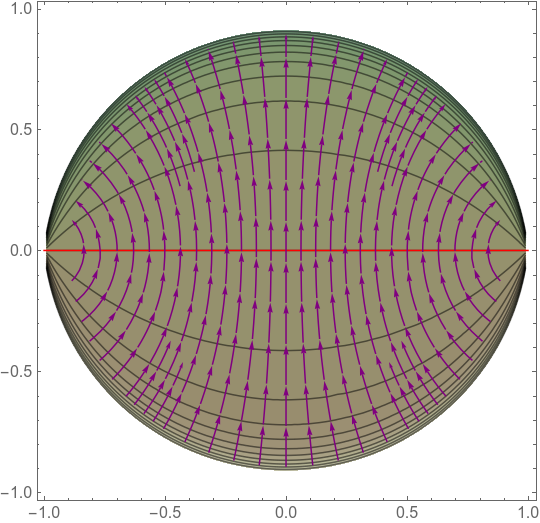}
\vskip 0.45em \subcaption{$B=(0,0,1)$.}
\end{minipage}\hfill
\caption{Contour plot of the degenerate spacelike Hesse function
  $\Lambda_{0,0,1}$, where the gradient flow of $\Lambda$
  is indicated by purple arrows. The values of $\Lambda$ decrease from lightest
  brown to darkest green. The vanishing locus of $\Lambda$ is the
  horizontal segment shown in red. The level sets are hypercycles with
  axis given by the vanishing locus.}
\label{fig:LambdaDeg}
\end{figure}

\subsection{Critical points of Hesse functions}

\begin{definition} 
A non-trivial Hesse function $\Lambda$ on the hyperbolic disk is called
{\em critical} if it has at least one critical point, and {\em
  non-critical} if it has no critical points.
\end{definition}

\begin{prop}
A non-trivial Hesse function $\Lambda_B$ on the hyperbolic
disk is critical iff it is timelike (and hence non-degenerate). In
this case, $\Lambda_B$ has exactly one critical point on $\rD$, namely:
\ben
\label{uc}
u_c=\twopartdef{0}{\Delta=0}{\frac{\sqrt{1-\Delta^2}-1}{\Delta} 
e^{\i \theta_0}}{0<|\Delta|<1}~~
\een
and the critical value of $\Lambda_B$ is given by: 
\ben
\label{LambdaCritVal}
\Lambda_c\eqdef \Lambda(u_c)=B_0\frac{1+\Delta^2}{\sqrt{1-\Delta^2}}~~.
\een
Moreover, $u_c$ is an absolute minimum when $B_0>0$ (i.e. when
$\Lambda_B$ is future-timelike) and an absolute maximum when $B_0<0$
(i.e. when $\Lambda_B$ is past-timelike).
\end{prop}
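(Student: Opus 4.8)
The plan is to transport the entire question to the hyperboloid $S^+$ via the Weierstrass map, where everything becomes three-dimensional Minkowski linear algebra. By \eqref{LambdaXi} we have $\Lambda_B(u)=(B,\Xi(u))$, and $\Xi:\rD\to S^+$ is a diffeomorphism onto the future sheet of the unit hyperboloid; since being a critical point is diffeomorphism-invariant, $u$ is critical for $\Lambda_B$ iff $X=\Xi(u)$ is critical for the restriction to $S^+$ of the linear functional $g(X)=(B,X)$. I would compute the latter by Lagrange multipliers for the constraint $(X,X)=1$: the condition $\nabla g=\lambda\,\nabla[(X,X)]$ reads $B=2\lambda X$, so a constrained critical point exists iff $B$ is proportional to some $X\in S^+$. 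Every $X\in S^+$ is future timelike, so this forces $B$ to be timelike (hence $B^0\neq 0$, i.e. non-degenerate); conversely, if $B$ is timelike then $X_c\eqdef\sign(B^0)\,B/\sqrt{(B,B)}$ is the unique element of $S^+$ proportional to $B$. This already proves that $\Lambda_B$ is critical iff it is timelike, with a single critical point $u_c=\Xi^{-1}(X_c)$, the case $\Delta=0$ corresponding to $X_c=E_0$, i.e. $u_c=0$.

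To extract the explicit expression \eqref{uc}, I would invert the Weierstrass map at $X_c$. The time component gives $\cosh(\beta r_c)=X_c^0=1/\sqrt{1-\Delta^2}$ (using $(B,B)=B_0^2(1-\Delta^2)$), whence $\tanh(\beta r_c)=|\Delta|$ and, through $\rho_c=\tanh(\beta r_c/2)$ from \eqref{nrels}, the radius $\rho_c=\frac{1-\sqrt{1-\Delta^2}}{|\Delta|}$. The spatial components $X_c^{1,2}$ are proportional, with sign $\sign(B^0)$, to $B^{1,2}=-\sigma(\cos\theta_0,\sin\theta_0)$; tracking this sign against that of $\Delta$ fixes the polar angle and assembles $\rho_c$ together with the phase into the single complex formula $u_c=\frac{\sqrt{1-\Delta^2}-1}{\Delta}\,e^{\i\theta_0}$.

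Finally, for the critical value and its extremal nature I would invoke the reverse (timelike) Cauchy--Schwarz inequality: if $B$ and $X$ are both future timelike, then $(B,X)\geq\sqrt{(B,B)}\,\sqrt{(X,X)}$, with equality iff $X$ is proportional to $B$. Applied to $X=\Xi(u)\in S^+$ (so $(X,X)=1$) with future-timelike $B$ (the case $B_0>0$), this gives $\Lambda_B(u)\geq\sqrt{(B,B)}$ on all of $\rD$, with equality precisely at $u_c$, so $u_c$ is the absolute minimum; applying the same inequality to $-B$ when $B_0<0$ shows $u_c$ is the absolute maximum. In both cases the extremal value is $\sign(B_0)\sqrt{(B,B)}$, i.e. $B_0\sqrt{1-\Delta^2}$, which is the closed-form critical value $\Lambda_c$ after substitution. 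The conceptually clean part is the existence/uniqueness dichotomy, which is immediate from the Lagrange condition $B\parallel X$; I expect the main friction to lie in the coordinate bookkeeping of the second step (inverting $\Xi$ and fixing the phase of $u_c$ from the signs of $B^0$ and $\Delta$) and in deploying the reverse Cauchy--Schwarz inequality with the correct future-timelike hypotheses, so that the \emph{global} minimum/maximum conclusion is obtained rather than merely a local one.
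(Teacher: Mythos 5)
Your proof is correct, but it takes a genuinely different route from the paper's. The paper works directly in Euclidean coordinates on $\rD$: it rotates to $\theta_0=0$, writes out $(\dd\lambda)(x,y)=0$ as an explicit polynomial system, reduces it to the quadratic $\Delta x^2+2x+\Delta=0$, and sorts out by hand which root lies inside the unit disk; the extremal character is then read off from the Hesse equation $\Hess_\cG(\Lambda)=\frac{3}{8}\Lambda\,\cG$, which makes the Hessian definite at the critical point. Your transport to the hyperboloid replaces all of this with the Lagrange condition $B=2\lambda X$, which makes the dichotomy (critical iff timelike, and then a unique critical point $X_c=\sign(B^0)B/\sqrt{(B,B)}$) immediate and coordinate-free; the reverse Cauchy--Schwarz inequality then yields the \emph{global} min/max statement directly, whereas the paper's Hessian argument as written only gives a local extremum (globality there requires the extra observation that a future-timelike $\Lambda_B$ is everywhere positive, hence geodesically convex). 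The price you pay is the bookkeeping of inverting the Weierstrass map to recover \eqref{uc}, which you flag yourself; I checked that the signs work out and your formula for $u_c$ agrees with the paper's.

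One substantive point: your computation gives the critical value $\Lambda_c=(B,X_c)=\sign(B_0)\sqrt{(B,B)}=B_0\sqrt{1-\Delta^2}$, which does \emph{not} match the expression $B_0\frac{1+\Delta^2}{\sqrt{1-\Delta^2}}$ stated in \eqref{LambdaCritVal}. Your value is the correct one: for $B=(1,-1/2,0)$ (so $\Delta=1/2$, $\theta_0=0$) direct substitution of $u_c=\sqrt{3}-2$ into $\frac{1+\rho^2+2\Delta x}{1-\rho^2}$ gives $\frac{\sqrt{3}}{2}=B_0\sqrt{1-\Delta^2}$, not $\frac{5\sqrt{3}}{6}$; the two formulas agree only at $\Delta=0$. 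So do not regard the mismatch as a gap in your argument --- equation \eqref{LambdaCritVal} of the proposition contains an error, and your derivation both proves the statement and corrects it.
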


\begin{proof}
It is easy to see that a non-trivial degenerate Hesse function has no
critical points in $\rD$. If
$\Lambda=\Lambda_B=B^0\lambda_{\Delta,\theta_0}$ is a non-degenerate
Hesse function, then a counterclockwise rotation of the coordinates by
an angle $\theta_0$ allows us to assume, without loss of generality,
that $\theta_0=0$. Hence it suffices to study the critical points of
the function:
\be
\lambda:=\lambda_{\Delta, 0}=\frac{1+x^2+y^2+2 \Delta x}{1-x^2-y^2}~~.
\ee
The condition $(\dd\lambda)(x,y)=0$ amounts to
the system:
\beqan
\label{LambdaPCrit}
&& \Delta(1+x^2-y^2)+2x=0\nn\\
&& y(\Delta x+1)=0~~.
\eeqan
Multiplying the first equation by $y$ and the second equation by $x$
and subtracting the two gives:
\be
\Delta y(1-x^2-y^2)=0~~,
\ee
which implies $\Delta y=0$ since $1-x^2-y^2>0$ for all points $u=x+\i
y\in \rD$. Using this, the second equation of \eqref{LambdaPCrit} reduces to
$y=0$, while the first equation becomes:
\ben
\label{Ex}
\Delta x^2+2x+\Delta=0~.
\een
Distinguish the cases:

\begin{enumerate}[1.]
\itemsep 0.0em
\item $\Delta=0$. Then $\Lambda$ is timelike, equation \eqref{Ex} gives
  $x=0$ and the only critical point of $\Lambda$ is $u_c=0$.
\item $\Delta\neq 0$ (hence $\Lambda$ is timelike or lightlike). Then \eqref{Ex}
  has real solutions iff $|\Delta|\leq 1$, in which case the two solutions
  are $x_\pm =\frac{-1\pm \sqrt{1-\Delta^2}}{\Delta}$. The case
  $\Delta=1$ leads to $x_+=x_-=1$, which is forbidden since the points
  $(x,y)=(1,0),(0,1)$ do not lie in the interior of the unit disk. Hence
  critical points inside $\rD$ can exist only if $|\Delta|<1$,
  i.e. when $\Lambda$ is timelike. In this case, we have
  $|x_+|<1<|x_-|$, so the point $(x,y)=(x_-,0)$ lies outside $\rD$,
  while $(x,y)=(x_+,0)$ lies inside $\rD$. We conclude that $\Lambda$
  is critical iff it is timelike, in which case the only critical
  point is at $u_c=\frac{\sqrt{1-\Delta^2}-1}{\Delta}$.
\end{enumerate}

\noindent Since $\Lambda$ satisfies the Hesse equation
$\Hess_{\cG}(\Lambda)=\frac{3}{8}\Lambda \cG$, it follows that its
Hessian at $u_c$ is positive definite when $B_0>0$ and negative
definite when $B_0<0$. Thus $u_c$ is a local minimum or maximum
according to the sign of $B_0$. Substituting $u_c$ in the expression
for $\Lambda$ gives \eqref{LambdaCritVal}.

The conclusion of the theorem now follows by performing a clockwise
rotation of the coordinates by $\theta_0$, in order to restore the
$\theta_0$-dependence in the position of the critical point.
\end{proof}

\subsection{Level sets of Hesse functions}

\noindent Relation \eqref{LambdamDCart} shows that the $\lambda$-level
set $\{u\in \rD\,|\,\Lambda_B(u)=\lambda\}$ of a non-trivial Hesse
function $\Lambda_B$ has the equation:
\ben
\label{LevelSet}
(B_0+\lambda)(x^2+y^2)+2B_1 x+2B_2 y+B_0-\lambda=0~~.
\een
We distinguish the cases:
\begin{enumerate}[1.]
  \itemsep 0.0em
\item $\lambda\neq -B_0$. Then \eqref{LevelSet} takes the form:
\ben
\label{circ}
(x-x_0)^2+(y-y_0)^2=\frac{\lambda^2-(B,B)}{(B_0+\lambda)^2}~~,
\een
where $x_0\eqdef -\frac{B_1}{B_0+\lambda}$ and
$y_0=-\frac{B_2}{B_0+\lambda}$. This equation has solutions only for
$\lambda^2\geq (B,B)$, in which case it describes a Euclidean circle
of radius $R\eqdef \frac{\sqrt{\lambda^2-(B,B)}}{|B_0+\lambda|}$
centered at the point:
\be
u_0=x_0+\i y_0=\frac{B^1+\i B^2}{B^0+\lambda}~~,
\ee
which is reduced to this point for $\lambda^2=(B,B)$. The radius $R$
tends to infinity for $\lambda=-B_0$, in which case the circle
degenerates to a line.

\item $\lambda=-B_0$. Then \eqref{LevelSet} takes the form:
\be
B_1 x+B_2 y=-B_0~~.
\ee
Since $\Lambda_B$ is non-trivial, existence of solutions to this
equation requires $B_1^2+B_2^2> 0$ i.e. $\sigma>0$, in which case the
equation describes a line in the $u$ plane which passes through the
points $u_1=-\frac{B_0}{B_1}$ and $u_2=-\i\frac{B_0}{B_2}$ of the one
point compactification of this plane. The relations $B_1=\sigma
\cos\theta_0$ and $B_2=\sigma \sin \theta_0$ give:
\be
u_1=-\frac{1}{\Delta\cos\theta_0}~~,~~u_2=-\frac{\i}{\Delta \sin\theta_0}~~.
\ee
and bring the equation to the form:
\be
\rho\,\cos(\theta-\theta_0)=-\frac{B_0}{\sigma}=-\frac{1}{\Delta}~~,
\ee
where the case $B_0=0$ is included for $\Delta\rightarrow \pm \infty$. 
\end{enumerate}

\noindent One can show that the Euclidean circles defined by equation
\eqref{circ} are contained inside $\rD$ iff $B$ is lightlike and that
they meet the Euclidean circle of radius one at one point when $B$ is
timelike and at two points when $B$ is spacelike. It follows that the
level sets of a timelike Hesse function are hyperbolic circles, while
they are horocycles for a lightlike Hesse function and hypercycles for
a spacelike Hesse function. These facts also follow more directly from
the similar statements satisfied by the level sets of the three
canonical Hesse functions discussed in Subsection \ref{subsec:PotD},
upon acting on those canonical forms with an orientation-preserving
isometry of $\mD$.

\paragraph{The vanishing locus of a Hesse function.}

Let:
\be
Z(\Lambda)\eqdef \{u\in \rD\,|\,\Lambda(u)=0\}
\ee
denote the set of zeroes of the Hesse function $\Lambda$. The proof of
the following statement follows by inspection of equation \eqref{lambda}.

\begin{prop}
A non-trivial Hesse function $\Lambda=\Lambda_B$ on the hyperbolic
disk has zeroes iff it is spacelike, lightlike or degenerate. In this
case, the vanishing locus of $\Lambda$ is a curve given by the
following quadratic equation in Euclidean Cartesian coordinates on
$\rD$:
\ben
\label{Zeq}
B_0(1+x^2+y^2)+2B_1 x+2 B_2 y=0~~.
\een
Moreover:
\begin{itemize}
\itemsep 0.0em
\item When $\Lambda$ is non-degenerate ($B_0\neq 0$), 
equation \eqref{Zeq} is equivalent with:
\ben
(x+b_1)^2+(y+b_2)^2=b_1^2+b_2^2-1 (\geq 0)~~,
\een
where $b_1=\frac{B_1}{B^0}$ and $b_2=\frac{B_2}{B^0}$.  When $\Lambda$
is non-degenerate spacelike ($b_1^2+b_2^2>1$), the vanishing locus is
a hypercycle which coincides with the intersection of $\rD$ with a
Euclidean circle of radius $\sqrt{b_1^2+b_2^2}$ centered at the point
$u_0=-b_1-\i b_2$, which lies outside of $\rD$. When $\Lambda$ is
non-degenerate lightlike ($b_1^2+b_2^2=1$), the vanishing locus
degenerates to the single point $u_0$, which lies on the conformal
boundary of $\mD$ (the unit Euclidean circle). In this case, the
function $\Lambda$ tends to zero at this point of the conformal
boundary.
\item When $\Lambda$ is degenerate ($B_0=0$) and hence spacelike, the
  vanishing locus coincides with the intersection of $\rD$ with the
  line obtained by rotating the $y$ axis counterclockwise by an angle
  equal to $\theta_0$. 
\end{itemize}
\end{prop}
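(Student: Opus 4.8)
The plan is to reduce the vanishing condition to a condition on the numerator of $\Lambda_B$ and then to analyze that condition twice, once in polar and once in Cartesian coordinates. Since $1-\rho^2>0$ everywhere on the open disk $\rD$, relation \eqref{LambdamDCart} shows that the zero set of $\Lambda_B$ coincides with the zero set of its numerator $B_0(1+\rho^2)+2B_1 x+2B_2 y$; setting this to zero is exactly equation \eqref{Zeq}. The trichotomy in the causal character of $B$ then falls out of a single inspection of the polar form \eqref{lambda}, which I would carry out first, while the explicit Euclidean geometry of the vanishing locus falls out of completing the square in \eqref{Zeq}, which I would do second.

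First I would treat the non-degenerate case, writing $\Lambda_B=B_0\lambda_{\Delta,\theta_0}$ with $\lambda_{\Delta,\theta_0}=\cosh(\beta r)+\Delta\sinh(\beta r)\cos(\theta-\theta_0)$ as in \eqref{lambda}, and recalling that $\Delta^2=b_1^2+b_2^2=\sigma^2/B_0^2$, so that $B$ is timelike, spacelike or lightlike according to $|\Delta|<1$, $|\Delta|>1$ or $|\Delta|=1$. Since $\cosh(\beta r)\geq 1>0$, the bracket cannot vanish at $r=0$, so for $r>0$ I would divide by $\sinh(\beta r)>0$ and rewrite the vanishing condition as $\coth(\beta r)=-\Delta\cos(\theta-\theta_0)$. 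As $r$ ranges over $(0,\infty)$ the left-hand side decreases bijectively from $+\infty$ onto $(1,\infty)$, while the right-hand side ranges over $[-|\Delta|,|\Delta|]$; hence a solution with finite $r$ exists iff $-\Delta\cos(\theta-\theta_0)>1$ for some $\theta$, i.e. iff $|\Delta|>1$. This already yields the \emph{iff} in the non-degenerate case, with the timelike case giving no zeros and the lightlike case ($|\Delta|=1$) forcing $\coth(\beta r)=1$, attained only in the limit $r\to\infty$. As an independent check on the timelike direction, the critical-point proposition shows that a timelike $\Lambda_B$ attains a global extremum equal to the nonzero value $\Lambda_c=B_0\frac{1+\Delta^2}{\sqrt{1-\Delta^2}}$, so $\Lambda_B$ has constant sign and no zeros.

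To obtain the Euclidean description I would complete the square in \eqref{Zeq}: dividing by $B_0\neq 0$ gives $(x+b_1)^2+(y+b_2)^2=b_1^2+b_2^2-1$, a Euclidean circle centered at $u_0=-b_1-\i b_2$ with squared radius $\Delta^2-1$. For $|\Delta|>1$ the center lies at distance $|\Delta|>1$ from the origin, hence outside $\rD$; to confirm the locus is a genuine nonempty hypercycle arc, I would note that the nearest and farthest points of this circle to the origin lie at distances $|\Delta|-\sqrt{\Delta^2-1}<1$ and $|\Delta|+\sqrt{\Delta^2-1}>1$, so the circle crosses the unit circle. For $|\Delta|=1$ the squared radius is zero and the locus collapses to the single point $u_0$ with $|u_0|=1$ on the conformal boundary; from the rewritten numerator $B_0[(x+b_1)^2+(y+b_2)^2]$ one reads off that $\Lambda_B$ keeps the sign of $B_0$ throughout $\rD$ and tends to $0$ only as $u\to u_0$. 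Finally, in the degenerate case $B_0=0$ the numerator is $2B_1 x+2B_2 y$, vanishing on the line $\cos\theta_0\,x+\sin\theta_0\,y=0$ (using $B_1=-\sigma\cos\theta_0$, $B_2=-\sigma\sin\theta_0$ with $\sigma>0$); since the origin lies in $\rD$ this is a nonempty diameter, namely the $y$-axis rotated counterclockwise by $\theta_0$, consistent with the vanishing of $\mu_{\theta_0}$ in \eqref{mu}.

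The main obstacle I anticipate is bookkeeping the lightlike case honestly: there $\Lambda_B$ has no interior zero, and the statement must be read as saying that the locus \eqref{Zeq} degenerates to a boundary point at which $\Lambda_B\to 0$. Keeping the distinction between an interior zero (spacelike, degenerate) and a boundary limit (lightlike) straight, and verifying in the spacelike case that the Euclidean circle genuinely meets the open disk rather than lying entirely outside it, are the only points requiring care; everything else is the routine reduction and completion of squares described above.
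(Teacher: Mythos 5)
Your proposal is correct and follows essentially the same route as the paper, which proves the statement ``by inspection of equation \eqref{lambda}'' combined with the completion of the square already carried out in the level-set analysis of \eqref{LevelSet}; your $\coth(\beta r)=-\Delta\cos(\theta-\theta_0)$ reformulation is just a careful way of performing that inspection. Your explicit handling of the lightlike case (no interior zero, only a boundary limit point) and the verification that the spacelike circle genuinely meets $\rD$ fill in details the paper leaves implicit, but introduce no new method.
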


\subsection{The scalar potential determined by a Hesse function}
\label{subsec:PotD}

\noindent In this subsection, we solve the $\Lambda$-$V$ equation
\eqref{LambdaV} for a general Hesse function $\Lambda\in \cS(\mD)$. We
shall do so by combining representation-theoretic arguments with the
method of characteristics. First, we notice that acting with an
appropriate element $U$ of the group $\PSU(1,1)$ (and possibly
rescaling by a constant) allows us to bring any non-trivial Hesse
function $\Lambda$ to one of three specific canonical forms, depending
on whether $\Lambda$ is timelike, spacelike or lightlike. We next
determine the scalar potential $V$ by solving the $\Lambda$-$V$
equation for each of these three canonical choices of
$\Lambda$. Finally, we act with the inverse of $U$ in order to recover
the form of $V$ for a general Hesse function of timelike, spacelike or
lightlike type. Equivalently, we write the scalar potentials for the
three canonical cases in manifestly Lorentz-invariant form, which allows us
to extend them to general Hesse functions of lightlike, spacelike and
timelike type.

\paragraph{Reduction to canonical cases.}

Let $V_B$ be the general solution of equation \eqref{LambdaV}, where
$\Lambda=\Lambda_B$ is a non-trivial Hesse function of $\mD$.
Relations \eqref{cVAction} and \eqref{LambdaIsoTf} imply:
\ben
\label{Vpsi}
V_B(u)=V_{\bAd_0(U)(B)}(\psi_U(u))~~,~~\forall B\in \R^3~
~\forall U\in \PSU(1,1)~~\forall u\in \rD~~.
\een
Moreover, the discussion of Subsection \ref{subsec:V} shows that the
general solution of the $\Lambda$-$V$ equation \eqref{LambdaV} is
unchanged when one rescales $\Lambda$ by a non-zero constant. These
observations imply the following:
\begin{itemize} \itemsep 0.0em
\item If $B$ is timelike or lightlike, there exists a proper
  orthochronous Lorentz transformation which brings $B$ to either
  of the following two forms:
\begin{itemize}
  \itemsep 0.0em
\item $B'=(C,0,0)$ with $C=\sign(B^0)\sqrt{(B,B)}$ (when $B$ is timelike)
\item $B'=(C,0,C)$ with $C=B^0$ (when $B$ is lightlike).
\end{itemize}
\item If $B$ is spacelike, there exists a proper orthochronous Lorentz
transformation (namely a spatial rotation) which brings $B$ to the
form $B'=(0,0,C)$, where $C=\sqrt{|(B,B)|}$.
\end{itemize}
Moreover, Remark \ref{rem:Vscale} of Subsection \ref{subsec:LambdaV}
shows that we can rescale $\Lambda$ by $1/C$ without changing
$V$. This allows us to further reduce to one of the thee {\em
  canonical cases} $B=B_\can\in \{(1,0,0),(1,0,1),(0,0,1)\}$. In each
of the three cases, we have $B'=C B_\can$ and:
\be
V_{B'}=V_{B_\can}=V_{B'/C}
\ee
as well as:
\be
V_B(u)=V_{B'}(\psi_U(u))~~,
\ee
where $B'=\bAd_0(U)(B)$ and $\bAd_0(U)\in \SO_0(1,2)$ (with $U\in
\PSU(1,1)$) is the corresponding Lorentz transformation.

In conclusion, we can reduce the problem of determining $V$ to the
three canonical cases $B=B_\can\in \{(1,0,0),(0,0,1),(1,0,1)\}$,
depending on whether $B$ is timelike, spacelike or lightlike. We next
study each case in turn.

\subsubsection{The case of timelike $\Lambda$}
\label{subsec:timelike}

\noindent In this case, there exists a proper and orthochronous
Lorentz transformation $\bAd_0(U)$ which brings $B=(B^0,B^1,B^2)$ to
the form $B'\eqdef\bAd_0(U)(B)=(C,0,0)=C B_\can$ with $C=\epsilon
\sqrt{(B,B)}$ (where $\epsilon\eqdef \sign(B_0)$) and
$B_\can=E_0=(1,0,0)$. We can take $U=U(t,a,0)\in \PSU(1,1)$, with
$t>0$ and $a$ determined by the relations:\footnote{The parameter $a$
 (see Appendix \ref{app:iso}) should not be confused 
with the scale factor $a(t)$.}
\ben
\label{paramtimelike}
t=\arccosh\left[\frac{|B^0|}{\sqrt{(B,B)}}\right]~,
~\cos(a)=-\epsilon\frac{B^1}{\sqrt{B_1^2+B_2^2}}~,
~\sin(a)=\epsilon\frac{B^2}{\sqrt{B_1^2+B_2^2}}~~.
\een

\paragraph{The canonical timelike Hesse function.}

For $B=B_\can=(1,0,0)$, we have $\Delta=0$ and the corresponding Hesse
function:
\ben
\label{LambdaCanTimelike}
\Lambda_{B_\can}(u)=\Lambda_{1,0,0}(u)=\Xi^0(u)
=\frac{1+\rho^2}{1-\rho^2}=\frac{2}{1-\rho^2}-1
\een
has a single critical point located at $u_c=0$, which is an absolute
minimum with $\Lambda_{1,0,0}(u_c)=1$; moreover, $\Lambda_{1,0,0}$
tends to $\infty$ at the conformal boundary of $\mD$.  For each
$\lambda\in [1,+\infty)$, the level set $\Lambda_{1,0,0}=\lambda$ is
the Euclidean circle centered at the origin of radius
$R_\lambda=\sqrt{\frac{\lambda-1}{\lambda+1}}$, which varies from
$R_1=0$ to $R_\infty=1$ as $\lambda$ increases from $1$ to
$+\infty$. The level sets are hyperbolic circles, since they are all
contained inside $\rD$. 

\begin{figure}[H]
\centering \centering
\includegraphics[width=0.39\linewidth]{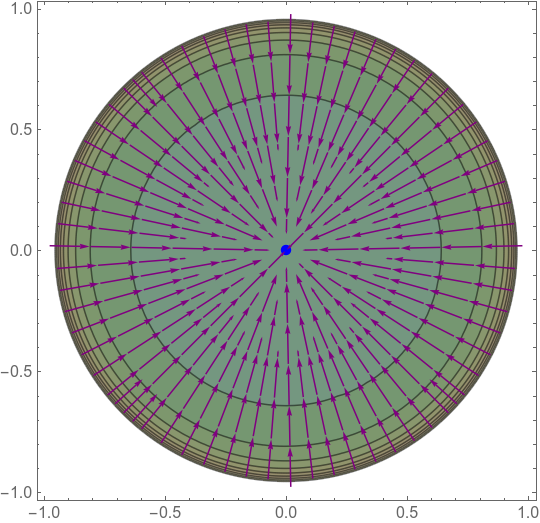}
\vskip 0.45em
\caption{Contour plot of the canonical timelike Hesse function
  $\Lambda_{1,0,0}$. The values of the function decrease from
  lightest brown to darkest green. The critical point is shown as a
  blue dot. The gradient vector field is shown in purple. This coincides with
  Figure \ref{fig:LambdaTimelike0}, which we recall here for convenience of the reader.}
\label{fig:LambdaCanTimelike}
\end{figure}

\paragraph{The scalar potential in the canonical timelike case.}

The gradient flow equations of $\Lambda_{1,0,0}$ with respect to the
metric $\cG$ have the following form in polar Euclidean coordinates
$(\rho,\theta)$:
\beqan
\label{Timelikegf}
\frac{\dd \rho}{\dd q}=-\beta^2 \rho~~,~~\frac{\dd\theta}{\dd q}=0~~,
\eeqan with the solution $\theta=\const$ and $\rho=e^{-\beta^2 q}$,
where $q\in (0,+\infty)$ and we chose the integration constant such
that $\rho\rightarrow 1$ for $q\rightarrow 0$.  Hence the gradient
flow curves of $\Lambda_{1,0,0}$ are straight line segments flowing
from the conformal boundary to the origin of $\mD$ as $q$ varies from
$0$ to $+\infty$ (see Figure \ref{fig:LambdaCanTimelike}). Let
$\gamma_\theta$ denote the gradient flow line of polar angle $\theta$.

For $\Lambda=\Lambda_{1,0,0}$, relation \eqref{dqdlambda} becomes:
\be
\frac{\dd q}{\dd\lambda}=-\frac{1}{||\dd\Lambda_{1,0,0}||_\cG^2}=-\frac{1}{\beta ^2 \left(\lambda ^2-1\right)}~~.
\ee
Along the gradient flow curve $\gamma_\theta$, we have:
\be
\lambda=\frac{1+\rho^2}{1-\rho^2}\Longleftrightarrow \rho=\frac{\sqrt{\lambda -1}}{\sqrt{\lambda +1}}
\ee
and:
\be
\int_{(\gamma)}^\lambda \frac{\lambda'\dd\lambda'}{||\dd\Lambda_{1,0,0}||_\cG^2}=\frac{1}{2\beta^2}\log
  \left(\lambda ^2-1\right)+\cC(\theta)~~,
\ee
where $\gamma(\lambda)=\rho\cos\theta+\i \rho\sin\theta$ and where
$\cC(\theta)$ is a constant of integration which can depend on $\theta$
in a $2\pi$-periodic manner. Relation \eqref{Vgamma} gives:
\ben
\label{VCanTimelike}
V_{B_\can}(\rho,\theta)=\omega(\theta)(\Lambda_{B_\can}(\rho)^2-1)=\omega(\theta) \frac{4\rho^2}{(1-\rho^2)^2}~~,
\een
where $\omega(\theta)\eqdef e^{2\beta^2 \cC(\theta)}$ is a positive
$2\pi$-periodic smooth function of $\theta$.  

\paragraph{Accidental visible symmetries in the canonical timelike case.}

It is clear that $V_{1,0,0}$ is invariant under a continuous subgroup of
$\Iso_+(\mD)\simeq\PSU(1,1)$ iff $\omega$ is independent of $\theta$,
in which case $V_{1,0,0}$ is stabilized by the $\U(1)$ subgroup $\cR$
corresponding to rotations of the disk around its origin (see Appendix
\ref{app:iso}). The image of this subgroup in the adjoint
representation $\bAd_0$ coincides with the $\SO(2)$ group of spatial
rotations which stabilizes the timelike 3-vector $B_\can=(1,0,0)$ in
the Lorentz group. Hence the Hessian two-field model defined by
$V_{1,0,0}$ also admits visible symmetries iff $\omega$ is independent
of $\theta$, in which case the space of infinitesimal visible
symmetries is one-dimensional and generated by the vector field
$\pd_\theta$.

\paragraph{The scalar potential when $B=B'=C B_\can$.}
Recalling the relations $V_{B'}=V_{B_\can}$ for $B'=C B_\can$ as well
as $\Lambda_{B_\can}=\Lambda_{B'}/C$ (where
$C=\epsilon\sqrt{(B',B')}$, relation \eqref{VCanTimelike} gives:
\ben
\label{Vtimelike}
V_{B'}(\rho,\theta)=\omega(\theta)\left[\frac{\Lambda_{B'}
(\rho)^2}{(B',B')}-1\right]~~(\mathrm{when}~~B'_1=B'_2=0)~~.
\een

\paragraph{Lorentz-invariant form of the scalar potential.}

When $B=B'=CE_0$, the polar angle $\theta$ on the hyperbolic disk
parameterizes the unit spacelike vector $n_B(u)\eqdef \cos\theta
E_1+\sin\theta E_2$ obtained by normalizing the projection
$\Xi_B(u)$ of $\Xi(u)$ onto the spacelike plane orthogonal to $B$
(see Figure \ref{fig:ProjTimelike}), which in this case is spanned by
the three-vectors $E_1$ and $E_2$.

We have:
\be
\Xi_B(u)=\Xi(u)-\frac{(B,\Xi(u))B}{(B,B)}~~,
\ee
and $n_B(u)=\frac{\Xi_B(u)}{||\Xi_B(u)||_E}$, where the
Euclidean norm of $\Xi_B(u)$ is given by:
\be
||\Xi_B(u)||_E=\sqrt{-(\Xi_B(u),\Xi_B(u))}=\sqrt{\frac{(B,\Xi(u))^2}{(B,B)}-1}~~,
\ee
where we used the relation $(\Xi(u),\Xi(u))=1$. Combining these formulas gives:
\ben
\label{ntimelike}
n_B(u)=\frac{(B,B)\Xi(u)-(B,\Xi(u)) B}{\sqrt{(B,B)(B,\Xi(u))^2-(B,B)^2}}
=\frac{(B,B)\Xi(u)-B\Lambda_B(u)}{\sqrt{(B,B)\Lambda_B(u)^2-(B,B)^2}}~~.
\een
Since this relation is manifestly Lorentz invariant, it is valid not
only for $B=B'$, but also for any lightlike vector $B$. In particular,
$\omega$ can be viewed as a function of the unit spacelike vector
$n_B$ and relation \eqref{Vtimelike} can be written in the manifestly
Lorentz-invariant form:
\ben
\label{VInvTimelike}
V_B(u)=\omega(n_B(u))\left[\frac{\Lambda_B(u)^2}{(B,B)}-1\right]~~.
\een
Direct computation using \eqref{LambdaXi} gives:
{\scriptsize
\ben
\label{Sq1}  
\frac{\Lambda_B(u)^2}{(B,B)}-1=\frac{B_1^2(1+2x^2-2y^2+\rho^4)+B_2^2(1-2x^2+2y^2+\rho^4)
+4B_0^2\rho^2+4B_0 (1+\rho^2)(B_1x+B_2y)+8 B_1 B_2 xy}{(B,B)(1-\rho^2)^2}~~.
\een}

\begin{figure}[H]
\centering \centering
\includegraphics[width=0.4\linewidth]{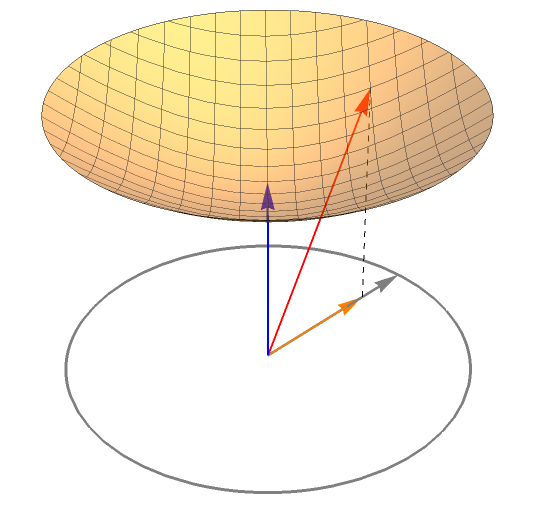}
\vskip 0.42em
\caption{The unit spacelike vector $n_B(u)$ (shown in gray) determined
  by the unit timelike 3-vector $\Xi(u)$ (shown in red) and a timelike
  3-vector $B$ (shown in blue); the projection $\Xi_B(u)$ is shown in
  orange. We also show the future sheet of the unit hyperboloid. As
  $u$ varies in $\rD$, the vector $n_B(u)$ describes a circle of unit
  radius (shown in gray) contained in the spacelike plane orthogonal
  to $B$; one can think of the function $\omega$ as being defined on
  this circle. The figure shows the case when $B$ is future-pointing
  with $(B,B)=1$.}
\label{fig:ProjTimelike}
\end{figure}

\begin{remark}
Equation \eqref{Vpsi} shows that the general solution of
\eqref{LambdaV} for a Hesse function of timelike parameter $B$ is
obtained by acting on $\mD$ with the $\PSU(1,1)$ transformation
$U=U(t,a,0)$, with $a$ and $t$ given in \eqref{paramtimelike}:
\be
V_B(u)=V_{B'}(\psi_U(u))~~,
\ee
where $B'=\bAd_0(U)(B)$. This amounts to replacing $(\rho,\theta)$ in
expression \eqref{Vtimelike} by polar semi-geodesic coordinates
$(\trho,\ttheta)$ centered at the critical point $u_c$ of $\Lambda_B$.
In these new coordinates, the curves $\trho=\const$ (which coincide
with the level sets of $\Lambda_B$) are hyperbolic circles with center
$u_c$, while the curves $\ttheta=\const$ (which coincide with the
gradient flow curves of $\Lambda_B$) are hyperbolic geodesics
orthogonal to these hyperbolic circles and passing through $u_c$ (see
Figure \ref{fig:LambdaTimelike}). We have:
\ben
\label{VTimelike}
V_B(x,y)=\omega(\ttheta(x,y))\left[\frac{\Lambda_B(u)^2}{(B,B)}-1\right]~~,
\een
where:
\be
\ttheta(x,y)=\arg\left[\frac{ \sign(B_0) \,(B_1 - \i B_2) \,(x+\i y) 
+ \left( |B_0| - \sqrt{(B,B)} \right)}{ \left(|B_0| - \sqrt{(B,B)}\right) (x+\i y)
+ \sign(B_0) \,(B_1+\i B_2)}\right]~~.
\ee
\end{remark}

\paragraph{Accidental visible symmetries in the general timelike case.}

The potential \eqref{VTimelike} is stabilized by a non-trivial
continuous subgroup of $\PSU(1,1)\simeq \Iso_+(\mD)$ (and hence the
corresponding cosmological model also admits visible symmetries) iff
the function $\omega$ is constant on the unit circle. In this case,
the group of visible symmetries of the model coincides with the
stabilizer of $V_B$ in $\Iso_+(\mD)$. This is an elliptic $\U(1)$
subgroup of $\PSU(1,1)$ which identifies with the stabilizer of the
3-vector $B$ under the adjoint representation:
\be
\Stab_{\PSU(1,1)}(V_B)=\Stab_{\PSU(1,1)}(B)\simeq \U(1)
\ee
and is conjugate to the canonical rotation subgroup $\cR$ 
through the adjoint action of the group element
$U=U(t,a,0)$:
\be
\Stab_{\PSU(1,1)}(V_B)=U^{-1}\cR U~~.
\ee

\subsubsection{The case of spacelike $\Lambda$}
\label{subsec:spacelike}

\noindent In this case, there exists $U=U(t,a,\frac{\pi}{2}-a)\in
\PSU(1,1)$ such that $\bAd_0(U)(B)=B'=C B_\can$, where
$B_\can=(0,0,1)=E_2$, $C=\sqrt{|(B,B)|}$ and the parameters $t,a$ are
determined by the relations:
\ben
\label{paramspacelike}
t=\!-\arcsinh\left(\!\frac{B^0}{\sqrt{|(B,B)|}}\!\right),
~\sin(2a)\!=\!\frac{B^1}{\sqrt{B_1^2+B_2^2}} ~,~\cos(2a)\!=\!\frac{B^2}{\sqrt{B_1^2+B_2^2}}~.
\een

\paragraph{The canonical spacelike Hesse function.}

We have:
\ben
\label{LambdaCanSpacelike}
\Lambda_{B_\can}(u)=\Lambda_{0,0,1}(u)=-\Xi^2(u)=-\frac{2y}{1-\rho^2}~~.
\een
This Hesse function has no critical point on $\rD$ (see Figure
\ref{fig:LambdaCanSpacelike}). It
vanishes along the horizontal segment $(-1,1)$ defined by $y=0$, being
positive in the lower half plane (where it tends to $+\infty$ for
$\rho\rightarrow 1$) and negative in the upper half plane (where it
tends to $-\infty$ for $\rho\rightarrow 1$). For each $\lambda\in
\R\setminus \{0\}$, the level set $\Lambda_{0,0,1}=\lambda$ is the
intersection with $\rD$ of the circle with center
$u=\frac{\i}{\lambda}$ and radius
$R_\lambda=\sqrt{1+\frac{1}{\lambda^2}}$, which is the hypercycle
consisting of all points of $\mD$ located at signed hyperbolic
distance $d_\lambda=-\arcsinh(\lambda)$ from the axis $(-1,1)$.

\paragraph{Fermi coordinates with axis $(-1,1)$.}
  
To describe the gradient flow lines of \eqref{LambdaCanSpacelike}, it is
convenient to pass to {\em hypercyclic (a.k.a. Fermi) coordinates}
$(\tau,\sigma)$ with axis $(-1,1)$. These are semi-geodesic
coordinates defined through:\footnote{The Fermi coordinate $\sigma$
 should not be mistaken with $\sigma=\sqrt{B_1^2+B_2^2}$ defined in \eqref{Bdef}.}
\beqa
&&\sign(y)\sigma\!\eqdef \!\arccosh\sqrt{1\!+\!\frac{4y^2}{(1\!-\!\rho^2)^2}}
\!=\!\arccosh\sqrt{\Xi^0(u)^2\!-\!\Xi^1(u)^2}\!=\!\arccosh\sqrt{1\!+\!\Xi^2(u)^2}~~\nn\\
&&\!\!\!\tau\!\eqdef\! \arcsinh\!\left(\!\frac{2x}{\sqrt{4y^2\!+\!(1\!-\!\rho^2)^2}}
\!\right)\!\!=\!\arcsinh\!\left(\!\frac{\Xi^1(u)}{\sqrt{\Xi^0(u)^2\!-\!\Xi^1(u)^2}}\!\right)\!\!
=\!\arcsinh\left(\!\frac{\Xi^1(u)}{\sqrt{1\!+\!\Xi^2(u)^2}}\!\right)~~
\eeqa
i.e.:
\be
\Xi^0(u)=\cosh\sigma\cosh \tau~~,~~\Xi^1(u)=\cosh\sigma\sinh\tau~~,
~~\Xi^2(u)=\sinh\sigma~~.
\ee

\begin{figure}[H]
\centering
\begin{minipage}{.45\textwidth}
\centering
\includegraphics[width=\linewidth]{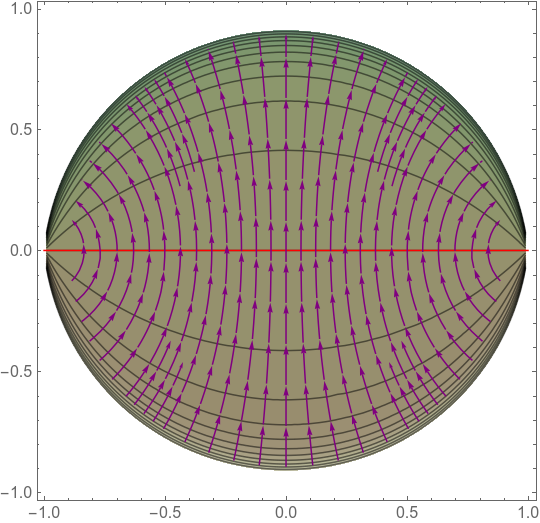}
\vskip 0.45em
\subcaption{$\Lambda_{0,0,1}$ on the hyperbolic disk.}
\label{fig:LambdaCanSpacelikeDisk}
\end{minipage}\hfill
\begin{minipage}{.45\textwidth}
 \centering \includegraphics[width=\linewidth]{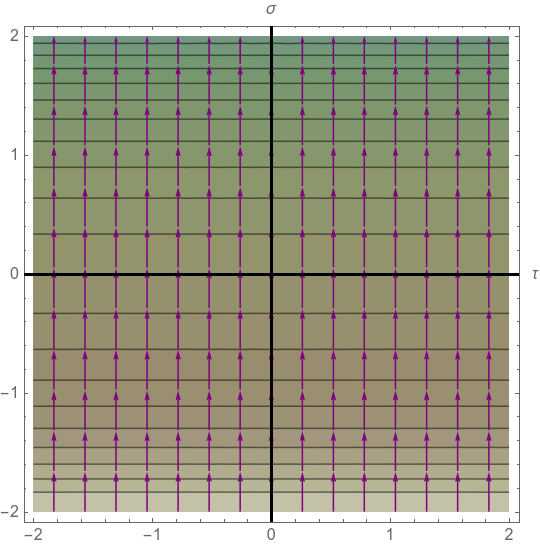}
\vskip 0.45em
\subcaption{$\Lambda_{0,0,1}$ in Fermi coordinates.}
\label{fig:LambdaCanSpacelikeFermi}
\end{minipage}\hfill 
\caption{Contour plot of the canonical spacelike Hesse function
  $\Lambda_{0,0,1}$ and of its gradient flow. The values of the
  function decrease from lightest brown to darkest green. The figure 
  to the left is Figure \ref{fig:LambdaDeg} recalled here for convenience of 
the reader. The vanishing locus of the function is the horizontal segment shown in
  red. The gradient vector field is shown in purple. The figure to the
  right shows only a portion of the $(\tau,\sigma)$-plane.}
\label{fig:LambdaCanSpacelike}
\end{figure}

\noindent In Fermi coordinates, the metric $\cG=G/ \beta^2$ takes the form:
\be
\dd s^2_\cG=\frac{1}{\beta^2}\left[\dd\sigma^2+\cosh^2(\sigma)\dd\tau^2\right]~~.
\ee
We have $\sign(\sigma(u))=\sign(y)=\sign(\Xi^2(u))$ and
$\tau,\sigma\in \R$. The curves $\sigma=\const$ are hypercycles with
axis given by the horizontal geodesic $(-1,1)$, while the curves
$\tau=\const$ are hyperbolic geodesics orthogonal to these hypercycles
(and hence also orthogonal to the $x$-axis).  In these coordinates,
the point $u=0$ corresponds to $(\tau,\sigma)=(0,0)$ while the
conformal boundary of $\mD$ corresponds to $\sigma^2+\tau^2\rightarrow
\infty$, being mapped to a circle at infinity of the
$(\tau,\sigma)$-plane. The $y$-axis $x=0$ corresponds to the line
$\tau=0$ while the $x$-axis $y=0$ corresponds to the line
$\sigma=0$. The squeeze transformation $T(t)\in \PSU(1,1)$ acts by:
\be
\sigma\rightarrow \sigma~~,~~\tau\rightarrow \tau+t~~.
\ee
In particular, the hypercycles with axis $(-1,1)$ are the orbits of
the squeeze subgroup $\cT$ of $\PSU(1,1)$ under the action of the
latter on $\mD$ by fractional transformations. Since
$\Lambda_{0,0,1}=-\sinh(\sigma)$, the level sets of the canonical
spacelike Hesse function coincide with the curves $\sigma=\const$,
which are hypercycles located at signed hyperbolic distance
$\sigma$ from the $x$-axis.

\paragraph{The scalar potential in the canonical spacelike case.}

The gradient flow equations of $\Lambda_{0,0,1}$ take the following
form in hypercyclic coordinates:
\be
\frac{\dd \sigma}{\dd q}=\beta^2 \cosh(\sigma)~~,~~\frac{\dd\tau}{\dd q}=0~~,
\ee
with the solution $\tau=\const$ and
$\sigma=2\arctanh\left[\tan\left(\frac{\beta^2
    q}{2}\right)\right]=\arcsinh(\tan(q))$, where $q$ runs in the
interval $(-\pi/2,\pi/2)$ and we chose $q=0$ to correspond to
$\sigma=0$, i.e. to the unique point $u_\tau=\tanh(\tau/2)\in (-1,1)$
where the gradient flow curve corresponding to $\tau$ intersects the
$x$-axis.  We denote by $\gamma_\tau$ this gradient flow curve.

We have $\lambda=-\sinh(\sigma)$ and equation
\eqref{dqdlambda} becomes:
\be
\frac{\dd q}{\dd\lambda}=-\frac{1}{||\dd\Lambda_{0,0,1}||_\cG^2}= 
-\frac{1}{\beta ^2 \left(\lambda ^2+1\right)} ~~.
\ee
Along the gradient flow curve $\gamma_\tau$ which passes through the point $u=x+\i
y=\rho(\cos\theta+\i\sin\theta)\in \rD$, we have:
\be
\int_{(\gamma)}^{\lambda} \frac{\lambda'\dd\lambda'}{||\dd\Lambda_{0,0,1}||_\cG^2}
=\frac{1}{2\beta^2}\log
\left(\lambda ^2+1\right) +\cC(\tau)~~,
\ee
where $\gamma(\lambda)=u$ and where $\cC(\tau)$ is a constant of
integration which can depend on
$\tau=\arcsinh\left(\frac{2x}{\sqrt{4y^2+(1-\rho^2)^2}}\right)$.
Equation \eqref{Vgamma} gives:
\ben
\label{VFermi}
V_{B_\can}=\omega(\tau)\cosh^2(\sigma)=\omega(\tau)(1+\Lambda_{B_\can}(\sigma)^2)~~,
\een
i.e.:
\ben
\label{VCanSpacelike}
\!\!\!\!\!\!V_{B_\can}\!(u)\!=\!
\omega(\tau)\!\left[1\!+\!\Lambda_{B_\can}(\sigma)^2\right]\!\!
=\omega(\tau)\frac{1\!-\!2x^2\!+\!2y^2\!+\!\rho^4}{(1\!-\!\rho^2)^2}~~,
\een
where $\omega\in \cC^\infty(\R)$ is a positive smooth real-valued
function defined through $\omega(\tau)=e^{2\beta^2 \cC(\tau)}$.

\paragraph{Accidental visible symmetries in the canonical spacelike case.}

It is clear that $V_{B_\can}$ is invariant under a continuous subgroup
of isometries of $\mD$ iff $\omega$ is independent of $\tau$, in which
case the stabilizer of $V_{B_\can}$ coincides with the squeeze subgroup
$\cT$ of $\PSU(1,1)$.  This corresponds to the group of boosts
$\Ad_0(T(t))$ in the two-plane $(X^0,X^1)$ of the Minkowski space
$\R^{1,2}$ (see Appendix \ref{app:iso}) which stabilize the 3-vector
$B_\can\!=\!(0,0,1)$. This subgroup is isomorphic with $(\R,+)$. Hence the
Hessian two-field model defined by $V_{B_\can}$ also admits visible
symmetries iff $\omega$ is independent of $\tau$, in which case the
group of visible symmetries coincides with $\cT$.

\paragraph{The scalar potential when $B=B'=C B_\can$.}
Recalling the relations $V_{B'}=V_{B_\can}$ for $B'=C B_\can$ as well
as $\Lambda_{B_\can}=\Lambda_{B'}/C$ (where $C=\sqrt{|(B',B')|}$),
equation \eqref{VCanSpacelike} gives:
\ben
\label{Vspacelike}
V_{B'}(\tau,\sigma)=\omega(\tau)\left[1+\frac{\Lambda_{B'}(\sigma)^2}{|(B',B')|}\right]~
~(\mathrm{when}~~B'_0=B'_1=0)~~.
\een

\paragraph{Lorentz-invariant form of the scalar potential.}

When $B=B_\can=(0,0,1)$, the hyperbolic angle $\tau$ parameterizes the
unit timelike vector $n_B(u)=(\cosh \tau) E_0+(\sinh\tau) E_1$, which lies
in the direction of the projection $\Xi_B(u)=\Xi^0(u) E_0+\Xi^1(u)
E_1$ of $\Xi(u)$ onto the Minkowski plane orthogonal to
$B$ (see Figure \ref{fig:ProjSpacelike}).

\begin{figure}[H]
\centering \centering
\includegraphics[width=0.45\linewidth]{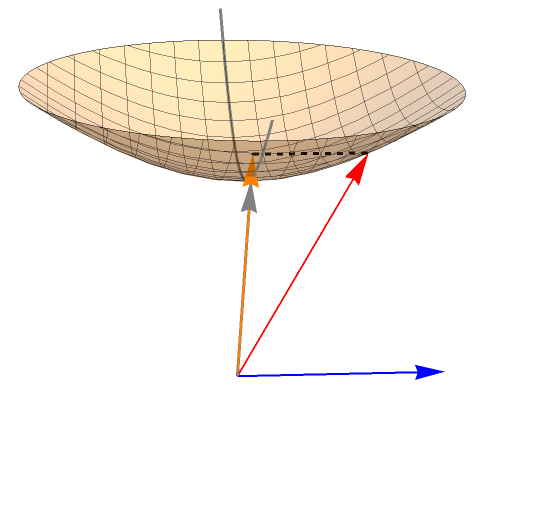}
\vskip -2.1em
\caption{The unit timelike vector $n_B(u)$ (shown in gray) determined
  by the unit timelike vector $\Xi(u)$ and by the spacelike 3-vector
  $B$ (shown in blue). The timelike 3-vector $\Xi(u)$ is shown in red,
  while $\Xi_B(u)$ is shown in orange. As $u$ varies in $\rD$, the
  vector $n_B(u)$ describes the hyperbola (shown in gray) obtained by
  intersecting the future sheet of the unit hyperboloid with the
  Minkowski plane orthogonal to $B$; one can think of the function
  $\omega$ as being defined on this hyperbola. }
\label{fig:ProjSpacelike}
\end{figure}

\noindent We have:
\be
\Xi_B(u)=\Xi(u)+\frac{(B,\Xi(u)) B}{|(B,B)|}~~,
\ee
which gives:
\be
(\Xi_B(u),\Xi_B(u))=1+\frac{(B,\Xi(u))^2}{|(B,B)|}~~,
\ee
where we used the relation $(\Xi(u),\Xi(u))=1$. Thus:
\ben
\label{nspacelike}
n_B(u)\!=\!\frac{\Xi_B(u)}{\sqrt{\!(\Xi_B(u),\!\Xi_B(u))}}\!=\!\frac{|(B,\!B)|\,\Xi(u)
\!+\!(B,\Xi(u)) B}{\sqrt{\!(B,\!B)^2\!+\!|(B,\!B)|(B,\Xi(u))^2}}\!=\!
\frac{|(B,\!B)|\,\Xi(u)\!+\!\Lambda_B(u) B}{\sqrt{\!(B,\!B)^2\!+\!|(B,\!B)|\Lambda_B(u)^2}}.
\een
Since this relation is manifestly Lorentz invariant, it is valid not
only for $B=B_\can$ but also for any spacelike vector $B$. In
particular, $\omega$ can be viewed as a function of the unit timelike
vector $n_B$ and relation \eqref{Vspacelike} can be written in the
manifestly Lorentz-invariant form:
\ben
\label{VInvSpacelike}
V_B(u)=\omega(n_B(u))\left[1+\frac{\Lambda_B(u)^2}{|(B,B)|}\right]
=-\omega(n_B(u))\left[\frac{\Lambda_B(u)^2}{(B,B)}-1\right]~~,
\een
where the quantity $\frac{\Lambda_B^2}{(B,B)}-1$ has the form given in \eqref{Sq1}.

\begin{remark}
Equation \eqref{Vpsi} shows that the general solution of the
$\Lambda$-$V$ equation \eqref{LambdaV} for a Hesse function of
spacelike parameter $B$ is obtained by acting on $\mD$ with the
$\PSU(1,1)$ transformation $U=U(t,a,\pi/2-a)$, with $t$ and $a$ given
in \eqref{paramspacelike}.  This amounts to replacing $(\tau,\sigma)$
in expression \eqref{VFermi} by hypercyclic coordinates
$(\ttau,\tsigma)$ with axis given by the vanishing locus $Z_B$ of the
spacelike Hesse function $\Lambda_B$ (which is a hyperbolic
geodesic). In the new coordinates, the curves $\tsigma=\const$ (which
coincide with the level sets of $\Lambda_B$) are hypercycles with axis
$Z_B$ while the curves $\ttau=\const$ (which coincide with the
gradient flow curves of $\Lambda_B$) are hyperbolic geodesics
orthogonal to these hypercycles (see Figure
\ref{fig:LambdaSpacelike}). We have:
\ben
\label{VSpacelike}
V_B(x,y)=\omega(\ttau(x,y))\left[1+\frac{\Lambda_B(x,y)^2}{|(B,B)|}\right]~~.
\een
with:
{\scriptsize
  \ben
\label{ttauSpacelike}
\sinh\ttau(x,y)\!=\!\frac{2 \sqrt{|(B,B)|} (B_1 y\!-\!B_2x)}{\sqrt{(B_1^2\!+\!B_2^2)
    \left[B_1^2(1\!+\!2x^2\!-\!2y^2\!+\!\rho^4)\!+\!B_2^2(1\!-\!2x^2\!+\!2y^2\!+\!\rho^4)\!+\!4B_0^2\rho^2
\!+\!4B_0(1\!+\!\rho^2)(B_1x\!+\!B_2y)\!+\!8B_1B_2xy\right]}}
\een
}
\end{remark}

\paragraph{Accidental visible symmetries in the general spacelike case.}

It is clear that the potential \eqref{VSpacelike} is stabilized by a
non-trivial continuous subgroup of $\Iso_+(\mD)\simeq \PSU(1,1)$ (and
hence the corresponding cosmological model also admits visible
symmetries) iff the function $\omega$ is constant.  In this case, the
group of visible symmetries of the model coincides with the stabilizer
of $V_B$. This is a hyperbolic subgroup isomorphic with $(\R,+)$ which
identifies with the stabilizer of the spacelike 3-vector $B$ under the
adjoint representation:
\be
\Stab_{\PSU(1,1)}(V_B)=\Stab_{\PSU(1,1)}(B)\simeq (\R,+)
\ee
and is conjugate to the squeeze subgroup $\cT$ of $\PSU(1,1)$:
\be
\Stab_{\PSU(1,1)}(V_B)=U^{-1} \cT U~~,
\ee
where $U=U(t,a,\pi/2-a)$.

\subsubsection{The case of lightlike $\Lambda$}
\label{subsec:lightlike}

\noindent In this case, there exists $U=U(0,a,0)\in \PSU(1,1)$ such
that $\bAd_0(U)(B)=(C,0,C)$, where $C=B^0$ and $a$ is determined by
the relations:
\ben
\label{paramlightlike}
\sin(2a)=\frac{B^1}{B^0}~~,~~\cos(2a)=\frac{B^2}{B^0}~~.
\een
\noindent Using \eqref{Vpsi}, we can thus always reduce to the case
$B=B'=(C,0,C)$, while a rescaling of $\Lambda$ allows us to further
reduce to the case $B=B_\can=(1,0,1)=E_0+E_2$.

\paragraph{The canonical lightlike Hesse function.}

We have:
\ben
\label{LambdaCanLightlike}
\Lambda_{B_\can}(u)=\Lambda_{1,0,1}(u)=\Xi^0(u)-\Xi^2(u)=\frac{\rho^2-2y+1}{1-\rho^2}~~.
\een
This Hesse function has no critical points on $\rD$ and is positive
everywhere inside $\rD$ (see Figures \ref{fig:LambdaCanLightlike} and
\ref{fig:LambdaLightlike}). It tends to $+\infty$ at all
points of the conformal boundary of $\mD$ except for the point
$u_0=\i$, where it tends to zero. For any $\lambda\in (0,+\infty)$,
the level set $\Lambda_{1,0,1}=\lambda$ is a horocycle with center
$u_0=\i$.

\begin{figure}[H]
\centering
\begin{minipage}{.45\textwidth}
\centering
\includegraphics[width=\linewidth]{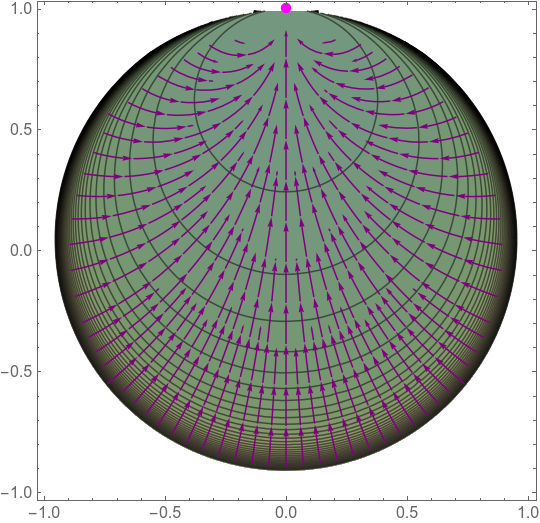}
\vskip 0.45em
\subcaption{$\Lambda_{1,0,1}$ on the hyperbolic disk.}
\label{fig:LambdaLightlikeDisk}
\end{minipage}\hfill
\begin{minipage}{.45\textwidth}
 \centering \includegraphics[width=\linewidth]{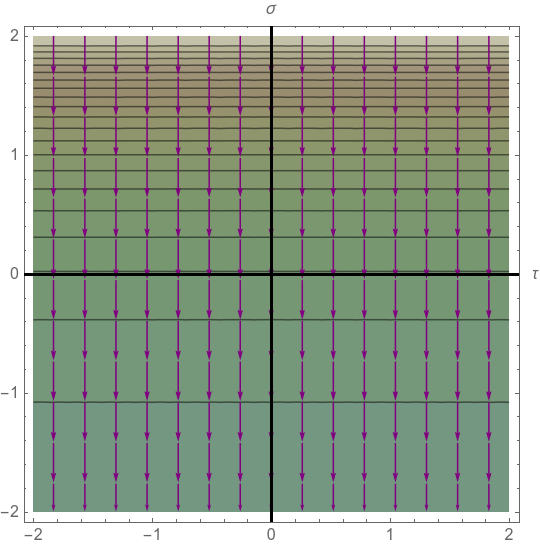}
\vskip 0.45em
\subcaption{$\Lambda_{1,0,1}$ in horocyclic coordinates.}
\label{fig:LambdaLightlikeHor}
\end{minipage}\hfill 
\caption{Contour plot of the canonical lightlike Hesse function
  $\Lambda_{1,0,1}$ and of its gradient flow. The values of the
  function decrease from lightest brown to darkest green. In the
  figure on the left (which coincides with Figure \ref{fig:LambdaLightlike}), 
the single point on the conformal boundary of $\mD$
  where the function tends to zero is shown as a red dot. The gradient
  vector field is shown in purple. The figure to the right shows only
  a portion of the $(\tau,\sigma)$-plane.}
\label{fig:LambdaCanLightlike}
\end{figure}

\paragraph{Horocyclic coordinates centered at $\i$.}

To describe the gradient flow of $\Lambda_{1,0,1}$, it is convenient
to pass to {\em horocyclic coordinates} (which we again denote by
$(\tau,\sigma)$) centered at $u=\i$. These are the hyperbolic polar
geodesic coordinates defined through:
\beqan
\sigma &\eqdef& \log \left(\frac{\rho^2-2y+1}{1-\rho^2}\right)=\log\left[\Xi^0(u)
-\Xi^2(u)\right]\nn\\
\tau &\eqdef& \frac{2x}{\rho^2-2y+1}=\frac{\Xi^1(u)}{\Xi^0(u)-\Xi^2(u)}~~,
\eeqan
i.e.:
\be
\Xi^0(u)=\frac{1}{2}\tau^2e^\sigma+\cosh\sigma~~,~\Xi^1(u)=\tau e^\sigma~~,
~~\Xi^2(u)=\frac{1}{2}\tau^2e^\sigma-\sinh\sigma~~.
\ee
In particular, we have
$\Lambda_{B_\can}(u)=\Xi^0(u)-\Xi^2(u)=e^\sigma$. In horocyclic
coordinates, the metric $\cG = G/ \beta^2$ takes the
form:
\be
\dd s^2_\cG=\frac{1}{\beta^2}(\dd \sigma^2+e^{2\sigma}\dd \tau^2)~~.
\ee
In these coordinates, the curves $\sigma=\const$ are the horocycles
centered at $u=\i$ while the curves $\tau=\const$ are the geodesics
normal to these horocycles, which have $\i$ as a limit point. In coordinates
$(\tau, \sigma)$, the disk $\rD$ is mapped to the entire plane $\R^2$,
the conformal boundary of $\mD$ corresponding to a circle at
infinity. The origin of the disk corresponds to the origin of the
$(\tau,\sigma)$-plane. The $y$-axis $x=0$ is mapped to the
$\sigma$-axis $\tau=0$, while the Euclidean circle of radius $1/2$
centered at $u=\frac{\i}{2}$ (which is a horocycle of $\mD$) is mapped
to the $\tau$-axis $\sigma=0$. The interior of this horocycle is
mapped to the half-plane $\sigma<0$, while its exterior is mapped to
the half-plane $\sigma>0$; moreover, the limit $u\rightarrow \i$
corresponds to $\sigma\rightarrow -\infty$ and $\tau\rightarrow \pm
\infty$, where $\tau\rightarrow +\infty$ if $u$ approaches $\i$ from
the half-disk defined by $\Re u>0$ and $\tau\rightarrow -\infty$ if
$u$ approaches $\i$ from the half-disk defined by $\Re u<0$. The
horocycles with center $\i$ correspond to the curves $\sigma=\const$,
while the hyperbolic geodesics which asymptote to $\i$ correspond to
the curves $\tau=\const$. The shear transformation $P(\kappa)\in
\SU(1,1)$ acts as:
\be
\sigma\rightarrow \sigma~~,~~\tau\rightarrow \tau+2\kappa~~.
\ee
In particular, the horocycles centered at $\i$ coincide with the
orbits of the shear subgroup $\cP$ under the action by fractional
transformations.  Since $\Lambda_{1,0,1}=e^\sigma$, the level sets of
$\Lambda$ correspond to the curves $\sigma=\const$, which are
horocycles passing through the point $\i$.

\paragraph{The scalar potential in the canonical lightlike case.}

In horocyclic coordinates, the gradient flow equations of $\Lambda$
take the form:
\be
\frac{\dd\sigma}{\dd q}=-\beta^2 e^\sigma~~,~~\frac{\dd\tau}{\dd q}=0~~,
\ee
with the solution $\tau=\const$ and $\sigma=-\log\left[\beta^2
  q+1\right]$, where we chose the integration constant such that $q$
runs between $-1/\beta^2$ and $+\infty$, with $q\rightarrow
-1/\beta^2$ corresponding to $\sigma\rightarrow +\infty$ and
$q\rightarrow +\infty$ corresponding to $\sigma\rightarrow -\infty$.
In the first limit, the gradient flow approaches a point on the
conformal boundary of $\mD$ where $\Lambda_{1,0,1}$ tends to plus
infinity, while in the second limit the gradient flow approaches the
ideal point $u=\i$ (where $\Lambda_{1,0,1}$ tends to zero). The value
$q=0$ corresponds to the horocycle defined by the equation
$\sigma=0$, which is the level set where $\Lambda_{1,0,1}=1$.

We have $\lambda=e^\sigma=\frac{\rho^2-2y+1}{1-\rho^2}$ and equation
\eqref{dqdlambda} becomes:
\be
\frac{\dd q}{\dd\lambda}=-\frac{1}{||\dd\Lambda_{1,0,1}||_\cG^2}=-\frac{1}{\beta ^2 \lambda^2} ~~.
\ee
Along a gradient flow curve $\gamma_\tau$ of $\Lambda_{1,0,1}$ which
passes through the point $u=x+\i y=\rho(\cos\theta+\i \sin\theta)\in
\rD$, we have:
\be
\int_{(\gamma)}^{\lambda} \frac{\lambda'\dd\lambda'}{||\dd\Lambda_{1,0,1}||_\cG^2}
=\frac{1}{\beta^2}\log (\lambda) + c(\tau)~~,
\ee
where $\gamma(\lambda)\!=\!u$ and $c(\tau)$ is a constant of integration
which can depend on $\tau\!=\!\frac{2x}{\rho^2-2y+1}$. Relation
\eqref{Vgamma} gives:
\ben
\label{VCanLightlike}
V_{B_\can}(\tau,\sigma)=\omega(\tau)e^{2\sigma}=\omega(\tau)\Lambda_{B_\can}(\sigma)^2~~,
\een
i.e.:
\ben
V_{B_\can}(u)\!=\omega(\tau)e^{2\sigma}\!= \omega(\tau)
\frac{\left(\rho^2\!-\!2 \rho \sin\theta\!+\!1\right)^2}{(1-\rho^2)^2}
\!=\omega\!\left(\!\frac{2x}{\rho^2\!-\!2y\!+\!1}\!\right)\!\!
\frac{\left(\rho^2\!-\!2 y\!+\!1\right)^2}{(1-\rho ^2)^2}~,
\een
where the function $\omega\in \cC^\infty(\R)$ is defined through
$\omega(\tau)=e^{2\beta^2 c(\tau)}$.

\paragraph{Accidental visible symmetries in the canonical lightlike case.}

It is clear that $V_{1,0,1}$ is invariant under a continuous subgroup
of $\PSU(1,1)$ iff $\omega$ is independent of $\tau$, in which case
the stabilizer of $V_{1,0,1}$ is the shear subgroup $\cP$. Notice that
$\cP$ identifies with the stabilizer of $B_\can=(1,0,1)$ under the
adjoint representation $\bAd_0$ of $\PSU(1,1)$. Hence the Hessian
two-field model with potential $V_{1,0,1}$ also admits visible
symmetries iff $\omega$ is independent of $\tau$, in which case the
group of visible symmetries coincides with the shear subgroup $\cP$ of
$\PSU(1,1)\simeq \Iso_+(\mD)$.

\paragraph{The scalar potential when $B=B'=C B_\can$.}

Recalling the relations $V_{B'}=V_{B_\can}$ for $B'=C B_\can$ as well
as $\Lambda_{B_\can}=\Lambda_{B'}/C$ (where
$C=B'_0$), relation \eqref{VCanLightlike} gives:
\ben
\label{Vlightlike}
V_{B'}(\tau,\sigma)=\omega(\tau)\frac{\Lambda_{B'}(\sigma)^2}{(B'_0)^2}~
~(\mathrm{when}~~B'_1=0)~~.
\een

\paragraph{Lorentz invariant form of the scalar potential.}

Let $\Xi_B(u)$ denote the projection of the timelike vector $\Xi$ onto
the light cone of $\R^{1,2}$, taken parallel to the lightlike vector
$B$ (hence the 3-vector $\Xi_B(u)$ lies in the intersection of the
light cone with the Minkowski plane spanned by $\Xi(u)$ and $B$ (see
Figure \ref{fig:ProjLightlike}). We have $\Xi_B(u)=\Xi(u)-\alpha B$,
where $\alpha\in \R$ is determined by the condition
$(\Xi_B(u),\Xi_B(u))=0$, which gives
$\alpha=\frac{1}{2(B,\Xi(u))}$. Thus:
\be
\Xi_B(u)=\Xi(u)-\frac{B}{2(B,\Xi(u))}~~.
\ee
Consider the lightlike 3-vector:
\be
n_B(u)\eqdef \frac{\Xi_B(u)}{(B,\Xi(u))}=\frac{2(B,\Xi(u))\Xi(u)-B}{2(B,\Xi(u))^2}
=\frac{2\Lambda_B(u)\Xi(u)-B}{2\Lambda_B(u)^2}~~,
\ee
which satisfies $(n_B,B)=1$ and hence lies inside the affine plane
$\Pi_{B}\subset \R^{1,2}$ defined by the equation $(X,B)=1$. We have
$\Lambda_{B_\can}(u)=(B_\can,\Xi(u))=e^\sigma$ and:
\be
\Xi_{B_\can}(u)=e^\sigma \left(\frac{\tau^2+1}{2},\tau, \frac{\tau^2-1}{2}\right)~~,
\ee
which gives:
\be
n_{B_\can}(u)=\left(\frac{\tau^2+1}{2},\tau, \frac{\tau^2-1}{2}\right)~~.
\ee
This shows that the horocyclic coordinate $\tau$ parameterizes the
light-like vector $n_{B_\can}$. We have $\Pi_{B_\can}=n_0+\Pi_0$,
where the 3-vector $n_0=(1/2,0,-1/2)$ lies inside the light-cone and
$\Pi_0$ is the linear plane defined by the equation $(X,B_\can)=0$,
i.e. $X^0=X^2$. The vectors $\epsilon_1=(0,1,0)$ and
$\epsilon_2=(1,0,1)$ form a basis of this linear plane. Thus
$n_{B_\can}(u)=n_0+\nu(u)$, where:
\be
\nu(u)=\left(\frac{\tau^2}{2},\tau, \frac{\tau^2}{2}\right)
=\tau\epsilon_1+\frac{\tau^2}{2}\epsilon_2\in \Pi_0~~,
\ee
which shows that $n_{B_\can}(u)$ describes the parabola obtained by
intersecting the light cone with the plane $\Pi_{B_\can}$. The apex of
this parabola is the 3-vector $\nu_0$, which corresponds to $\tau=0$.

In particular, $\omega$ can be viewed as a function of the unit timelike
vector $n_B$ and relation \eqref{Vlightlike} can be written in the
manifestly Lorentz-invariant form:
\ben
\label{VInvLigtlike}
V_B(u)=\omega(B_0 n_B(u))\frac{\Lambda_B(u)^2}{B_0^2}~~.
\een
where:
\ben
\label{VSq2}
\frac{\Lambda_B(u)^2}{B_0^2}=\frac{\left(B_0(1+\rho^2)+2B_1 x
+2B_2 y\right)^2}{B_0^2(1-\rho^2)^2}~~,
\een

\begin{figure}[H]
\centering \centering
\includegraphics[width=0.6\linewidth]{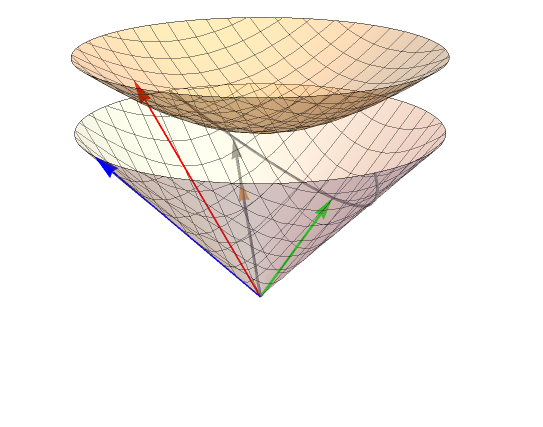}
\vskip -3em
\caption{The lightlike vector $n_B(u)$ (shown in gray) determined by
  $\Xi(u)$ (shown in red) and by a lightlike 3-vector $B$ (shown in
  blue), where the 3-vector $\Xi_B(u)$ is shown in orange. As $u$
  varies in $\rD$, the vector $n_B(u)$ describes the parabola (shown
  in gray) obtained by intersecting the light cone with the plane
  defined in Minkowski 3-space by the equation $(X,B)=1$; one can
  think of the function $\omega$ as being defined on this
  parabola. The 3-vector $n_0$ corresponding to the apex of this
  parabola is shown in green. The figure shows the case when $B$ is
  future-lightlike.}
\label{fig:ProjLightlike}
\end{figure}

\begin{remark}
Equation \eqref{Vpsi} shows that the general solution of
\eqref{LambdaV} for a Hesse function of lightlike parameter $B$ is
obtained by acting on $\mD$ with the $\PSU(1,1)$ transformation
$U:=U(0,a,0)$, where $a$ is given in \eqref{paramlightlike}.  This
amounts to replacing $(\tau,\sigma)$ in expression
\eqref{VCanLightlike} by horocyclic coordinates $(\ttau,\tsigma)$
centered at the point $u_0\eqdef \psi_{U^{-1}}(\i)$ of the conformal
boundary of $\mD$ where the lightlike Hesse function $\Lambda_B$ tends
to zero. In the new coordinates, the curves $\tsigma=\const$ (which
coincide with the level sets of $\Lambda_B$) are horocycles centered
at $u_0$, while the curves $\ttau=\const$ (which coincide with the
gradient flow lines of $\Lambda_B$) are hyperbolic geodesics having
$u_B$ as a limit point. This gives:
\ben
\label{VLightlike}
V_B(x,y)=\omega(\ttau(x,y))\frac{\Lambda_B(x,y)^2}{B_0^2}~~,
\een
where $\ttau(x,y)\eqdef \tau(\psi_U(x+\i y))$ is given by:
\ben
\label{ttauLightlike}
\ttau(x,y)=\frac{2 (B_1 y- B_2 x)}{B_0 (1+\rho ^2)+2 B_1 x+2 B_2 y}~~.
\een
\end{remark}

\paragraph{Accidental visible symmetries in the general lightlike case.}

The potential \eqref{VLightlike} is stabilized by a non-trivial
continuous subgroup of $\PSU(1,1)\simeq \Iso_+(\mD)$ (and hence the
corresponding cosmological model also admits visible symmetries) iff
the function $\omega$ is constant on $\R$. In this case, the
stabilizer of $V_B$ is a parabolic subgroup isomorphic with $(\R,+)$
which coincides with the stabilizer of the 3-vector $B$ under the
adjoint representation:
\be
\Stab_{\PSU(1,1)}(V_B)=\Stab_{\PSU(1,1)}(B)\simeq (\R,+)~~.
\ee
We have:
\be
\Stab_{\PSU(1,1)}(B)=U^{-1}\cP U~~,
\ee
where $U=U(0,a,0)$. 

\subsection{Summary}
\label{subsec:DiskSummary}

\noindent The results of the previous subsections are summarized by
the following theorem, which gives a complete classification of
Hessian two-field models with scalar manifold $\mD_\beta$:

\begin{thm}
The space of Hesse functions of the hyperbolic disk is
3-dimensional. A basis of this space is given by the classical
Weierstrass coordinates $\Xi^0,\Xi^1,\Xi^2$ and the general Hesse
function has the form:
\ben
\label{LambdaB}
\Lambda_B(u)=(B,\Xi(u))~~\forall u\in \rD~~,
\een
where $\Xi=(\Xi^0,\Xi^1,\Xi^2):\rD\rightarrow \R^3$ is the Weierstrass
map, $B=(B^0,B^1,B^2)\in \R^3$ is an arbitrary non-vanishing 3-vector parameter
and $(~,~)$ is the Minkowski pairing of signature $(1,2)$ on $\R^3$. Moreover, 
the following statements hold for the weakly-Hessian two-field
cosmological model whose scalar manifold is the disk
$\mD_\beta=(\rD,\cG)$, where $\cG$ is the complete metric of constant
negative curvature $K=-\frac{3}{8}$:
\begin{enumerate}[1.]
\itemsep 0.0em
\item When $B$ is timelike, the two-field model with scalar manifold
  $\mD_\beta$ admits the Hessian symmetry generated by \eqref{LambdaB}
  iff the scalar potential $V$ has the form:
\ben
\label{VTimelike2}
V_B(u)=\omega(n_B(u))\left[\frac{\Lambda_B(u)^2}{(B,B)}-1\right]~~,
\een
where $\omega\in \cC^\infty(\rS^1)$ is an arbitrary smooth function
defined on the unit circle and $n_B(u)$ is the 3-vector given by:
\ben
n_B(u)\!=\!\frac{(B,B)\Xi(u)\!-\!(B,\Xi(u)) B}{\sqrt{(B,B)(B,\Xi(u))^2\!
-\!(B,B)^2}}\!=\!\frac{(B,B)\Xi(u)-B\Lambda_B(u)}{\sqrt{(B,B)\Lambda_B(u)^2\!
-\!(B,B)^2}}~,
\een
which lies on the circle of unit radius located in the spacelike plane
orthogonal to $B$ in $\R^{1,2}$ (see Figure \ref{fig:ProjTimelike}).
Here, the function $\omega$ is thought of as being defined on this
circle. The model also admits visible symmetries iff $\omega$ is
constant, in which case the group of visible symmetries is an elliptic
subgroup of $\PSU(1,1)$ conjugate to the canonical rotation subgroup
$\cR\simeq \U(1)$; moreover, the group of visible symmetries coincides
with the stabilizer of $B$ under the adjoint representation $\bAd_0$
of $\PSU(1,1)$.
\item When $B$ is spacelike, the two-field model with scalar manifold
  $\mD_\beta$ admits the Hessian symmetry generated by \eqref{LambdaB}
  iff its scalar potential $V$ has the form:
\ben
\label{VSpacelike2}
V_B(u)=\omega(n_B(u))\left[\frac{\Lambda_B(u)^2}{|(B,B)|}+1\right]~~,
\een
where $\omega\in \cC^\infty(\R)$ is an arbitrary smooth function
defined on the real line and $n_B(u)$ is the unit timelike 3-vector
given by:
\ben
n_B(u)\!=\!\frac{|(B,B)|\,\Xi(u)+(B,\Xi(u)) B}{\sqrt{(B,B)^2\!
+\!|(B,B)|(B,\Xi(u))^2}}\!=\!
\frac{|(B,B)|\,\Xi(u)+\Lambda_B(u) B}{\sqrt{(B,B)^2\!+\!|(B,B)|\Lambda_B(u)^2}}~,
\een
which lies on the hyperbola obtained by intersecting the unit
hyperboloid with the Minkowski plane orthogonal to $B$ (see Figure
\ref{fig:ProjSpacelike}).  Here, the function $\omega$ is thought of
as being defined on this hyperbola. The explicit form of $V$ in Euclidean
Cartesian coordinates is given in equations \eqref{VSpacelike} and
\eqref{ttauSpacelike}. The model also admits visible symmetries iff
$\omega$ is constant, in which case the group of visible symmetries is
a hyperbolic subgroup of $\PSU(1,1)$ conjugate to the canonical
squeeze subgroup $\cT\simeq (\R,+)$; moreover, the group of visible
symmetries coincides with the stabilizer of $B$ under the adjoint
representation $\bAd_0$ of $\PSU(1,1)$.
\item When $B$ is lightlike, the two-field model with scalar
  manifold $\mD_\beta$ admits the Hessian symmetry generated by
  \eqref{LambdaB} iff its scalar potential $V$ has the form:
\ben
\label{VLightlike2}
V_B(u)=\omega(B_0 n_B(u))\frac{\Lambda_B(u)^2}{B_0^2}~~,
\een
where $\omega\in\cC^\infty(\R)$ is an arbitrary smooth function defined
on the real line and $n_B(u)$ is the lightlike 3-vector given by:
\ben
n_B(u)=\frac{2(B,\Xi(u))\Xi(u)-B}{2(B,\Xi(u))^2}
=\frac{2\Lambda_B(u)\Xi(u)-B}{2\Lambda_B(u)^2}~~,
\een
which lies on the parabola obtained by intersecting the light cone of
$\R^{1,2}$ with the affine plane defined by the equation $(X,B)=1$
(see Figure \ref{fig:ProjLightlike}). Here, the function $\omega$ is
thought of as being defined on this parabola. The explicit form of $V$ in
Euclidean Cartesian coordinates is given in equations
\eqref{VLightlike} and \eqref{ttauLightlike}. The model also admits
visible symmetries iff $\omega$ is constant, in which case the group
of visible symmetries is a parabolic subgroup of $\PSU(1,1)$ conjugate
to the canonical shear subgroup $\cP\simeq (\R,+)$; moreover the group of
visible symmetries coincides with the stabilizer of $B$ under the
adjoint representation $\bAd_0$ of $\PSU(1,1)$.
\end{enumerate}
In each of the three cases, there exists an orientation-preserving
isometry of the scalar manifold which brings the Hesse generator and
the scalar potential to the corresponding canonical forms (see
equations \eqref{LambdaCanTimelike} and \eqref{VCanLightlike} for the
timelike case, \eqref{LambdaCanSpacelike} and \eqref{VCanSpacelike}
for the spacelike case, \eqref{LambdaCanLightlike} and
\eqref{VCanLightlike} for the lightlike case).
\end{thm}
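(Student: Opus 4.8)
The plan is to assemble the theorem from the results of the preceding subsections, organizing everything around the three Lorentzian orbit types of the parameter $B$. First I would settle the opening claims. The classification of weakly-Hessian rotationally-invariant models shows that the general Hesse function on $\mD_\beta$ is given by \eqref{LambdamD}, which \eqref{LambdamDCart} rewrites as the linear combination $\Lambda_B=B_0\Xi^0+B_1\Xi^1+B_2\Xi^2$ of the Weierstrass coordinates. Since $\Xi^0,\Xi^1,\Xi^2$ are manifestly linearly independent over $\R$, they form a basis of $\cS(\mD)$, which is therefore three-dimensional, and the isomorphism $\bLambda$ of \eqref{bLambda} identifies $\cS(\mD)$ with $\R^{1,2}$; formula \eqref{LambdaB} is then just \eqref{LambdaXi}. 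The equivariance \eqref{LambdaIsoTf} identifies the $\Iso_+(\mD)$-action on Hesse functions with the fundamental action of $\SO_o(1,2)$ on $B$, so the trichotomy timelike/spacelike/lightlike is isometry-invariant and the three cases of the theorem are mutually exclusive and exhaustive.

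For the explicit form of the scalar potential I would treat each orbit type through the reduction-to-canonical-form procedure of Subsection \ref{subsec:PotD}. Three ingredients are already in hand: (i) relation \eqref{Vpsi}, which reduces the determination of $V_B$ for arbitrary $B$ to its canonical representative $B_\can$ in the same orbit via the explicit $\PSU(1,1)$ element of \eqref{paramtimelike}, \eqref{paramspacelike} and \eqref{paramlightlike}; (ii) the solution of the $\Lambda$-$V$ equation in each canonical case by the method of characteristics, yielding \eqref{VCanTimelike}, \eqref{VCanSpacelike} and \eqref{VCanLightlike}; and (iii) the manifestly Lorentz-invariant rewritings \eqref{VInvTimelike}, \eqref{VInvSpacelike} and \eqref{VInvLigtlike}, whose $\SO_o(1,2)$-covariance promotes each canonical formula to the general formulas \eqref{VTimelike2}, \eqref{VSpacelike2} and \eqref{VLightlike2}. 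The point requiring genuine care is that the method of characteristics must deliver the \emph{general} solution of \eqref{LambdaV}, not merely a particular one. For this I would verify, in each canonical case, that the gradient flow of $\Lambda_{B_\can}$ foliates $\rD$ and admits a global section $\cQ$ meeting every flow line exactly once, so that the free datum in \eqref{Vsol} is precisely the restriction $\omega=V\circ F$ to that section. Tracking the topology of $\cQ$ then pins down the domain of $\omega$: a circle in the timelike case, where the level sets are hyperbolic circles about the unique critical point, and a line in the spacelike and lightlike cases, where the flow lines are geodesics orthogonal to the axis, respectively geodesics asymptotic to the common ideal point of the horocycles.

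Finally, for the accidental visible symmetries I would invoke the case-by-case analysis already carried out: $V_{B_\can}$ is stabilized by a one-parameter subgroup of $\PSU(1,1)$ exactly when $\omega$ is constant, and in that situation the stabilizer is the rotation subgroup $\cR$, the squeeze subgroup $\cT$ or the shear subgroup $\cP$ according to type, each being precisely the stabilizer of $B_\can$ under $\bAd_0$. Conjugating by the element $U$ of \eqref{Vpsi} transports this to general $B$, giving
\[
\Stab_{\PSU(1,1)}(V_B)=U^{-1}\Stab_{\PSU(1,1)}(B_\can)\,U=\Stab_{\PSU(1,1)}(B),
\]
which is an elliptic, hyperbolic or parabolic subgroup isomorphic to $\U(1)$ or $(\R,+)$ as claimed; the concluding assertion about canonical forms is immediate from ingredient (i). I expect the only real subtlety to lie in the completeness argument for the method of characteristics and in correctly reading off the domain of $\omega$ from the global structure of each gradient foliation---everything else is bookkeeping combining previously established identities.
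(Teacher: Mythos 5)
Your proposal is correct and follows essentially the same route as the paper: the theorem is stated there as a summary of Subsections 4.1--4.5, whose content is exactly your three ingredients (Weierstrass basis and $\PSU(1,1)\simeq\SO_o(1,2)$ equivariance, reduction to the three canonical vectors $B_\can$, solution of the $\Lambda$-$V$ equation by characteristics followed by the Lorentz-invariant rewriting), plus the stabilizer argument for visible symmetries. The one point you single out as delicate --- that the gradient foliation of each canonical $\Lambda_{B_\can}$ admits a global section whose topology fixes the domain of $\omega$ --- is precisely what the paper handles via the explicit polar, Fermi and horocyclic coordinate computations, so nothing is missing.
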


\noindent Notice that $V_B$ depends only on the ray of the 3-vector $B$ in the
projective Minkowski space $\mathbb{P}\R^{1,2}$.  The explicit forms of the
  scalar potential in the three cases are as follows:
\vskip 0.2cm
\noindent $\bullet$ For timelike $B$ (i.e., for $(B,B)\eqdef B_0^2-B_1^2-B_2^2 > 0$):
\ben
\label{PotTimelike}
V_B(x,y)\!=\!\omega(\ttheta(x,y))\,\frac{P}{(B,B)(1-\rho^2)^2}~~,
\een
where:
\ben \label{P_expr}
P \!= \!(B_1^2+B_2^2)(1\!+\!\rho^4)\!+\!2(B_1^2 - B_2^2)(x^2\!-\!y^2)
\!+\!4B_0^2\rho^2\!+\!4B_0 (1\!+\!\rho^2)(B_1x\!+\!B_2y)\!+\!8 B_1 \!B_2 xy
\een
and:
\ben
\label{ArgOmegaTimelike}
\ttheta(x,y)=\arg\!\left[\frac{ \sign(B_0) \,(B_1 - \i B_2) \,(x+\i y) + \left( |B_0| - \sqrt{(B,B)} \right)}{ \left(|B_0| - \sqrt{(B,B)}\right) (x+\i y)+ \sign(B_0) \,(B_1+\i B_2)}\right]~~.
\een

\noindent $\bullet$ For spacelike $B$ (i.e., for $B_0^2-B_1^2-B_2^2 < 0$):
\ben
\label{PotSpacelike}
\!\!\!\!\!V_B(x,y)\!=\!\omega(\ttau(x,y))\,\frac{P}{|(B,B)|(1-\rho^2)^2}~,
\een
where $P$ is given by \eqref{P_expr} and:
\ben
\label{ArgOmegaSpacelike}
\ttau(x,y)=\arcsinh \!\left[ \frac{2 \,\sqrt{|(B,B)|} \,(B_1 y - B_2x)}{\sqrt{(B_1^2+B_2^2) P}} \right]~.
\een

\noindent $\bullet$ For lightlike $B$ (i.e., for $B_0^2-B_1^2-B_2^2 = 0$):
\ben
\label{PotLightlike}
V_B(x,y)=\omega(\ttau(x,y))\,\frac{\left(B_0(1+\rho^2)+2B_1 x+2B_2 y\right)^2}{B_0^2(1-\rho^2)^2}~~,
\een
where:
\ben
\label{ArgOmegaLightlike}
\ttau(x,y)=\frac{2 (B_1 y- B_2 x)}{B_0 (1+\rho ^2)+2 B_1 x+2 B_2 y}~~.
\een

\section{Hessian models for the hyperbolic punctured disk}
\label{sec:pDisk}

\noindent In this case, all Hesse functions are
rotationally-invariant. Taking $C=1$ in \eqref{LambdamDast}, we find
that the gradient vector field of the Hesse function
$\Lambda=e^{-\beta r}$ has the following components in the rescaled
normal polar coordinates $(r,\theta)$:
\beqan
\label{gradLambdamDast}
&& (\grad_\cG \Lambda)^r=-\beta e^{-\beta r} \nn\\
&& (\grad_\cG \Lambda)^\theta=0~~.
\eeqan
Since $\rho=e^{-2\pi e^{\beta r}}$, the level sets of $\Lambda$ are
Euclidean circles centered at the origin of $\mD^\ast$, while the
gradient flow curves are half lines passing through the origin (which
corresponds to $r\rightarrow +\infty$); the gradient curves flow from
the outer component of the conformal boundary of $\mD^\ast$, which is
the Euclidean circle of radius $1$ corresponding to $r\rightarrow
-\infty$. The gradient flow equations of $\Lambda$:
\be
\frac{\dd r}{\dd q}=\beta e^{-\beta r}~~,~~\frac{\dd \theta}{\dd q}=0
\ee
give $\theta=\const$ and:
\ben
\label{gradflowmDast}
e^{\beta r}=1+\beta^2 q~\Longleftrightarrow ~q
=\frac{1}{\beta^2}\left(e^{\beta r}-1\right)~~,
\een
where we chose the constant of integration such that $r|_{q=0}=0$,
i.e. such that $\rho|_{q=0}=e^{-2\pi}$; this amounts to using the
Euclidean circle $C_0$ of radius $\rho_0\eqdef e^{-2\pi}$ (which has
unit hyperbolic circumference) as a section $\cQ$ for the gradient
flow. We have:
\be
\int_0^q \Lambda(\gamma(q'))\dd q'=\int_0^q  e^{-\beta r(q')} \dd q'
=\int_0^q \frac{1}{1+ \beta^2  q'} \dd q'
=\frac{1}{\beta^2} \log\left(1+\beta^2 q\right)=\frac{1}{\beta} r~~.
\ee
Relation \eqref{Vgammaq} gives:
\ben
\label{VSolmDast}
V=\omega(\theta)e^{-2 \beta r}=\omega(\theta)e^{-\sqrt{\frac{3}{2}} r}
=\omega(\theta)\frac{4\pi^2}{(\log\rho)^2}~~,
\een
where we used equation \eqref{rrhopDisk} and we defined
$\omega(\theta)\eqdef
V(r,\theta)|_{r=0}=V(\rho,\theta)|_{\rho=\rho_0}$. Notice that
$\omega$ (which can be viewed as a smooth function defined on the unit
circle) can be identified with the restriction of $V$ to the circle
$C_0$, which plays the role of section for the gradient flow of
$\Lambda$.

The space of Killing vector fields of $\mD^\ast$ is generated by
$\pd_\theta$, which is a visible symmetry iff $\pd_\theta V=0$, which
amounts to the condition that $\omega$ be independent of
$\theta$. The following statement summarizes these results:

\begin{thm}
The space of Hesse functions of the hyperbolic punctured disk is
one-dimensional, being generated by:
\ben
\label{LambdaSolpd}
\Lambda=e^{-\beta r}=\frac{2\pi}{|\log \rho|}~~,
\een
where $(r,\theta)$ are polar semi-geodesic coordinates for the complete
metric $\cG$ of Gaussian curvature $-\beta^2=-3/8$. For $C\neq 0$,
this Hesse function generates a Hessian symmetry of the two-field
cosmological model with scalar manifold $\mD^\ast_\beta$ iff the
scalar potential has the form:
\ben
\label{VSolpd}
V(r,\theta)=\omega(\theta)e^{-2\beta r}=\omega(\theta)\frac{4\pi^2}{(\log\rho)^2}~,
\een
where $\omega\in \cC^\infty(\rS^1)$ is an arbitrary smooth function
defined on the unit circle (viewed as a $2\pi$-periodic
smooth function of the polar angle $\theta$). In this case, the
corresponding Hessian symmetry is generated by the vector field:
\be
X_\Lambda=\frac{\Lambda}{\sqrt{a}}\pd_a-\frac{4}{a^{3/2}}\grad \Lambda
=e^{-\beta r}\Big(\frac{1}{\sqrt{a}} \pd_a +\frac{4\beta}{a^{3/2}}\pd_r\Big)
=2\pi\left[\frac{1}{\sqrt{a}|\log\rho|}\pd_a-4\beta^2\frac{\rho}{a^{3/2}}\pd_\rho\right]~~.
\ee
When $\omega$ is not constant, the space of Noether symmetries of such
a model is one-dimensional and coincides with the space of Hessian
symmetries, being spanned by the vector field $X_\Lambda$.  When
$\omega$ is constant, the model also admits visible symmetries, whose
generators form a one-dimensional vector space spanned by
$\pd_\theta$. In this special case, the space of Noether symmetries is
two-dimensional and admits a basis given by the vector fields
$X_\Lambda$ and $\pd_\theta$.
\end{thm}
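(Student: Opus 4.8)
The plan is to assemble the theorem from the general results of Sections~\ref{sec:Noether} and~\ref{sec:rot} together with the explicit gradient-flow computation carried out immediately above the statement. The first assertion---that $\cS(\mD^\ast_\beta)$ is one-dimensional and spanned by $\Lambda=e^{-\beta r}$---is immediate from the classification of weakly-Hessian rotationally-invariant models: equation~\eqref{LambdamDast} shows that the general solution of the Hesse equation on $\mD^\ast_\beta$ is $\Lambda(r)=C\,e^{-\beta r}$, so the solution space is $\R\,e^{-\beta r}$. The two forms of $\Lambda$ in~\eqref{LambdaSolpd} are then identified through the coordinate relation~\eqref{rrhopDisk}, which gives $|\log\rho|=2\pi e^{\beta r}$ and hence $e^{-\beta r}=2\pi/|\log\rho|$. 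Normalizing $C=1$ is legitimate by Remark~\ref{rem:Vscale}, since the $\Lambda$-$V$ equation depends only on the ray of $\Lambda$.

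Next I would establish the form of $V$ by applying the method of characteristics of Subsection~\ref{subsec:V} to the $\Lambda$-$V$ equation~\eqref{LambdaV}. The gradient flow of $\Lambda=e^{-\beta r}$ has already been computed in~\eqref{gradLambdamDast}: its $\theta$-component vanishes, so the flow curves are the rays $\theta=\const$, along which~\eqref{gradflowmDast} integrates the radial equation, with the circle $\rho=\rho_0$ serving as a global section. Evaluating $\int_0^q\Lambda\,\dd q'$ along such a curve and substituting into~\eqref{Vgammaq} yields $V=\omega(\theta)e^{-2\beta r}$, where $\omega$ is the restriction of $V$ to that section; global well-definedness on $\mD^\ast_\beta$ forces $\omega$ to be $2\pi$-periodic, i.e. a smooth function on $\rS^1$. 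Since the characteristic method produces the \emph{general} solution of~\eqref{LambdaV}, this gives both directions of the stated equivalence, sufficiency also being confirmed by direct substitution. The explicit generator~\eqref{XLambda} then follows by inserting $\Lambda=e^{-\beta r}$: the metric is diagonal with $\cG^{rr}=1$ and $\pd_\theta\Lambda=0$, so $\grad_\cG\Lambda=-\beta e^{-\beta r}\pd_r$, reproducing the displayed expression for $X_\Lambda$ after the change of coordinates $\rho=e^{-2\pi e^{\beta r}}$.

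Finally I would count the Noether symmetries using the direct-sum decomposition of Section~\ref{sec:Noether}: every time-independent Noether symmetry splits uniquely into a Hessian and a visible part. By the scaling invariance of Remark~\ref{rem:Vscale}, the $V$ just constructed solves the $\Lambda$-$V$ equation for every nonzero rescaling of $\Lambda$, so the whole one-dimensional space $\cS(\mD^\ast_\beta)=\R\,e^{-\beta r}$ contributes Hessian symmetries; under the linear map $\Lambda\mapsto X_\Lambda$ these form the one-dimensional family spanned by $X_\Lambda$. The visible part consists of Killing vector fields of $\mD^\ast_\beta$ annihilating $V$ through the $Y$-system~\eqref{YSys}. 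Here I would invoke the fact that the hyperbolic punctured disk is a cusp whose (orientation-preserving) isometry algebra is one-dimensional, namely $\cK(\mD^\ast_\beta)=\R\,\pd_\theta$ (see Appendix~\ref{app:elem}); the condition $Y^i\pd_i V=0$ with $Y=\pd_\theta$ then reduces to $\pd_\theta V=0$, i.e. $\omega'=0$. Thus visible symmetries exist precisely when $\omega$ is constant, in which case they span $\R\,\pd_\theta$, giving a two-dimensional Noether algebra with basis $\{X_\Lambda,\pd_\theta\}$; otherwise the Noether algebra is one-dimensional.

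The step demanding the most care is the identification $\cK(\mD^\ast_\beta)=\R\,\pd_\theta$. Unlike the disk $\mD_\beta$, whose isometry group is three-dimensional, the cusp admits no extra Killing fields, and it is this structural input---rather than any further computation---that fixes the dimension of the visible sector and hence the final count. Everything else reduces to reading off the characteristic-method computation already set up in the text and checking the routine substitutions.
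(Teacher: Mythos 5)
Your proposal is correct and follows essentially the same route as the paper: it reads off the one-dimensional space of Hesse functions from the classification in Appendix~\ref{app:SolHesse} (equation~\eqref{LambdamDast}), applies the method of characteristics along the radial gradient flow~\eqref{gradLambdamDast}--\eqref{gradflowmDast} to obtain $V=\omega(\theta)e^{-2\beta r}$, and settles the visible sector via $\cK(\mD^\ast_\beta)=\R\,\pd_\theta$ together with the direct-sum decomposition of Noether symmetries. The only point where you are slightly more explicit than the text is in flagging the one-dimensionality of the Killing algebra of the cusp as the structural input fixing the final count, which the paper asserts without elaboration.
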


\

\noindent The radial profiles of $\Lambda$ and $V$ are plotted in
Figure \ref{fig:pDiskLambdaV}.

\begin{figure}[H]
\centering
\begin{minipage}{.44\textwidth}
\centering \includegraphics[width=\linewidth]{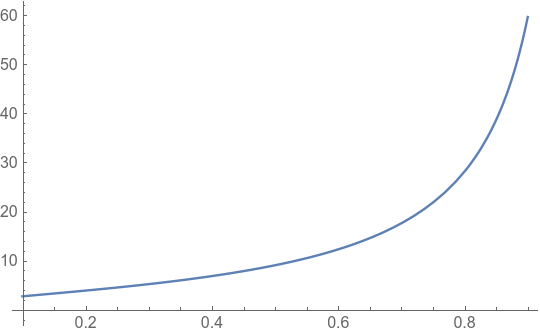}
\vskip 0.45em \subcaption{Plot of $\Lambda(\rho)/C$.}
\label{fig:pDiskLambda}
\end{minipage}\hfill
\begin{minipage}{.44\textwidth}
\centering \includegraphics[width=1\linewidth]{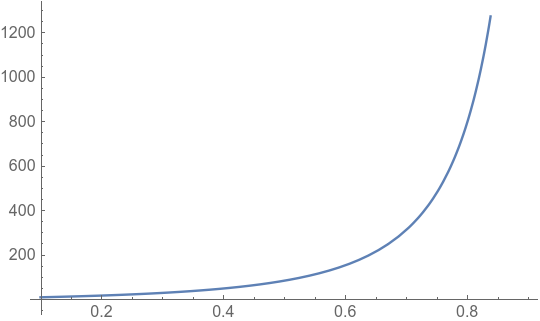}
\vskip 0.45em \subcaption{Plot of $V(\rho,\theta)/\omega(\theta)$.}
\label{fig:pDiskV}
\end{minipage}
\caption{Plot of the radial profiles of $\Lambda/C$ and $V$ for the hyperbolic punctured disk.}
\label{fig:pDiskLambdaV}
\end{figure}

\section{Hessian models for the hyperbolic annuli}
\label{sec:Annuli}

\noindent In this case, all Hesse functions are 
rotationally-invariant. Using \eqref{LambdamA} with $C\!=\!1$, we find
that the gradient vector field of $\Lambda\!=\!\sinh(\beta r)$ has the
following components in normal polar coordinates $(r,\theta)$ for the
metric $\cG$ of Gaussian curvature $-\beta^2$:
\beqan
\label{gradLambdamA}
&& (\grad_\cG \Lambda)^r=\beta \cosh(\beta r) \nn\\
&& (\grad_\cG \Lambda)^\theta=0~~.
\eeqan
The gradient flow equations of $\Lambda$:
\ben \label{gradFlEq}
\frac{\dd r}{\dd q}=-\beta \cosh(\beta r)~~,~~\frac{\dd \theta}{\dd q}=0
\een
give $\theta=\theta_0=\const$ and:
\ben
\label{gfA}
\tanh(\beta r/2)=-\tan(\beta^2 q/2)~~,
\een
where we chose the gradient flow parameter such that $r|_{q=0}=0$
(i.e. $\rho|_{q=0}=1$) and we used the formula:
\be
\int \frac{\dd r}{\cosh(\beta r)}=\frac{2}{\beta}\arctan
\left[\tanh\left(\frac{\beta r}{2}\right)\right]+\const~~.
\ee
In this case, the section $\cQ$ for the gradient flow of $\Lambda$
is the Euclidean circle $C_1$ of radius $\rho=1$ (which separates the two
funnel regions of $\mA(R)$). Using \eqref{gradFlEq}, more precisely:
\be
\dd q'=-\frac{1}{\beta \cosh(\beta r')}\dd r'~~,
\ee
gives:
\beqa
&& \int_0^q \Lambda(\gamma(q'))\dd q'= \int_0^q  \sinh(\beta r(q'))\dd q'
=-\frac{1}{\beta} \int_0^r  \tanh(\beta r')\dd r'
=-\frac{1}{\beta^2}\log\left[\cosh(\beta r)\right]~~.
\eeqa
Hence, relation \eqref{Vgammaq} implies:
\ben
\label{VmA}
V(r,\theta)=\omega(\theta) e^{2\log \left[\cosh (\beta r)\right] }
=\omega(\theta)\cosh^2(\beta r)
=\frac{\omega(\theta)}{ \cos^2\left(\frac{\pi}{\mu}|\log\rho|\right)}~~,
\een
where we defined $\omega(\theta)\eqdef
V(r,\theta)|_{r=0}=V(\rho,\theta)|_{\rho=1}$. Notice that $\omega$ can
be viewed as a smooth function defined on the unit circle, which
identifies with the restriction of $V$ to the Euclidean circle $C_1$.

The space of Killing vector fields of $\mA(R)$ is generated by
$\pd_\theta$. The latter is a visible symmetry iff $\pd_\theta V=0$,
which amounts to the condition that $\omega$ be independent of
$\theta$. Summarizing everything, we have:

\begin{thm}
The space of Hesse functions of the hyperbolic annulus $\mA(R)$ is
one-dimensional, being generated by the function:
\ben
\label{LambdaSolan}
\Lambda=\sinh(\beta r)=\tan\left(\frac{\pi}{\mu}\log\rho\right)~~,
\een
where $(r,\theta)$ are polar semi-geodesic coordinates for the complete
metric $\cG$ of Gaussian curvature $-\beta^2=-3/8$. This function
generates a Hessian symmetry of the two-field cosmological model with
scalar manifold $\mA_\beta(R)$ iff the scalar potential has the form:
\ben
\label{VSolan}
V(r,\theta)=\omega(\theta)\cosh^2(\beta r)
=\frac{\omega(\theta)}{\cos^2\left(\frac{\pi}{\mu}|\log\rho|\right)}~~,
\een
where $\omega\in \cC^\infty(\rS^1)$ is an arbitrary smooth function
defined on the unit circle (viewed as a $2\pi$-periodic
smooth function of the polar angle $\theta$). In this case, the
corresponding Hessian symmetry is generated by the vector field:
\beqa
X_\Lambda&=&\frac{\Lambda}{\sqrt{a}}\pd_a-\frac{4}{a^{3/2}}\grad
\Lambda= \frac{\sinh(\beta r)}{\sqrt{a}}
  \pd_a-4\beta\frac{\cosh(\beta r)}{a^{3/2}}\pd_r \nn\\
&=&\frac{\tan\left(\frac{\pi}{\mu}\log\rho\right)}{\sqrt{a}}\pd_a
-\left(\frac{4\beta^2\mu}{\pi}\right)\frac{\rho}{a^{3/2}}\pd_\rho~~.
\eeqa
When $\omega$ is not constant, the space of Noether symmetries of such
a model is one-dimensional and coincides with the space of Hessian
symmetries, being spanned by the vector field $X_\Lambda$. When
$\omega$ is constant on $\rS^1$, the model also admits visible
symmetries, whose generators form a one-dimensional linear space
spanned by $\pd_\theta$. In this case, the space of Noether symmetries
is two-dimensional and admits a basis given by the vector fields
$X_\Lambda$ and $\pd_\theta$.
\end{thm}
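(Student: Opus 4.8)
The plan is to obtain the theorem by assembling three ingredients that are already in place: the classification of Hesse functions from Section~\ref{sec:rot}, the method of characteristics for the $\Lambda$-$V$ equation from Subsection~\ref{subsec:V}, and the direct-sum decomposition of Noether symmetries established in Section~\ref{sec:Noether}. The statement that the space of Hesse functions of $\mA_\beta(R)$ is one-dimensional and spanned by $\Lambda=\sinh(\beta r)$ is precisely equation~\eqref{LambdamA}; the equivalent expression $\Lambda=\tan\!\big(\tfrac{\pi}{\mu}\log\rho\big)$ follows from the coordinate change~\eqref{rrhopAnnulus}. Thus the first assertion requires nothing beyond citing the classification (proved in Appendix~\ref{app:SolHesse}), and the explicit generator $X_\Lambda$ in the statement is obtained by substituting $\Lambda=\sinh(\beta r)$ and its gradient into the general formula~\eqref{XLambda}.

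The core of the argument is solving the $\Lambda$-$V$ equation~\eqref{LambdaV} for this fixed $\Lambda$. First I would compute the $\cG$-gradient: since $\Lambda$ depends only on $r$ and the metric~\eqref{metricmA} is diagonal with $g_{rr}=1$, the gradient is purely radial, $(\grad_\cG\Lambda)^r=\beta\cosh(\beta r)$ and $(\grad_\cG\Lambda)^\theta=0$. Consequently the gradient flow curves are the radial segments $\theta=\const$, and the Euclidean circle $C_1$ of radius $\rho=1$ (separating the two funnels of $\mA(R)$) provides a global section meeting each flow curve exactly once. I would then integrate along a flow curve using~\eqref{Vgammaq}: the change of variable $\dd q'=-\tfrac{\dd r'}{\beta\cosh(\beta r')}$ reduces $\int_0^q\Lambda(\gamma(q'))\dd q'$ to $-\tfrac{1}{\beta^2}\log[\cosh(\beta r)]$, so that~\eqref{Vgammaq} yields $V(r,\theta)=\omega(\theta)\cosh^2(\beta r)$, with $\omega$ the restriction $V|_{C_1}$ viewed as a $2\pi$-periodic function of $\theta$. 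Rewriting $\cosh^2(\beta r)$ in Euclidean coordinates via~\eqref{rrhopAnnulus} gives the second form in~\eqref{VSolan}, and conversely any such $V$ satisfies~\eqref{LambdaV} by construction, establishing the ``iff''.

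For the symmetry content I would invoke the decomposition from Section~\ref{sec:Noether}: every time-independent Noether symmetry is a direct sum of the Hessian generator $X_\Lambda$ and a visible generator $Y$ solving the $Y$-system~\eqref{YSys}. Since the isometry group of $\mA_\beta(R)$ is generated by $\pd_\theta$, any Killing field is a multiple of $\pd_\theta$, and it preserves $V$ precisely when $\pd_\theta V=0$, i.e.\ when $\omega$ is constant on $\rS^1$. Hence for non-constant $\omega$ the Noether space is one-dimensional (spanned by $X_\Lambda$ alone), while for constant $\omega$ it becomes two-dimensional with basis $\{X_\Lambda,\pd_\theta\}$, giving the final assertion.

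The only genuinely non-trivial input is the one-dimensionality of the Hesse space, which is where the global topology of the annulus enters: the local Hesse equation admits the larger family containing the $\cosh(C_R\theta)$ and $\sinh(C_R\theta)$ terms (with $C_R=\tfrac{\pi}{2\log R}$), and one must show these fail to be single-valued on $\mA(R)$ and are therefore excluded, leaving only $\sinh(\beta r)$. This globality argument is the main obstacle and is discharged in Appendix~\ref{app:SolHesse}; granting it, every remaining step is a routine integration or the standard Killing-vector computation.
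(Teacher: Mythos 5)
Your proposal is correct and follows essentially the same route as the paper: it quotes the one-dimensionality of the Hesse space from the classification in Appendix~\ref{app:SolHesse} (where the periodicity requirement on the angular part is exactly what kills the local $\cosh(C_R\theta)$, $\sinh(C_R\theta)$ solutions), integrates the $\Lambda$-$V$ equation along the radial gradient flow with the circle $\rho=1$ as section to obtain $V=\omega(\theta)\cosh^2(\beta r)$, and settles the visible-symmetry dichotomy via the Killing field $\pd_\theta$ and the condition $\pd_\theta V=0$. No gaps.
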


\

\noindent The radial profiles of $\Lambda$ and $V$ are plotted in
Figure \ref{fig:pDiskLambdaV} for the hyperbolic annulus of modulus
$\mu=2\log 2$ (i.e. $R=2$).

\begin{figure}[H]
\centering
\begin{minipage}{.44\textwidth}
\centering \includegraphics[width=\linewidth]{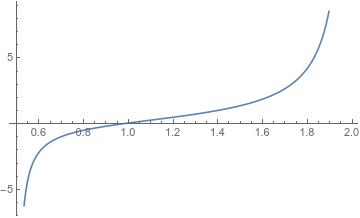}
\vskip 0.45em \subcaption{Plot of $\Lambda(\rho)/C$}
\label{fig:AnnulusLambda}
\end{minipage}\hfill
\begin{minipage}{.44\textwidth}
\centering \includegraphics[width=1\linewidth]{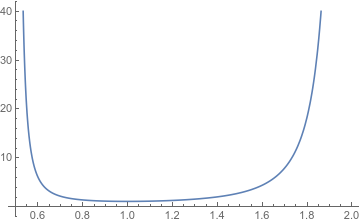}
\vskip 0.45em \subcaption{Plot of $V(\rho,\theta)/\omega(\theta)$}
\label{fig:AnnulusV}
\end{minipage}
\caption{Plot of the radial profiles of $\Lambda/C$ and $V$ on the
  hyperbolic annulus of modulus $\mu=2\log 2$ (i.e. for $R=2$). In
  this case $\rho$ runs from $1/2$ to $2$.}
\label{fig:AnnulusLambdaV}
\end{figure}

\section{Conclusions and further directions}
\label{sec:conclusions}

\noindent We studied time-independent Noether symmetries in two-field
cosmological models, showing that any such symmetry decomposes as a
direct sum of a visible and a Hessian symmetry. While visible
symmetries correspond to those isometries of the scalar manifold which
preserve the scalar potential (and in this sense are ``obvious''
symmetries), Hessian symmetries are ``hidden'' in the sense that they
are not apparent upon direct inspection.  We showed that any Hessian
symmetry is determined by a generating function $\Lambda$. The latter
is a {\em Hesse function} of the scalar manifold
$(\Sigma,G=\frac{3}{8}\cG)$, i.e.  a real-valued function $\Lambda$
defined on $\Sigma$ and which obeys the {\em Hesse equation} of
$(\Sigma,G)$ (a certain second order linear PDE for $\Lambda$ which
involves the rescaled scalar manifold metric
$G=\frac{3}{8}\cG$). Moreover, the scalar potential $V$ of a model
which admits a Hessian symmetry must obey the $\Lambda$-$V$ equation
(a certain first order PDE for $V$ which involves $\Lambda$ and
$\cG$).

When the scalar manifold metric $\cG$ is rotationally invariant, we
showed that the two-field model admits a Hessian symmetry iff $\Sigma$
is a disk, a punctured disk or an annulus and $\cG$ is a complete
metric of Gaussian curvature $K=-\frac{3}{8}$, i.e. iff the model is
an elementary two-field $\alpha$-attractor in the sense of
reference \cite{elem}, for the particular value $\alpha=8/9$ of the
$\alpha$-parameter. In all cases, we determined the explicit general
form of the scalar potential $V$ which is compatible with a given
Hessian symmetry. We also discussed the special cases when such a model
also admits a visible symmetry. Finally, we discussed the integral of
motion of a Hessian symmetry --- showing that it allows one to
simplify the computation of the number of e-folds along cosmological
trajectories.

The present paper opens up a few avenues for further research, some of
which we plan to address in future work. First, we will show in a
separate paper (using a more general framework) that the
classification of Hessian models given in this paper is in fact valid
without assuming rotational invariance of the scalar manifold
metric. One can also show that the existence of a Hessian symmetry
enables an effective one-field model description (as far as one is
concerned with determining classical trajectories) for each fixed
value of the corresponding integral of motion\footnote{Although, of
  course, the fluctuations of both real scalar fields would be
  important, for example, for addressing the issue of perturbative
  stability of a given trajectory.}, a fact which has interesting
implications for contact with observational data. Furthermore, the
approach of the present paper can be extended to the study of
symmetries in $n$-field cosmological models, for which it leads to a
rich mathematical theory.

Another direction for future studies concerns the possible
embeddings of such models into supergravity or string theories, where
we expect them to arise as points of ``enhanced symmetry'' in the
moduli spaces of various compactifications. It is also worth noting that in
recent years there have been a number of investigations of novel
behavior arising from non-trivial angular motion in two-field models on
the hyperbolic disk (see references
\cite{KL7,LWWYA,DFRSW,ARB,MM,CRS,GSRPR,TB}). Our results provide 
a vast arena for even deeper and more involved studies along those lines.
Indeed, having a Noether symmetry enables one to find exact (as opposed 
to numerical) solutions of the cosmological equations of motion, in particular 
obtaining explicit expressions for the Hubble parameter as a function of time; 
see \cite{ABL} (as well as \cite{CDeF} and references therein, in the 
context of extended theories of gravity). This would facilitate investigating 
with analytical means a variety of new regimes of expansion. 
It would be especially interesting to find new {\it non-slow-roll}
inflationary regimes, which are perturbatively stable and produce
a nearly scale-invariant spectrum of fluctuations (as needed for
consistency with observations). Even for single-field models, such a
regime was established only relatively recently \cite{MSY,ASW}. For
two-field models the problem is more challenging, but may also present
new opportunities. As already mentioned, the presence of a Noether 
symmetry in the class of models, considered in the present paper, may 
prove to be of great help in that regard.

It would also be interesting to explore whether the present work can
be useful for a wider program (which was touched upon briefly in
reference \cite{flows}) aimed at studying multifield cosmological
models with methods from the geometric theory of dynamical systems
(see \cite{Palis} for an introduction). As pointed out in
\cite{flows}, the dynamics of such models is quite rich and in
particular it is amenable to certain methods originating in asymptotic
analysis. It would be interesting to gain a deeper understanding of
the simplifications which the presence of a Hessian symmetry may
afford in that context. We hope to report on these and related
problems in future work.

\acknowledgments{\noindent We are grateful to A. Hebecker, T. Van Riet, 
T. Wrase and E. Colgain for interesting discussions on various aspects of 
inflation, cosmology and the swampland conjectures. L.A. and E.M.B. also 
thank the Center for Geometry and Physics of the Institute for Basic Science 
in Pohang for hospitality during the completion of this work. L.A. has received 
partial support from the Bulgarian NSF grant DN 08/3, while E.M.B. acknowledges  
support from the Romanian Ministry of Research and Innovation, contract number 
PN 18090101/2019. The work of C.I.L. was supported by grant IBS-R003-D1.}

\appendix

\section{Geometric formulation of the method of characteristics}
\label{app:char}

\noindent In this appendix, we recall the geometric formulation of the 
method of characteristics for solving a first order PDE of the form: 
\ben
\label{eq2}
\iota_X\dd f=g~~
\een
for an unknown smooth function $f\in \cC^\infty(\Sigma)$ defined on a
manifold $\Sigma$, where $X$ is a vector field given on $\Sigma$, $g\in
\cC^\infty(\Sigma)$ is a given function and $\iota_X$ denotes
contraction with $X$. The method relies on the observation
that the identity $L_X=\dd \iota_X+\iota_X \dd$ allows us to write
\eqref{eq2} in the equivalent form:
\ben
\label{eqL}
L_Xf=g~~,
\een
where $L_X$ is the Lie derivative with respect to $X$. This shows that
$f$ is determined by the flow of $X$ as follows. If
$\gamma:[t_1,t_2]\rightarrow \Sigma$ is a flow curve of $X$
(i.e. $\frac{\dd\gamma(t)}{\dd t}=X(\gamma(t))$), then \eqref{eqL}
gives\footnote{Recall that $(L_Xf)(\gamma(t))=X(f)(\gamma(t))=(\dd_{\gamma(t)}
  f)(X(\gamma(t)))=(\dd_{\gamma(t)}
  f)\left(\frac{\dd \gamma(t)}{\dd t}\right)=\frac{\dd}{\dd
    t}\left[f(\gamma(t))\right]$.}:
\ben
\label{fsol}
f(\gamma(t_2))-f(\gamma(t_1))=\int_{t_1}^{t_2} g(\gamma(t))\dd t~~,
\een
which allows one to determine $f$ if the flow of the vector field
$X$ is known.

As an example, notice that equation \eqref{LambdaV} can
be written in the form \eqref{eq2} by setting $f=\log V$, $X=\grad_\cG
\log \Lambda=\frac{\grad_\cG \Lambda}{\Lambda}$ and $g=2\beta^2$, in which case one can easily see that
\eqref{fsol} is equivalent with \eqref{Vgammaq} upon taking into
account that the flow parameter $t$ of the vector field $\grad_\cG\log \Lambda$ is related
to the gradient flow parameter $q$ of $\Lambda$ through $\dd t=-\Lambda \dd q$.

\section{Orientation-preserving isometries of the hyperbolic disk}
\label{app:iso}

\noindent The group $\Iso_+(\mD)=\Iso_o(\mD)$ of orientation-preserving
isometries of the Poicar\'e disk identifies naturally with the group
$\PSU(1,1)$ as well as with the connected component $\SO_o(1,2)$ of
the Lorentz group in three dimensions. In this appendix, we recall
these well-known identifications in the conventions used in the
present paper.

\paragraph{The group $\SU(1,1)$.}
Consider the matrix:
\be
J\eqdef \left[\begin{array}{cc}1 & 0 \\ 0 &-1\end{array}\right]~~,
\ee
which satisfies $J^\dagger =J=J^{-1}$. Recall that $\SU(1,1)$ is the closed
subgroup of $\SL(2,\C)$ defined through:
\be
\SU(1,1)\eqdef \{U \in \Mat(2,\C)\, \Big{|} \, U^\dagger =JU^{-1}J~~\&~~\det U=+1\}~~,
\ee
Let $Q\eqdef \left[\begin{array}{cc} 1 & \i \\ \i & 1\end{array} \right]$.
Then $\SU(1,1)$ can be identified with $\SL(2,\R)$ through the {\em Cayley isomorphism}:
\ben
\label{Cayley}
\SU(1,1)\ni U \rightarrow Q U Q^{-1}\in \SL(2,\R)~~.
\een

\paragraph{The complex parameterization of $\SU(1,1)$.}
We have:
\be
\SU(1,1)=\{U(\eta,\sigma) \, \Big{|} \, \eta, \sigma\in \C: |\eta|^2-|\sigma|^2=1\}~~,
\ee
where: 
\be
U(\eta,\sigma)\eqdef \left[\begin{array}{cc}\eta & \sigma \\ 
\bar{\sigma} & \bar{\eta} \end{array}\right]~~. 
\ee
The following relations hold in this parameterization:
\be
U(\eta,\sigma)^{-1}=U(\bar{\eta},-\sigma)~~,~~U(\eta,\sigma)^\dagger=U(\bar{\eta},\sigma)~~,
~~(U(\eta,\sigma)^{-1})^\dagger=U(\eta,-\sigma)~~
\ee

\paragraph{The group of orientation-preserving isometries of $\mD$.}

Consider the non-linear action (i.e. morphism of groups
$\psi:\SU(1,1)\rightarrow \Diff(\rD)$) given by fractional
transformations:
\ben
\label{IsoActionD}
\psi_U(u)=\frac{\eta u +\sigma}{\bar{\sigma} u +\bar{\eta}}~~~(u\in \rD)~~,
\een
where $\psi_U\eqdef \psi(U)$. This action is non-effective with kernel
given by $\{-I_2,I_2\}$.  It descends to an effective action of the
group $\PSU(1,1)\eqdef \SU(1,1)/\{-I_2,I_2\}$, which we denote by
$\bpsi:\PSU(1,1)\rightarrow \Diff(\rD)$.  Then the image of $\bpsi$
coincides with the group $\Iso_+(\mD)=\Iso_o(\mD)$ of
orientation-preserving isometries of the Poincar\'e disk:
\be
\bpsi(\PSU(1,1))=\Iso_+(\mD)
\ee
and the isomorphism of groups obtained by co-restricting $\bpsi$ to
its image intertwines the action of $\PSU(1,1)$ by fractional
transformations and the tautological action of $\Iso_+(\mD)$ on
$\rD$. Thus $\Iso_+(\mD)$ identifies with $\PSU(1,1)$ and its
tautological action identifies with the fractional
action of the latter. 

\begin{remark}
Notice the relation:
\ben
\label{dU}
\pd_u [\psi_U(u)]=\frac{1}{(\bar{\sigma} u+\bar{\eta})^2}~~~~\forall u\in \rD~~.
\een
\end{remark}

\paragraph{The angle-boost parameterization of $\SU(1,1)$.}

One can also parameterize the elements of $\SU(1,1)$ by
unconstrained quantities $t\in \R_{\geq 0}$ and $a,b\in \R/(2\pi\Z)$ defined
through:
\be
\eta=\cosh(t/2) e^{\i a}~~,~~\sigma=\sinh(t/2)e^{\i b}~~,
\ee
so that: 
\be
U(t,a,b)=\left[\begin{array}{cc} \cosh(t/2) e^{\i a} & \sinh(t/2)e^{\i b}  \\  
\sinh(t/2)e^{-\i b}  & \cosh(t/2) e^{-\i a} \end{array}\right]~~.
\ee
Then $t$ is called the {\em boost parameter} of $U$, while
$a$ and $b$ are called its {\em angle parameters}.
Notice the relations:
\be
U(t,a,b)^{-1}=U(t,-a,\pi+b)~~,~~U(t,a,b)^\dagger=U(t,-a,b)~~,
~~(U(t,a,b)^{-1})^\dagger=U(t,a,\pi+b)~~.
\ee
and
\ben
\label{Udec}
U(t,a,b)=R(a+b)T(t)R(a-b)~~.
\een

\paragraph{Canonical subgroups.}

The map $R:\R/(4\pi \Z)\simeq \U(1) \rightarrow \SU(1,1)$ defined through:
\be
R(\theta)=\left[\begin{array}{cc} e^{\frac{\i\theta}{2}} & 0 \\
 0 & e^{-\frac{\i\theta}{2}}\end{array}\right]\in \SU(1,1)~~(\theta\in \R/(4\pi \Z) )
\ee
is an injective morphism of groups whose image $\cR$ (called the {\em
  subgroup of rotations}) is the $\U(1)$ subgroup of $\SU(1,1)$
defined by $t=0$ (with $a=\theta/2$), which acts on $\mD$ by rotations
around the origin:
\be
R(\theta)\bullet u= u e^{\i \theta}~~.
\ee
This coincides with the elliptic subgroup of all transformations which fix 
the origin of $\rD$. 
Since the map $R$ is a deformation retract, we have $\pi_1(\SU(1,1))\simeq 
\pi_1(\U(1))\simeq \Z$. 

On the other hand, the map $T:\R\rightarrow \SU(1,1)$ defined through:
\ben
\label{T}
T(t)\eqdef \left[\begin{array}{cc} \cosh(t/2)~ & \sinh(t/2) \\ \sinh(t/2)~ & 
\cosh(t/2) \end{array}\right]~~(t\in \R)
\een
is an injective morphism of groups from $(\R,+)$ to $\SU(1,1)$ whose
image $\cT$ is called the {\em squeeze subgroup}. The squeeze subgroup
is the hyperbolic subgroup consisting of all transformations which fix
the points $u=+1$ and $u=-1$ on the conformal boundary of $\mD$.

Finally, the map $P:\R\rightarrow \SU(1,1)$ defined through:
\ben
\label{P}
P(\kappa)\eqdef  \left[\begin{array}{cc} 1+\i \kappa ~~&~~ \kappa \\
 -\kappa ~~&~~ 1-\i \kappa \end{array}\right]
\een
is an injective morphism of groups from $(\R,+)$ to $\SU(1,1)$ whose
image $\cP$ is called the {\em shear subgroup}. The sheer subgroup
coincides with the parabolic subgroup of $\SU(1,1)$ consisting of all
transformations which fix the point $u=\i$ on the conformal boundary
of $\mD$.

\begin{remark}
An arbitrary parabolic element $\Pi\in \SU(1,1)$ can be parameterized as:
\be
\Pi=\Pi(\kappa,\psi)\eqdef \left[\begin{array}{cc} 1+\i \kappa ~~&~~ 
-\i \kappa e^{\i \psi} \\ \i \kappa e^{-\i \psi} ~~&~~ 1-\i \kappa \end{array}\right]
\ee
where $\kappa\in \R$ and $\psi\in \R/(2\pi \Z)$. In this parameterization,
 we have $P(\kappa)=\Pi(\kappa,\pi/2)$. 
\end{remark}

\paragraph{Conjugacy classes of $\SU(1,1)$.}
Any elliptic element $\PSU(1,1)$ conjugates in $\PSU(1,1)$ to a
rotation, while any hyperbolic element is conjugate to a
boost. Moreover, any parabolic element conjugates to $P_0$ or $-P_0$, where:
\be
P_0\eqdef \Pi(1/2,0)=\left[\begin{array}{cc} 1+\i/2  ~~&~~ -\i/2 \\ \i/2  
~~&~~ 1-\i/2 \end{array}\right]
\ee
More precisely, we have:
\begin{enumerate}[1.]
\itemsep 0.0em
\item If $E\in \SU(1,1)$ is elliptic, then $E=VR(\theta)V^{-1}$ with
  $V=\frac{1}{\sqrt{1-|\alpha|^2}}\left[\begin{array}{cc} 1 ~~&~~
    \alpha \\ \bar{\alpha}~~&~~ 1 \end{array}\right]$, where
  $\alpha\in \C$ and $\theta\in \R$. In this case, we have
  $\tr(E)=2\cos(\frac{\theta}{2})$.
\item If $H\!\in\! \SU(1,1)$ is hyperbolic, then $H\!\!=\!VT(t)V^{-1}$ with
  $V\!\!=\!\frac{1}{\sqrt{1\!-\!|\alpha|^2}}\!\left[\!\begin{array}{cc}
    e^{-\i\frac{\theta}{2}}&~ \alpha e^{-\i\frac{\theta}{2}} \\ \bar{\alpha}
    e^{\i\frac{\theta}{2}}&~ e^{\i\frac{\theta}{2}} \end{array}\!\right]$, where
  $\alpha\in \C$, $t>0$ and $\theta\in \R$. In this case, we have
  $\tr(H)=2\cosh(\frac{t}{2})$.
\item If $\Pi=\Pi(\kappa,\psi)\in \SU(1,1)$ is parabolic, then $\Pi=\pm V P_0
  V^{-1}$ with $V=U(x,\psi/2,\psi/2)=\left[\begin{array}{cc}
      e^{\i\psi/2}\cosh(x) &~ e^{\i\psi/2}\sinh(x)
      \\ e^{-\i\psi/2}\sinh(x) &~ e^{-\i\psi/2}
      \cosh(x)\end{array}\right]$, where $x=\log(2\kappa)$. In this case,
  we have $\tr(\Pi)=\pm 2$.
\end{enumerate}

\paragraph{The Lie algebra and adjoint representation of $\SU(1,1)$.}

The 3-dimensional real Lie algebra of $\SU(1,1)$ is given by: 
\be
\su(1,1)=\{A\in \Mat(2,\C)\,|\, A^\dagger =-J A J~\,\&\,~\tr(A)=0\}~~,
\ee
with the Killing form\footnote{The isomorphism of groups
  \eqref{Cayley} induces an isometry between the Lie algebras
  $\su(1,1)$ and $\sl(2,\R)$, where each Lie algebra is viewed as a
  quadratic space when endowed with its Killing form.}:
\ben
\label{K}
(A, A')_\cK=4\tr(A A')~~,
\een
which is non-degenerate and of signature $(2,1)$. When endowed with
this pairing, the Lie algebra of $\SU(1,1)$ becomes a
three-dimensional Minkowski space and the adjoint representation
$\Ad:\SU(1,1)\rightarrow \Aut_\R(\su(1,1))$:
\ben
\label{Ad}
\Ad(U)(A)\eqdef U A U^{-1}~~,~~\forall U\in \SU(1,1)~,~~\forall A\in \su(1,1)~~.
\een
identifies with the group of proper and orthochronous
Lorentz transformations of this space. It is convenient to perform 
these identifications in two steps.

First, consider the linear isomorphism $F_J:\Mat(2,\C)\stackrel{\sim}{\rightarrow} 
\Mat(2,\C)$ defined through:
\be
F_J(A)\eqdef \frac{\i}{\sqrt{8}} AJ~~,
\ee
which identifies $\su(1,1)$ with the following linear subspace of $\Mat(2,\C)$:
\be
\su_J(1,1)\eqdef F_J(\su(1,1))=\{Z\in \Mat(2,\C)\,|\, Z^\dagger=Z~\,\&\,~\tr(JZ)=0\}
\ee
and transports the Killing form \eqref{K} to the opposite of the
following bilinear form defined on $\su_J(1,1)$:
\ben
\label{KJ}
(Z,Z')_J\eqdef \frac{1}{2}\tr(J Z J Z')~~.
\een
Next, notice that a matrix $Z\in \Mat(2,\C)$ satisfies the two conditions $Z^\dagger =Z$
and $\tr(JZ)=0$ iff it can be written as:
\be
Z=Z(X)\eqdef \left[\begin{array}{cc} X^0 & X^1+\i X^2 \\ X^1-\i X^2 & X^0\end{array}\right]~~,
\ee
for some unique real 3-vector $X\eqdef (X^0,X^1,X^2)\in \R^3$. This
gives a linear isomorphism
$Z:\R^3\stackrel{\sim}{\rightarrow}\su_J(1,1)$. The bilinear form
\eqref{KJ} corresponds through this isomorphism to the canonical
Minkowski pairing of signature $(1,2)$ on $\R^3$:
\ben
\label{MinkowskiD}
(Z(X),Z(Y))_J= (X,Y) \eqdef X^0Y^0-X^1Y^1-X^2Y^2~~.
\een
Hence $Z$ allows us to identify $(\su_J(1,1),(~,~)_J)$ with the
three-dimensional Minkowski space $\R^{1,2}=(\R^3,(~,~))$.  Combining
the above shows that the composite map $Z\circ
F_J:(\su(1,1),(~,~)_\cK)\rightarrow \R^{1,2}$ is an isomorphism of
quadratic spaces.

The linear isomorphism $F_J:\su(1,1)\rightarrow \su_J(1,1)$ transports
\eqref{Ad} into the equivalent representation $\Ad_J\eqdef F_J \circ
\Ad(U)\circ F_J^{-1}:\SU(1,1)\rightarrow \Aut_\R(\su_J(1,1))$, which
acts through:
\be
\Ad_J(U)(Z)= U Z J U^{-1}J=U Z U^\dagger~~.
\ee
Since $F_J$ is an isometry and the adjoint representation preserves the Killing form
\eqref{K}, it follows that $\Ad_J$ preserves the bilinear 
form \eqref{KJ}:
\be
(\Ad_J(U)(Z),\Ad_J(U)(Z'))_J=(Z,Z')_J~~,~~\forall Z,Z'\in \su_J(1,1)~,
~~\forall U\in \SU(1,1)~~.
\ee
Now the linear isomorphism
$Z:\R^3\stackrel{\sim}{\rightarrow}\su_J(1,1)$ transports $\Ad_J$ to
an equivalent representation $\Ad_0\eqdef Z^{-1}\circ \Ad_J(U)\circ
Z:\SU(1,1)\rightarrow \Aut_\R(\R^3)$, which acts though:
\ben
\label{Ad0Z}
Z(\Ad_0(U)(X))=\Ad_J(Z(X))=UZ(X)U^\dagger~,~~\forall X\in \R^3~,~~\forall U\in \SU(1,1)~~.
\een
Since $Z$ is also an isometry, it follows that $\Ad_0$ preserves the Minkowski pairing
\eqref{MinkowskiD}:
\be
(\Ad_0(U)(X), \Ad_0(U)(Y))=(X,Y)~~,~~\forall X, Y\in \R^3~,~~\forall U\in \SU(1,1)~~.
\ee
Hence the operators $\Ad_0(U)$ are three-dimensional Lorentz
transformations. The fact that $(~,~)$ is $\Ad_0$-invariant amounts to
the $(~,~)$-orthogonality condition:
\ben
\label{Ad0Ort}
\Ad_0(U)^\vee=\Ad_0(U)^{-1}~~\mathrm{i.e.}~~\Ad_0(U)^\vee=\Ad_0(U^{-1})~~,
~~\forall U\in \SU(1,1)~~,
\een
where $A^\vee\in \End_\R(\R^3)$ denotes the adjoint of a linear
operator $A\in \End_\R(\R^3)$ with respect to the Minkowski pairing
\eqref{MinkowskiD}. Since $\PSU(1,1)$ is connected, the image of
$\Ad_0$ coincides with the connected component of the identity
$\SO_o(1,2)$, which is the group of proper and orthochronous Lorentz
transformations in three space-time dimensions.
Since $\Ad_0(-I_2)=\id_{\R^3}$, we have an induced morphism
of groups:
\ben
\label{barAd0}
\bAd_0:\PSU(1,1)\rightarrow \SO_o(1,2)~~,
\een
It is a classical fact that \eqref{barAd0} is an isomorphism of groups.

\begin{remark}
Notice the relation:
\be
\det Z(X)=(Z(X),Z(X))_J=(X,X)=(X^0)^2-(X^1)^2-(X^2)^2~~.
\ee
\end{remark}

\paragraph{Explicit expressions for $\Ad_0(U)$.}

The explicit form of the morphism of groups
$\Ad_0:\SU(1,1)\rightarrow \SO_0(1,1)$ can be determined using
relation \eqref{Ad0Z}.  In the complex parameterization of $\SU(1,1)$,
this gives\footnote{Notice that $\Im(\bar{\eta}\sigma)=\Im(\sigma )
  \Re(\eta )-\Im(\eta ) \Re(\sigma)$.}:
\ben
\Ad_0(U)=\left[\begin{array}{ccc} 
|\eta|^2+|\sigma|^2 &  2\Re(\bar{\eta} \sigma) & 2\Im(\bar{\eta}\sigma)\\
2\Re(\eta\sigma) & \Re(\eta^2+\sigma^2)& \Im (\sigma^2-\eta^2)\\
2\Im(\eta\sigma) & \Im(\eta^2+\sigma^2) & \Re(\eta^2-\sigma^2)
\end{array}\right]
\een
while in the angle-boost parameterization one has: 
{\scriptsize
\ben
\label{LorentzMatrix}
\Ad_0(U\!)=\!\left[\!
\begin{array}{ccc}
 \cosh (t) & ~\cos (a-b) \sinh (t)~ & -\sin (a-b) \sinh (t) \\

 \cos (a\!+\!b)\! \sinh (t) & ~\cos (2 a) \cosh ^2\!\left(\frac{t}{2}\right)+\cos (2 b) \sinh

   ^2\!\left(\frac{t}{2}\right)~ & \sin (2b) \sinh^2\!\left(\frac{t}{2}\right)-\sin (2a) \cosh^2\!\left(\frac{t}{2}\right) \\

 \sin (a\!+\!b)\! \sinh (t) & ~\sin (2a) \cosh^2 \!\left(\frac{t}{2}\right)+\sin (2b) \sinh^2\!\left(\frac{t}{2}\right)~

   & \cos (2 a) \cosh^2\!\left(\frac{t}{2}\right)-\cos (2 b) \sinh^2\!\left(\frac{t}{2}\right) \\
\end{array}
\right]\,.
\een
}
\noindent In particular, we have:
{\scriptsize \be
\Ad_0(R(\theta))=\left[
\begin{array}{ccc}
 1 & 0 & 0 \\
 0 & \cos (\theta ) & -\sin (\theta ) \\
 0 & \sin (\theta ) & \cos (\theta ) \\
\end{array}
\right]~~,~~
\Ad_0(T(t))=\left[
\begin{array}{ccc}
 \cosh (t) & \sinh (t) & 0 \\
 \sinh (t) & \cosh (t) & 0 \\
 0 & 0 & 1 \\
\end{array}
\right]~~,~~\Ad_0(P(\kappa))= \left[
\begin{array}{ccc}
 2 \kappa ^2+1 & 2 \kappa  & -2 \kappa ^2 \\
 2 \kappa  & 1 & -2 \kappa  \\
 2 \kappa ^2 & 2 \kappa  & 1-2 \kappa ^2 \\
\end{array}
\right]~~.
\ee}
Thus $R(\theta)$ acts as a counterclockwise rotation by $\theta$ in
the spacelike $(X^1,X^2)$ plane which fixes the time axis,
while $T(t)$ acts as a boost transformation\footnote{The Lorentz
  factor of this boost is $\Upsilon=\cosh(t)\geq 1$ while its speed in
  units where the speed of light equals one is
  $v=\sqrt{1-\frac{1}{\Upsilon^2}}=\tanh(t)$. We have
  $\Upsilon=\frac{1}{\sqrt{1-v^2}}$.} in the direction $X^1$ which fixes
the spacelike coordinate $X^2$. Notice that $R(\theta)$, $T(t)$ and
$P(\kappa)$ fix respectively the three-vectors $(1,0,0)$, $(0,0,1)$
and $(1,0,1)$.

\paragraph{The hyperboloid model and the Weierstrass map.}

\noindent Consider the future sheet of the unit hyperboloid $(X,X)=1$: 
\be
S^+\eqdef \{X\in \R^3\,|\, (X,X)=1\,~\&~ X^0>0\}=\{X\in \R^3\,|\, X^0=\sqrt{1+(X^1)^2+(X^2)^2} \}~~.
\ee
Let $Z\eqdef X^1+\i X^2$, so the condition $(X,X)=1$ amounts to $(X^0)^2=1+|Z|^2$. 
For any $X\in S^+$, define:
\ben
\label{udef}
u\eqdef \frac{Z}{X^0+1}=\frac{X^1+\i X^2}{X^0+1}\in \rD~~.
\een
Then the condition $(X^0)^2=1+|Z|^2$ amounts to: 
\be
|u|^2=\frac{X^0-1}{X^0+1}~~\Longleftrightarrow~ X^0=\frac{1+|u|^2}{1-|u|^2}~~.
\ee
This implies $1-|u|^2=\frac{2}{X^0+1}$, whereby \eqref{udef} gives: 
\be
Z=\frac{2 u}{1-|u|^2}~~.
\ee
Thus $S^+$ is diffeomorphic with $\rD$ through the {\em Weierstrass map}
$\Xi:\rD\rightarrow S^+$, which is given by:
\ben
\label{Xi}
\Xi(u)\eqdef \left(\frac{1+|u|^2}{1-|u|^2}, \frac{2 \Re u}{1-|u|^2}, 
\frac{2 \Im u}{1-|u|^2}\right)~~.
\een
The components $\Xi^0(u),\Xi^1(u),\Xi^2(u)$ (which are not independent
but satisfy the relation $[\Xi^0(u)]^2=1+[\Xi^1(u)]^2+[\Xi_2(u)]^2$)
are the classical {\em Weierstrass coordinates} of $u\in \rD$. The
Weierstrass map can be viewed as a projection of the disk $\rD$ from the
point $x=(-1,0,0)$ onto $S^+$. Direct computation shows that the 
Weierstrass map has the equivariance property:
\ben
\label{HypAction}
\Xi(\psi_U(u))=\Ad_0(U)(\Xi(u))~,~~\forall u\in \rD~,~~\forall U\in \SU(1,1)~~,
\een
where $\psi$ is the fractional action \eqref{IsoActionD} of
$\SU(1,1)$ on $\rD$. 

\section{Solution of the Hesse equation for rotationally-invariant surfaces}
\label{app:SolHesse}

\noindent For a rotationally-invariant surface $(\Sigma,\cG)$ with
$\Sigma\in \{\C,\dot{\rD}\}$, the Hesse equation \eqref{HessCond} is
equivalent with the first three equations of the system
\eqref{LambdaSysElem}:
\beqan
\label{HessElem}
&& \pd_r^2 \Lambda=\frac{3}{8}\Lambda\nn\\
&& \pd_r\pd_\theta\Lambda-\frac{f'}{2f}\pd_\theta \Lambda=0\\
&& \pd_\theta^2 \Lambda+\frac{f'}{2}\pd_r\Lambda=\frac{3}{8}f \Lambda~~.\nn
\eeqan
It is convenient to define $\beta\eqdef \sqrt{\frac{3}{8}}$ (see
equation \eqref{betadef}). Recall that a rotationally-invariant metric
$\cG$ on $\dot{\rD}$ has the form \eqref{ssg}:
\be
\dd s_{\cG}^2=\dd r^2+f(r)\dd\theta^2~~,
\ee
where $f:\R_{>0}\rightarrow \R_{>0}$ is a smooth positive function. 

\subsection{Solving the Hesse equation}

\begin{prop}
Assume that $\cG$ is a rotationally-invariant metric on $\Sigma\in
\{\C,\dot{\rD}\}$.  Then the Hesse equation \eqref{HessCond} for
$(\Sigma,\cG)$ (which is equivalent with the system \eqref{HessElem})
admits non-trivial solutions iff the Gaussian curvature $K$ of $\cG$
satisfies:
\be
K=-\beta^2=-\frac{3}{8}~~.
\ee
In this case, the radial function $f$ has the form:
\ben
\label{f}
f(r)=\left[A_1 \cosh(\beta r)+A_2 \sinh(\beta r)\right]^2~~,
\een
where $A_1$ and $A_2$ are real constants which are not both zero. 
When $f$ is given by \eqref{f}, the general solution of 
\eqref{HessElem} is as follows:
\begin{enumerate}[1.]
\itemsep 0.0em
\item If $|A_1|<|A_2|$, the general solution is:
\beqan
\label{LambdaElem}
\Lambda(r,\theta)&=&\hat B_1\!\cosh(\beta r)+\hat B_2\sinh(\beta r)+\nn\\
&&+
\zeta \cos\!\Big(\!\beta \sqrt{\!A_2^2\!-\!A_1^2}
(\theta-\theta_0)\!\Big)\left[\!\,A_1\cosh(\beta r)+\!A_2\sinh(\beta r)\right]~
\eeqan
where $\theta\in \R/(2\pi \Z)$, $\zeta\geq 0$ and $A_1,A_2,
\hat B_1,\hat B_2$ are constants subject to the condition:
\ben
\label{ABcond}
A_1\hat B_1=A_2\hat B_2~~,
\een
which implies: 
\ben
\label{Bcond1}
|\hat B_1|>|\hat B_2|
\een
In this case, the constants $\hat B_1$ and $\hat B_2$ can be written as: 
\ben
\label{Bprime}
\hat B_1\eqdef A_1\frac{A_1 B_1-A_2 B_2}{A_2^2-A_1^2}+B_1~~,
~~\hat B_2\eqdef A_2\frac{A_1 B_1-A_2 B_2}{A_2^2-A_1^2}+B_2~~,
\een
where $B_1$ and $B_2$ are arbitrary constants. 
\item If $|A_1|>|A_2|$, the general solution is:
\ben
\label{LambdaSpec1}
\Lambda(r,\theta)=\hat B_1\cosh(\beta r)+\hat B_2\sinh(\beta r)~~,
\een
where $\hat B_1$ and $\hat B_2$ are given by \eqref{Bprime} and hence satisfy
\eqref{ABcond}, which implies:
\ben
\label{Bcond2}
|\hat B_1|<|\hat B_2|~~
\een
\item If $A_1=\epsilon A_2$ with $\epsilon\in \{-1,1\}$, then the
  general solution is:
\ben
\label{LambdaSpec2}
\Lambda(r,\theta)=\hat B \left[\cosh(\beta r)+\epsilon  \sinh(\beta r)\right]
=\twopartdef{\hat B e^{\beta r}}{\epsilon=+1}{\hat B e^{-\beta r}}{\epsilon=-1}~~,
\een
where $B'$ is an arbitrary constant. 
\end{enumerate}
\end{prop}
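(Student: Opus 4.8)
The plan is to exploit linearity in $\Lambda$ together with the rotational invariance of $f$ in order to decouple the system \eqref{HessElem} into a radial ODE and an angular ODE. First I would solve the first equation $\pd_r^2\Lambda=\beta^2\Lambda$, whose general solution is $\Lambda(r,\theta)=P(\theta)\cosh(\beta r)+Q(\theta)\sinh(\beta r)$ for arbitrary smooth functions $P,Q$ of the angle. Substituting this Ansatz into the remaining two equations then converts the problem into constraints on the pair $(P,Q)$ and on the radial profile $f$.

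Next I would extract the curvature condition from the mixed (second) equation of \eqref{HessElem}. Abbreviating $c=\cosh(\beta r)$, $s=\sinh(\beta r)$ and $k=f'/(2f)$, that equation becomes $k\,(P'c+Q's)=\beta\,(P's+Q'c)$ identically in $r$ and $\theta$. Reading this as a pairing between the $r$-dependent vector $(kc-\beta s,\,ks-\beta c)$ and the $\theta$-dependent vector $(P',Q')$, a linear-independence argument shows that either $(P',Q')\equiv 0$ (the rotationally symmetric case) or $(P',Q')$ is everywhere proportional to a fixed vector $(n_1,n_2)$. In the latter case the equation forces $k=G'/G$ with $G=n_1c+n_2s$, so that $\sqrt{f}\propto n_1\cosh(\beta r)+n_2\sinh(\beta r)$ and in particular $(\sqrt{f})''=\beta^2\sqrt{f}$; by \eqref{K_G} this means the Gaussian curvature equals $-\beta^2$, and $(n_1,n_2)\propto(A_1,A_2)$. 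The rotationally symmetric case is handled identically using the third equation, which for $\Lambda=\Lambda(r)$ reduces to $f'/(2f)=H'/H$ with $H=\hat B_1\sinh(\beta r)+\hat B_2\cosh(\beta r)$, again giving $\sqrt{f}=A_1\cosh(\beta r)+A_2\sinh(\beta r)$ and $K=-\beta^2$. This simultaneously proves the ``only if'' direction and establishes the form \eqref{f}.

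With $g\eqdef\sqrt{f}=A_1\cosh(\beta r)+A_2\sinh(\beta r)$ and the proportionality $P'=A_1\psi$, $Q'=A_2\psi$ in hand, I would write $P=A_1\Psi+\hat B_1$, $Q=A_2\Psi+\hat B_2$ with $\Psi'=\psi$ and substitute $\Lambda=Pc+Qs$ into the third equation. After expanding and using $c^2-s^2=1$, the whole expression factors through $g$ and collapses to the single first-order relation $\psi'+\beta^2(A_2Q-A_1P)=0$. Its angle-independent part forces $A_1\hat B_1=A_2\hat B_2$, which is precisely \eqref{ABcond}, while differentiating once in $\theta$ yields the angular oscillator equation $\psi''+\beta^2(A_2^2-A_1^2)\psi=0$.

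Finally I would solve this angular equation by cases on the sign of $A_2^2-A_1^2$, using that $\psi$ must be smooth and $2\pi$-periodic. For $|A_2|>|A_1|$ one obtains oscillatory solutions integrating to $\Psi=\zeta\cos\!\big(\beta\sqrt{A_2^2-A_1^2}\,(\theta-\theta_0)\big)$, which together with $\Lambda=\hat B_1\cosh(\beta r)+\hat B_2\sinh(\beta r)+\Psi\,g$ reproduces \eqref{LambdaElem} subject to \eqref{ABcond}; for $|A_2|<|A_1|$ the only periodic solution is $\psi\equiv0$, leaving the purely radial solution \eqref{LambdaSpec1}; and for $A_1=\epsilon A_2$ the equation degenerates to $\psi''=0$, whose only periodic solution is again $\psi\equiv0$, forcing $\hat B_2=\epsilon\hat B_1$ and hence \eqref{LambdaSpec2}. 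The reparameterization \eqref{Bprime} is then a one-line verification that the constrained pair $(\hat B_1,\hat B_2)$ may be written in terms of free constants $B_1,B_2$. The hard part will be the separation-of-dependence bookkeeping --- in particular the linear-independence step that pins $(P',Q')$ to the direction $(A_1,A_2)$, and the careful tracking of integration constants needed to land exactly on \eqref{ABcond} rather than a spurious relation; one must also note that $g$ never vanishes since $f>0$, and that periodicity in $\theta$ is what discards the non-oscillatory angular solutions.
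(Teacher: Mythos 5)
Your proposal is correct and follows essentially the same route as the paper: solve the radial ODE to get $\Lambda=P(\theta)\cosh(\beta r)+Q(\theta)\sinh(\beta r)$, use the mixed equation plus a linear-independence (Wronskian) argument to force $\sqrt{f}=A_1\cosh(\beta r)+A_2\sinh(\beta r)$ with constant $A_i$ (hence $K=-\beta^2$), and reduce the third equation to a constant-coefficient angular ODE whose periodic solutions give the three cases. The only point to phrase carefully is that \eqref{ABcond} is not forced by the equation alone but by normalizing the angular primitive (e.g.\ to zero mean) and absorbing the constant particular solution into $\hat B_1,\hat B_2$ --- which is exactly what \eqref{Bprime} does, and which you correctly flag.
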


\begin{proof}
The first equation in \eqref{LambdaSysElem} gives:
\ben
\label{Lambda}
\Lambda(r,\theta)=\Theta_1(\theta)\cosh(\beta r)+\Theta_2(\theta)\sinh(\beta r)
\een
where $\Theta_1,\Theta_2\in \cC^\infty(\rS^1,\R)$ are arbitrary
real-valued smooth functions defined on the circle,
i.e. $2\pi$-periodic smooth functions of $\theta$.

The second equation in \eqref{LambdaSysElem} can be written as:
\be
\pd_r \log \frac{\pd_\theta \Lambda}{\sqrt{f}}=0~~
\ee
and hence gives:
\be
(\pd_\theta\Lambda)(r,\theta)=C(\theta) \sqrt{f(r)}
\ee
for some smooth function $C\in \cC^\infty(\rS^1,\R)$. Using
\eqref{Lambda}, this relation becomes:
\ben
\label{cond1}
\sqrt{f(r)}=A_1(\theta)\cosh(\beta r)+A_2(\theta)\sinh(\beta r)~~,
\een
where $A_i(\theta)\eqdef \frac{\Theta_i'(\theta)}{C(\theta)}$ for
$i=1,2$.  Since $\cosh(\beta r)$ and $\sinh(\beta r)$ are functionally
independent on the interval $(0,+\infty)$ (they have no$n$-vanishing
Wronskian determinant), condition \eqref{cond1} requires that
$A_1,A_2$ are independent of $\theta$ and hence that they are
constant. Indeed, differentiating \eqref{cond1} with respect to $r$
gives:
\ben
\label{cond2}
\frac{f'(r)}{2\beta \sqrt{f(r)}}=A_1(\theta)\sinh(\beta r)+A_2(\theta)\cosh(\beta r)~~.
\een
Relations \eqref{cond1}, \eqref{cond2} can be viewed as a system of
two linear equations for $A_1(\theta)$ and $A_2(\theta)$, whose
discriminant equals:
\be
W=\det \left [ \begin{array}{cc} \cosh(\beta r) & \sinh(\beta r) \\
 \sinh(\beta r) & \cosh(\beta r)\end{array} \right]=\cosh^2(\beta r)-\sinh^2(\beta r)=1~~.
\ee
Hence the unique solution of this system is:
\beqa
&& A_1(\theta)=\sqrt{f(r)}\cosh(\beta r)-\frac{f'(r)}{2\beta \sqrt{f(r)}}\sinh(\beta r)\nn\\
&& A_2(\theta)=\frac{f'(r)}{2\beta \sqrt{f(r)}}\cosh(\beta r)-\sqrt{f(r)} \sinh(\beta r)~~.
\eeqa
Since the right hand side of these equations depends only on $r$ while
the left hand side depends only on $\theta$, we conclude that $A_1$
and $A_2$ must be constant. Thus:
\ben
\label{Theta}
\Theta_1(\theta)=A_1 D(\theta)+B_1~~,~~\Theta_2(\theta)=A_2 D(\theta)+B_2~~,
\een
where $B_1,B_2$ are constants and $D\in \cC^\infty(\rS^1,\R)$ is any
fixed primitive of $C(\theta)$. Moreover, condition \eqref{cond1}
becomes \eqref{f}. Notice that $A_1$ and $A_2$ cannot both vanish since $f(r)$ is
strictly positive for all $r>0$. Thus:
\be
A_1^2+A_2^2>0~~.
\ee
Using \eqref{f} in equation \eqref{K_G} shows that the Gaussian
curvature is fixed to the value:
\ben
\label{K0}
K_\cG=-\beta^2=-\frac{3}{8}~~.
\een
Combining \eqref{Lambda} and \eqref{Theta} gives:
\ben
\label{LambdaD}
\Lambda(r,\theta)=[A_1 D(\theta)+B_1]\cosh(\beta r)+[A_2 D(\theta)+B_2]\sinh(\beta r)~~.
\een

\noindent It remains to analyze the third equation in
\eqref{LambdaSysElem}. Substituting \eqref{f} and \eqref{LambdaD} in
that equation gives:
\be
\left[A_1 \cosh(\beta r)+A_2 \sinh(\beta r)\right] \left[
  D''(\theta )+\beta^2 (A_2^2-A_1^2) D(\theta) -\beta^2 (A_1 B_1-  A_2 B_2)\right]=0~~.
\ee
Since $A_1^2+A_2^2\neq 0$, this is equivalent with the following second
order linear inhomogeneous ODE with constant coefficients:
\ben
\label{Deq}
D''(\theta)+\beta^2 (A_2^2-A_1^2) D(\theta)=\beta^2 (A_1B_1-A_2 B_2)~~.
\een
Differentiating this with respect to $\theta$ shows that $D'$
satisfies the corresponding homogeneous second order ODE:
\ben
\label{Deq0}
D'''(\theta)+\beta^2 (A_2^2-A_1^2) D'(\theta)=0~~.
\een
Since $D(\theta)$ is periodic, the same is true of $D'(\theta)$. Hence
$D'$ must be constant or the characteristic equation of \eqref{Deq0}
(viewed as a second order ODE for $D'$) must have no$n$-vanishing
imaginary roots. We therefore distinguish the disjoint cases:
\begin{enumerate}[(a)]
\itemsep 0.0em
\item The characteristic equation of $\eqref{Deq0}$ has no$n$-zero
imaginary roots, i.e. we have:
\ben
\label{Acond1}
|A_1|<|A_2|~~.
\een
This implies $A_2\neq 0$ since $A_1$ and $A_2$ are not both zero. In
this case, setting $D(\theta)=E(\theta)+\frac{A_1 B_1-A_2
B_2}{A_2^2-A_1^2}$ shows that \eqref{Deq0} is equivalent with the
homogeneous equation:
\be
E''(\theta)+\beta^2 (A_2^2-A_1^2) E(\theta)=0~~,
\ee
whose general solution is:
\be
E(\theta)=\zeta_1\cos\left(\beta \sqrt{A_2^2-A_1^2}\, \theta\right)\!+
\zeta_2\sin\left(\beta \sqrt{A_2^2-A_1^2}\, \theta\right)~~(\zeta_1,\zeta_2=\const)~.
\ee
Hence the general solution of \eqref{Deq} takes the form:
\ben
\label{Dsol}
D(\theta)\!=\!\frac{A_1 B_1\!-\!A_2 B_2}{A_2^2-A_1^2}
\!+\!\zeta_1\cos\left(\!\beta \sqrt{A_2^2\!-\!A_1^2}\, \theta\!\right)
\!+\!\zeta_2\sin\left(\!\beta \sqrt{A_2^2\!-\!A_1^2}\, \theta\!\right) .
\een
Let $\zeta\eqdef \sqrt{\zeta_1^2+\zeta_2^2}\geq 0$ and $\theta_0\in
[0,2\pi)$ be defined through:
\be
\zeta_1=\zeta\cos\left(\beta \sqrt{A_2^2-A_1^2}\,\theta_0\right)~~,
~~\zeta_2=-\zeta\sin\left(\beta \sqrt{A_2^2-A_1^2}\,\theta_0\right)~~,
\ee
where we take $\theta_0\eqdef 0$ when $\zeta=0$. Then \eqref{Dsol} takes the form:
\be
D(\theta)=\frac{A_1 B_1-A_2 B_2}{A_2^2-A_1^2}+\zeta\cos\left[\beta \sqrt{A_2^2-A_1^2}
 (\theta-\theta_0)\right]~~.
\ee
Performing a shift $\theta\rightarrow \theta-\theta_0$ of the angular
coordinate $\theta$, we can assume without loss of generality that
$\theta_0=0$, which we shall do from now on. Then \eqref{Dsol}
becomes:
\ben
\label{Dsol0}
D(\theta)=\frac{A_1 B_1-A_2 B_2}{A_2^2-A_1^2}+\zeta\cos\left[\beta \sqrt{A_2^2-A_1^2}\, 
\theta\right]~~,
\een
while \eqref{LambdaD} becomes \eqref{LambdaElem},
where $\hat B_1$ and $\hat B_2$ are given by \eqref{Bprime} and hence are
subject to condition \eqref{ABcond}. This condition implies \eqref{Bcond1}
upon using the relations $|A_1|<|A_2|\neq 0$.
\item $D'$ is constant and the characteristic equation of
  $\eqref{Deq0}$ has real roots. In this case, we have:
\ben
\label{Acond2}
|A_1|\geq |A_2|~~.
\een
Since $A_1$ and $A_2$ are not both zero, relation \eqref{Acond2}
implies $A_1\neq 0$.  \eqref{Deq} and \eqref{Deq0} become:
\ben
\label{D0}
(A_2^2-A_1^2) D = A_1B_1-A_2 B_2~~,~~(A_2^2-A_1^2) D'=0~~.
\een
Distinguish the sub-cases:
\begin{enumerate}[label=(b.\arabic*)]
\itemsep 0.0em
\item $|A_1|>|A_2|$. Then the second equation in \eqref{D0} gives
$D'=0$ (i.e. $D$ is constant), while the first equation gives:
\be
D=\frac{A_1B_1-A_2B_2}{A_2^2-A_1^2}~~.
\ee
In this case, relation \eqref{LambdaD} gives \eqref{LambdaSpec1}, 
where $\hat B_1$ and $\hat B_2$ are defined through \eqref{Bprime} and hence
satisfy \eqref{ABcond}, which implies:
\ben
\label{Bcond2prime}
|\hat B_1|<|\hat B_2|~~
\een
upon using the relations $|A_2|<|A_1|\neq 0$. Notice that
\eqref{LambdaSpec1} can be obtained by formally setting $A_1=A_2=0$ in
\eqref{LambdaElem} (in which case \eqref{ABcond} is automatically
satisfied) and replacing condition \eqref{Bcond1} with \eqref{Bcond2}.
\item $|A_1|=|A_2|$. Since $A_1^2+A_2^2\neq 0$, we must then have:
\be
A_2=\epsilon A_1\neq 0~~,
\ee
where $\epsilon\in\{-1,1\}$. On the other hand, the first equation in
\eqref{D0} gives:
\ben
\label{Brel}
B_2=\epsilon B_1~~.
\een
In this case, $D$ is an arbitrary constant and \eqref{LambdaD} gives
\eqref{LambdaSpec2}, where:
\be
\hat B\eqdef A_1 D +B_1~~,
\ee
so $\hat B$ is an arbitrary constant. Notice that \eqref{LambdaSpec2} can
be obtained by formally setting $A_1=A_2=0$ in \eqref{LambdaElem} (in
which case \eqref{ABcond} is automatically satisfied) and replacing
condition \eqref{Bcond1} with:
\ben
\label{Bcond3}
|\hat B_1|=|\hat B_2|~~.
\een
\end{enumerate}
Both sub-cases (b.1) and (b.2) of case (b) can be recovered by
formally setting $A_1=A_2=0$ in \eqref{LambdaElem} (in which case
\eqref{ABcond} is automatically satisfied) and replacing condition
\eqref{Bcond1} with:
\ben
\label{Bcond4}
|\hat B_1|\leq |\hat B_2|~~.
\een
\end{enumerate}
\end{proof}

\begin{remark}
Notice that \eqref{Deq} can also be obtained by setting $r=0$ in the
second equation of \eqref{LambdaSysElem}. Using the relations:
\beqa
&& \Lambda(0,\theta)=A_1 D(\theta)+B_1~~,~~(\pd_r \Lambda)(0,\theta)
=\beta [A_2 D(\theta)+B_2]\nn\\
&& f(0)=A_1^2~~,~~f'(0)=2\beta A_1 A_2~~,
\eeqa
this gives:
\be
A_1[D''(\theta)+\beta^2 (A_2^2-A_1^2) D(\theta)-\beta^2 (A_1B_1-A_2 B_2)]=0~~.
\ee
\end{remark}

\subsection{Reduction to standard cases}

\noindent Condition \eqref{K0} shows that the rescaled metric $G$ defined
through:
\be
\cG=G/\beta^2~~\mathrm{i.e.}~~G=\beta^2\cG
\ee
satisfies $K_G=-1$ and hence it is a hyperbolic metric
on $\Sigma\in \{\C,\dot{\rD}\}$. We have:
\be
\dd s_G^2=\beta^2\dd s_\cG^2=\dd \br^2+f_\beta(\br)\,\dd\theta^2~~,
\ee
where:
\ben
\label{brr}
\br\eqdef \beta r
\een
is the hyperbolic normal radial coordinate (i.e. the normal radial coordinate 
on $(\Sigma,G)$) and:
\be
f_\beta(\br)\eqdef \beta^2 f\left(\frac{\br}{\beta}\right)~~\mathrm{i.e.}~~f(r)
=\frac{1}{\beta^2}f_\beta(\beta r)~~.
\ee
Since we require $\cG$ (hence also $G$) to be complete, well-known
results from the theory of hyperbolic surfaces (see Appendix
\ref{app:elem}) imply that $(\Sigma,G)$ is isometric with either of
the following :
\begin{itemize}
\itemsep 0.0em
\item The hyperbolic disk $\mD$
\item The hyperbolic punctured disk $\mD^\ast$
\item A hyperbolic annulus $\mA(R)$ of modulus $\mu=2\log R$, where $R>0$.
\end{itemize}
Hence we can always reparameterize the radial coordinate $r$ so as to
bring $\dd s^2_G$ to one of the following four forms:
\begin{enumerate}
\itemsep 0.0em
\item The {\bf hyperbolic disk}:
\ben
\label{metricP}
\dd s^2_G=\dd \br^2+\sinh^2(\br) \,\dd \theta^2
\een
In this case, we have: $f_\beta(\br)=\sinh^2(\br)$ and:
\ben
\label{fP}
f(r)=\frac{1}{\beta^2} \sinh^2(\beta r)~~\mathrm{i.e.}~A_1=0~\&~A_2=1/\beta~~.
\een
Since $|A_1|<|A_2|$, the general solution has the form \eqref{LambdaElem}:
\ben
\label{LambdaP}
\Lambda(r,\theta)=\hat B_1\cosh(\beta r)+\frac{\zeta}{\beta}\sinh(\beta r)
\cos (\theta-\theta_0)~~,
\een
where we noticed that condition \eqref{ABcond} requires $\hat B_2=0$.
Here $\hat B_1$ is an arbitrary constant while $\zeta\geq 0$. One can
also write $\Lambda$ in Euclidean polar coordinates $(\rho,\theta)$ on
$\rD$, which are related to the normal polar coordinates
$(r,\theta)$ of $\cG$ through (cf. eqs. \eqref{brD} and \eqref{brr}):
\ben
\label{rrhoD}
\rho=\tanh(\beta r/2)\in [0,1)\Longleftrightarrow r=\frac{2}{\beta}\arctanh(\rho)=
\frac{1}{\beta}\log \frac{1+\rho}{1-\rho}\in [0,+\infty)~~.
\een
Substituting this in \eqref{LambdaP} gives:
\ben
\label{LambdaDprime}
\Lambda(\rho,\theta)=\frac{B_0(1+\rho^2)+(2\zeta/\beta) \rho
\cos (\theta-\theta_0)}{1-\rho^2}~~.
\een
Defining:
\ben
\label{B12def}
B_1=\frac{\zeta}{\beta} \cos\theta_0~~,~~B_2=\frac{\zeta}{\beta} \sin\theta_0~~,
\een
this can also be written as follows in Euclidean Cartesian coordinates
$x=\rho\cos\theta$ and $y=\rho\sin\theta$ on $\rD$:
\ben
\label{LambdaDCart}
\Lambda(x,y)=\frac{B_0(1+\rho^2)+2B_1 x +2 B_2 y}{1-\rho^2}~~, 
\een
where $\rho=\sqrt{x^2+y^2}$.

\item The {\bf hyperbolic punctured disk}:
\ben
\label{metricpD}
\dd s^2_G=\dd \br^2+\frac{1}{(2\pi)^2} e^{-2\br} \dd \theta^2
\een
In this case, we have $f_\beta(\br)=\frac{1}{(2\pi)^2}e^{-2\br}$ and:
\ben
\label{fpD}
f(r)=\frac{1}{(2\pi\beta)^2}e^{-2\beta r}~~\mathrm{i.e.}~A_1=-A_2=\frac{1}{2\pi\beta}~~.
\een
Since $A_2=-A_1$, the general solution \eqref{LambdaElem} has the form 
\eqref{LambdaSpec2} with $\epsilon=-1$:
\ben
\label{LambdapD}
\Lambda(r,\theta)={\hat B} e^{-\beta r}~~,
\een
where $\hat B$ is an arbitrary constant. One can also write $\Lambda$
in Euclidean polar coordinates $(\rho,\theta)$, which are related to
the normal polar coordinates $(r,\theta)$ of $\cG$ through
(cf. eqs. \eqref{brDast} and \eqref{brr}):
\ben
\label{rrhoDast}
\rho=e^{-2\pi e^{\beta r}}\in (0,1)\Longleftrightarrow r=\frac{1}{\beta} 
\log\left(\frac{|\log \rho|}{2\pi}\right)\in (-\infty,\infty)~~.
\een
Substituting this in \eqref{LambdapD} gives:
\ben
\label{LambdapDCart}
\Lambda(\rho,\theta)=\frac{2\pi {\hat B}}{|\log \rho|}~~.
\een

\item A {\bf hyperbolic annulus}:
\ben
\label{metricA}  
\dd s^2_G=\dd \br^2+\frac{\ell^2}{(2\pi)^2} \cosh^2(\br) \,\dd \theta^2~~,
\een
where $\ell>0$ is given by \eqref{ell}. In this case, we have
$f_\beta(\br)=\frac{\ell^2}{(2\pi)^2}\cosh^2(\br)$ and:
\ben
\label{fA}
f(r)=\frac{\ell^2}{(2\pi\beta)^2}\cosh^2(\beta r)~~\mathrm{i.e.}~A_1=\frac{\ell}{2\pi\beta}
~\&~A_2=0~~.
\een
Since $|A_1|>|A_2|$, the general solution has the form \eqref{LambdaSpec1}:
\ben
\label{LambdaA}
\Lambda(r,\theta)={\hat B_2}\sinh(\beta r)~~,
\een
where we noticed that relation \eqref{ABcond} gives $\hat B_1=0$.
Here $\hat B_2$ is an arbitrary constant. Relation \eqref{Bprime}
gives $\hat B_2= B_2$. One can also write $\Lambda$ in Euclidean polar
coordinates $(\rho,\theta)$, which are related to the normal polar
coordinates $(r,\theta)$ of $\cG$ through (cf. eqs. \eqref{brA} and
\eqref{brr}):
\ben
\label{rrhoA}
\rho= e^{-\frac{\mu}{\pi}\arccos\left(\frac{1}{\cosh(\beta r)}\right)}
 \Longleftrightarrow r=\frac{1}{\beta}\arccosh\left[\frac{1}{\cos\left(\frac{\pi}{\mu}|
\log\rho|\right)}\right]\in
(-\infty,+\infty)
\een
Substituting this in \eqref{LambdaA} gives:
\ben
\label{LabdaACart}
\Lambda(r,\theta)={\hat B_2} \tan\left(\frac{\pi}{\mu}\log\rho\right)~~.
\een
\end{enumerate}

\section{Elementary hyperbolic surfaces} 
\label{app:elem}
\setcounter{equation}{0}

\noindent A complete and connected hyperbolic surface $(\Sigma,G)$ is called
{\em elementary} if it coincides with either of the hyperbolic disk $\mD$, the
hyperbolic punctured disk $\mD^\ast$ or one of the hyperbolic annuli
$\mA(R)$. Here $R>1$ and $\mA(R)$ denotes the hyperbolic annulus of
modulus $\mu=2\log R$. These three surfaces are defined as the
following open subsets of the complex plane with complex coordinate
$u$:
\begin{itemize}
\item $\rD\eqdef \{u\in \C \,\, |\,\, 0\leq |u|<
  1\}$
\item $\dot{\rD}\eqdef \rD\setminus \{0\}$
\item $A(R)\eqdef\{u\in \C\,\, |\,\, \frac{1}{R}<|u|<
  R\}$, where $R>1$~~,
\end{itemize}
endowed with the following hyperbolic metrics:
\beqan
&&\dd s_\mD^2=\frac{4}{(1-\rho^2)^2}\left(\dd \rho^2+\rho^2\dd \theta^2\right)\nn\\
&&\dd s_{\mD^\ast}^2=\frac{1}{(\rho\log\rho)^2}(\dd \rho^2+\rho^2 \dd \theta^2)\\
&&\dd s_{\mA}^2=\left(\frac{\pi}{2\log R}\right)^2\frac{\dd\rho^2+\rho^2
\dd \theta^2}{\left[\rho \cos\left(\frac{\pi\log\rho}{2\log R}\right)\right]^2}~~,\nn
\eeqan
where $\rho\eqdef |u|$ and $\theta=\arg(u)$ are polar coordinates in
the complex plane.

Elementary hyperbolic surfaces admit special semi-geodesic coordinates
$(\br,\theta)$ in which the hyperbolic metric takes the form:
\ben
\label{sg}
\dd s_G^2=\dd \br^2+\rf(\br)\dd\theta^2~~,
\een
where $\rf$ is a smooth real-valued function which is
strictly positive everywhere and satisfies the condition:
\ben
\label{rfcond}
\frac{\pd^2 \sqrt{\rf}}{\pd r^2}=\sqrt{\rf}~~.
\een
More precisely, one has (see \cite{elem}):
\begin{itemize}
\item hyperbolic disk $\mD$: 
\ben
\label{pmetric}
\dd s_\mD^2=\dd \br^2+\sinh^2(\br)\dd\theta^2~~,
\een
where:
\ben
\label{brD}
\br=2\arctanh(\rho)=\log \frac{1+\rho}{1-\rho}\in (0,+\infty)~~.
\een
\item hyperbolic punctured disk $\mD^\ast$: 
\ben
\label{cmetric2}
\dd s_{\mD^\ast}^2=\dd \br^2+\frac{e^{-2\br}}{(2\pi)^2}\dd\theta^2~~,
\een
where:
\ben
\label{brDast}
\br=\log\left(\frac{|\log\rho|}{2\pi}\right)\in (-\infty,+\infty)~~.
\een
\item hyperbolic annulus $\mA(R)$:
\ben
\label{fmetric1}
\dd s_{\mA}^2=\dd \br^2+\frac{\ell^2}{(2\pi)^2}\cosh^2(\br)\dd\theta^2~~,
\een
where:
\ben
\label{brA}
\br=\sign(\br)\arccosh\left[\frac{1}{\cos\left(\frac{\pi}{\mu}|\log\rho|\right)}\right]\in
(-\infty,+\infty)
\een
and the positive quantity $\ell$ is given by:
\ben
\label{ell}
\ell=\frac{\pi^2}{\log R}~~.
\een
\end{itemize}


\begin{thebibliography}{100}

\bibitem{OOSV} {G.~Obied, H.~Ooguri, L.~Spodyneiko and C.~Vafa,
{\em De Sitter Space and the Swampland}, \texttt{arXiv:1806.08362[hep-th]}.}

\bibitem{GK}{S.~Garg, C.~Krishnan, {\em Bounds on Slow Roll and the de Sitter Swampland}, 
\texttt{arXiv:1807.05193[hep-th]}.}

\bibitem{OPSV}{H.~Ooguri, E.~Palti, G.~Shiu, C.~Vafa, {\em Distance and de Sitter Conjectures 
on the Swampland}, Phys. Lett. B {\bf 788} (2019) 180-184, 
\texttt{arXiv:1810.05506[hep-th]}.}

\bibitem{AP}{A.~Achucarro and G.~Palma, {\em The string swampland constraints 
require multi-field inflation}, JCAP {\bf 1902} (2019) 041, \texttt{arXiv:1807.04390[hep-th]}.}

\bibitem{PT1}{C.~M.~Peterson, M.~Tegmark, {\em Testing Two-Field
Inflation}, Phys. Rev. D {\bf 83} (2011) 023522, \texttt{arXiv:1005.4056[astro-ph.CO]}.}
    
\bibitem{PT2}{C.~M.~Peterson, M.~Tegmark, {\em Non-Gaussianity in
Two-Field Inflation}, Phys. Rev. D {\bf 84} (2011) 023520, \texttt{arXiv:1011.6675[astro-ph.CO]}.}

\bibitem{Welling}{A.~Achucarro, Y.~Welling, {\em Multiple field
    inflation and signatures of heavy physics in the CMB}, \texttt{arXiv:1502.04369[gr-qc]}.}

\bibitem{m1}{C.~Gordon, D.~Wands, B.~A.~Bassett, R.~Maartens, {\em
Adiabatic and entropy perturbations from inflation}, Phys. Rev. D {\bf
    63} (2001) 023506, \texttt{astro-ph/0009131}.}
  
\bibitem{m2}{S.~G.~Nibbelink, B.~van~Tent, {\em Scalar perturbations
during multiple field slow-roll inflation}, Class. Quant. Grav. {\bf
19} (2002) 613--640, \texttt{hep-ph/0107272}.}

\bibitem{m3}{S.~Cremonini, Z.~Lalak, K.~Turzynski, {\em Strongly
Coupled Perturbations in Two-Field Inflationary Models}, JCAP, {\bf
1103} (2011) 016, \texttt{arXiv:1010.3021[hep-th]}.}

\bibitem{m4}{Z.~Lalak, D.~Langlois, S.~Pokorski, K.~Turzynski, {\em
Curvature and isocurvature perturbations in two-field inflation}, JCAP
  {\bf 0707} (2007) 014, \texttt{arXiv:0704.0212[hep-th]}.}

\bibitem{Dias1}{M.~Dias, J.~Frazer, D.~Seery, {\em Computing
observables in curved multifield models of inflation -- A guide (with
code) to the transport method}, JCAP {\bf 12} (2015) 030,
  \texttt{arXiv:1502.03125[astro-ph.CO]}.}

\bibitem{Dias2}{M.~Dias, J.~Frazer, D.~J.~Mulryne, D.~Seery, {\em
    Numerical evaluation of the bispectrum in multiple field
    inflation}, JCAP {\bf 12} (2016) 033, 
    \texttt{arXiv:1609.00379[astro-ph.CO]}.}

\bibitem{Mulryne}{D.~J.~Mulryne, {\em PyTransport: A Python package
    for the calculation of inflationary correlation functions},
  \texttt{arXiv:1609.00381[astro-ph.CO]}.}

\bibitem{c1}{K.~Kainulainen, J.~Leskinen, S.~Nurmi, T.~Takahashi, {\em
    CMB spectral distortions in generic two-field models}, JCAP {\bf
    11} (2017) 002, \texttt{arXiv:1707.01300[astro-ph.CO]}.}

\bibitem{flows}{E. M. Babalic, C. I. Lazaroiu, {\em Cosmological flows
    on hyperbolic surfaces}, \texttt{arXiv:1810.00441}.}

\bibitem{Roest1}{P.~Christodoulidis, D.~Roest, E.~I.~Sfakianakis, {\em
    Scaling attractors in multi-field inflation}, \texttt{arXiv:1903.06116}.}

\bibitem{Roest2}{P.~Christodoulidis, D.~Roest, E.~I.~Sfakianakis, {\em
    Attractors, Bifurcations and Curvature in Multi-field Inflation},
  \texttt{arXiv:1903.03513}.}

\bibitem{Palis}{J.~Palis~Jr., W.~De~Melo, {\em Geometric theory of
    dynamical systems: An introduction}, Springer, 2012.}

\bibitem{ABL}{L. Anguelova, E. M. Babalic, C. I. Lazaroiu, {\em
    Two-field Cosmological $\alpha$-attractors with Noether Symmetry},
    JHEP {\bf 1904} (2019) 148, \texttt{arXiv:1809.10563[hep-th]}.}
   
\bibitem{genalpha}{C. I. Lazaroiu and C. S. Shahbazi, {\em Generalized
two-field $\alpha$-attractor models from geometrically finite
hyperbolic surfaces}, Nuclear Physics. B {\bf 936} (2018) 542--596, 
\texttt{https://doi.org/10.1016/j.nuclphysb.2018.09.018}.}

\bibitem{elem}{E. M. Babalic and C. I. Lazaroiu, {\em Generalized
$\alpha$-attractor models from elementary hyperbolic surfaces},
Adv. Math. Phys. {\bf 2018} (2018) 7323090, \texttt{arXiv:1703.01650[hep-th]}.}

\bibitem{modular}{E. M. Babalic and C. I. Lazaroiu, {\em Generalized
$\alpha$-attractors from the hyperbolic triply-punctured sphere},
  Nucl. Phys. B {\bf 937} (2018) 434--477, \texttt{https://doi.org/10.1016/j.nuclphysb.2018.10.012}.}

\bibitem{cosmunif}{E. M. Babalic and C. I. Lazaroiu, {\em Two-field
cosmological models and the uniformization theorem}, Springer Proceedings in Mathematics \& Statistics: QTS-X/LT-XII, Varna, {\bf 2} (2018), 233--241, 
  \texttt{arXiv:1801.03356[hep-th]}.}

\bibitem{KL1}{R. Kallosh, A. Linde and D. Roest, {\em Superconformal
    Inflationary $\alpha$-Attractors}, JHEP 11 (2013) 098,
  \texttt{arXiv:1311.0472[hep-th]}.}

\bibitem{KL2}{R. Kallosh, A. Linde and D. Roest, {\em Large field
    inflation and double $\alpha$-attractors}, JHEP {\bf 08} (2014) 052,
  \texttt{arXiv:1405.3646[hep-th]}.}

\bibitem{KL3}{R. Kallosh and A. Linde, {\em Universality Class in
    Conformal Inflation}, JCAP {\bf 1307} (2013) 002, 
    \texttt{arXiv:1306.5220[hep-th]}.}

\bibitem{KL4}{R. Kallosh and A. Linde, {\em Multi-field Conformal
    Cosmological Attractors}, JCAP {\bf 1312} (2013) 006, 
    \texttt{arXiv:1309.2015[hep-th]}.}

\bibitem{KL5}{R. Kallosh and A. Linde, {\em Escher in the Sky},
  Comptes Rendus Physique {\bf 16} (2015) 914-927, 
  \texttt{arXiv:1503.06785[hep-th]}.}

\bibitem{KL6}{J. Carrasco, R. Kallosh, A. Linde and D. Roest, {\em The
    Hyperbolic Geometry of Cosmological Attractors}, Phys. Rev. D {\bf 92}
  (2015) 041301, \texttt{arXiv:1504.05557[hep-th]}.}

\bibitem{KL7}{A. Achucarro, R. Kallosh, A. Linde, D. Wang and
  Y. Welling, {\em Universality of multi-field $\alpha$-attractors},
  JCAP {\bf 04} (2018) 028, \texttt{arXiv:1711.09478[hep-th]}.}
  
\bibitem{AKLV}{Y. Akrami, R. Kallosh, A. Linde and V. Vardanyan, {\em Dark 
  energy, $\alpha$-attractors, and large-scale structure surveys}, 
  JCAP {\bf 06} (2018) 041, \texttt{arXiv:1712.09693[hep-th]}.}

\bibitem{S1}{R.~Schimmrigk, {\em Automorphic inflation}, Phys. Lett. B
  {\bf 748} (2015) 376, \texttt{arXiv:1412.8537[hep-th]}.}

\bibitem{S2}{R. Schimmrigk, {\em A geneneral framework of automorphic
    inflation}, JHEP {\bf 05} (2016) 140,
  \texttt{arXiv:1512.09082[hep-th]}.}

\bibitem{S3}{R. Schimmrigk, {\em Modular inflation observables and
    j-inflation phenomenology}, JHEP {\bf 09} (2017) 043,
  \texttt{arXiv:1612.09559[hep-th]}.}

\bibitem{S4}{R. Schrimmrigk, {\em Multifield Reheating after Modular
    j-Inflation}, Phys. Lett. B {\bf 782} (2018) 193--197,
  \texttt{arXiv:1712.09961[hep-ph]}.}

\bibitem{S5}{M.~Lynker, R. Schrimmrigk, {\em Modular Inflation at
    Higher Level N}, arXiv:1902.04625 [astro-ph.CO].}
    
\bibitem{CR}{S. Capozziello and R. de Ritis, {\em Relation between the
    potential and nonminimal coupling in inflationary cosmology},
  Phys. Lett. A {\bf 177} (1993) 1.}

\bibitem{Olver}{P.~J.~Olver, {\em Applications of Lie groups to
    differential equations}, Graduate Texts in Mathematics {\bf 107},
  Second Edition, Springer, 1998.}
  
\bibitem{LWWYA}{A. Linde, D. Wang, Y. Welling, Y. Yamadaa and
  A. Achucarro, {\em Hypernatural inflation}, JCAP {\bf 07} (2018)
  035, \texttt{arXiv:1803.09911[hep-th]}.}

\bibitem{DFRSW}{M. Dias, J. Frazer, A. Retolaza, M. Scalisi and
  A. Westphal, {\em Pole N-flation},
  \texttt{arXiv:1805.02659[hep-th]}.}

\bibitem{ARB}{A. R. Brown, {\em Hyperbolic Inflation},
  Phys. Rev. Lett. {\bf 121}, no. 25, 251601 (2018),
  \texttt{arXiv:1705.03023[hep-th]}.}

\bibitem{MM}{S. Mizuno and S. Mukohyama, {\em Primordial perturbations
    from inflation with a hyper- bolic field-space}, Phys. Rev. D {\bf
    96}, no. 10, 103533 (2017), \texttt{arXiv:1707.05125[hep-th]}.}

\bibitem{CRS}{P. Christodoulidis, D. Roest and E. I. Sfakianakis, {\em
    Angular inflation in multi-field $\alpha$-attractors},
  \texttt{arXiv:1803.09841[hep-th]}.}

\bibitem{GSRPR}{S. Garcia-Saenz, S. Renaux-Petel and J. Ronayne, {\em
    Primordial fluctuations and non-Gaussianities in sidetracked
    inflation}, JCAP {\bf 1807}, no. 07 (2018) 057,
  \texttt{arXiv:1804.11279[astro-ph.CO]}.}

\bibitem{TB}{T. Bjorkmo, {\em The rapid-turn inflationary attractor},
  \texttt{arXiv:1902.10529[hep-th]}.}
  
\bibitem{CDeF}{S. Capozziello and A. De Felice, {\em $f(R)$ 
   cosmology by Noether's symmetry}, JCAP {\bf 0808} (2008) 
   016, \texttt{arXiv:0804.2163 [gr-qc]}.}

\bibitem{MSY}{H. Motohashi, A. Starobinsky, J. Yokoyama, {\em
    Inflation with a constant rate of roll}, JCAP {\bf 1509} (2015)
  018, \texttt{arXiv:1411.5021[astro-ph.CO]}.}

\bibitem{ASW}{L. Anguelova, P. Suranyi, L.C.R. Wijewardhana, {\em
    Systematics of Constant Roll Inflation}, JCAP {\bf 1802} (2018)
  004, \texttt{arXiv:1710.06989[hep-th]}.}
    
\end{thebibliography}
\end{document}